\definecolor{blueblack}{rgb}{0,0,.7}
\newcounter{sideremark}
\definecolor{Darkblue}{rgb}{0,0,0.4}
\definecolor{Brown}{cmyk}{0,0.61,1.,0.60}
\definecolor{Purple}{cmyk}{0.45,0.86,0,0}
\definecolor{brickred}{rgb}{0.8, 0.25, 0.33}
\theoremstyle{plain}
\newtheorem{theorem}{Theorem}
\newtheorem{lemma}[theorem]{Lemma}
\theoremstyle{definition}
\theoremstyle{remark}
\newtheorem*{remark*}{Remark}
\newcommand{\NN}{{\mathbb N}}
\newcommand{\UFP}{\textsc{UFP}}
\newcommand{\bagUFP}{\textsc{bagUFP}}
\newcommand{\twUFP}{\textsc{twUFP}}
\newcommand{\spanUFP}{\textsc{spanUFP}}
\newcommand{\roundUFP}{\textsc{roundUFP}}
\newcommand{\eps}{\varepsilon}
\newcommand{\calA}{\mathcal{A}}
\newcommand{\calI}{\mathcal{I}}
\newcommand{\OPT}{\mathsf{OPT}}
\newcommand{\opt}{\mathsf{opt}}
\newcommand{\APX}{\mathsf{APX}}
\global\long\def\I{\mathcal{I}}%
\global\long\def\N{\mathbb{N}}%
\global\long\def\mi{\mathsf{mid}}%
\DeclareMathOperator{\poly}{poly}
\DeclareMathOperator{\tw}{tw}
\DeclareMathOperator{\pb}{pb}
\global\long\def\N{\mathbb{N}}%
\global\long\def\mi{\mathsf{mid}}%
  \newcommand{\aw}[1]{\textcolor{ForestGreen}{#1}}
  \newcommand{\ant}[1]{\textcolor{blue}{#1}}
  \newcommand{\fab}[1]{\textcolor{red}{#1}}
  \newcommand{\edi}[1]{\textcolor{orange}{#1}}
  \newcommand{\alex}[1]{\textcolor{violet}{#1}}
\newcommand{\alexnote}[1]{\todo[color=violet!100!black!50]{AL: #1}}
\newcommand{\anote}[1]{\todo[color=blue!100!black!50]{AT: #1}}
\newcommand{\awr}[1]{\todo[color=ForestGreen!100!black!50]{AW: #1}}
\newcommand{\fabr}[1]{\todo[color=red!100!black!50]{F: #1}}
\newcommand{\enote}[1]{\todo[color=orange!100!black!50]{E: #1}}
  \newcommand{\ant}[1]{#1}
  \newcommand{\fab}[1]{#1}
  \newcommand{\edi}[1]{#1}
  \newcommand{\aw}[1]{#1}
  \newcommand{\alex}[1]{#1}
  \newcommand{\anote}[1]{}
  \newcommand{\fabr}[1]{}
  \newcommand{\enote}[1]{}
  \newcommand{\mnote}[1]{}
  \newcommand{\awr}[1]{}
  \newcommand{\alexnote}[1]{}
\author[1]{Alexander Armbruster}
\author[2]{Fabrizio Grandoni}
\author[2]{Edin Husi\'{c}}
\author[2]{Antoine Tinguely}
\author[1]{Andreas Wiese}
\affil[1]{Technical University of Munich, Munich, Germany

\texttt{alexander.armbruster@tum.de}, \texttt{andreas.wiese@tum.de}}
\affil[2]{USI-SUPSI, IDSIA, Lugano, Switzerland\thanks{Fabrizio Grandoni, Edin Husi\'{c} and Antoine Tinguely were supported by the Swiss National Science Foundation (SNSF) Grant 200021 200731/1.}

\texttt{fabrizio.grandoni@gmail.com}, \texttt{edinehusic@gmail.com}, \texttt{antoine.tinguely@idsia.ch}}
\title{On the Approximability of\\ Unsplittable Flow on a Path with Time Windows}
\date{}
\begin{document}

\maketitle   

\begin{abstract}
In the Time-Windows Unsplittable Flow on a Path problem~(\twUFP) we are given a resource whose available amount changes over
a given time interval (modeled as the edge-capacities of a given
path $G$) and a collection of tasks. Each task is
characterized by a demand (of the considered resource), a profit,
an integral processing time, and a time window. Our goal is to compute
a maximum profit subset of tasks and schedule them non-preemptively
within their respective time windows, such that the total demand of
the tasks using each edge $e$ is at most the capacity of $e$.  

We prove that \twUFP\ is $\mathsf{APX}$-hard
which contrasts the
setting of the problem without time windows, i.e., Unsplittable Flow
on a Path (\UFP), for which a PTAS was recently discovered {[}Grandoni,
M{ö}mke, Wiese, STOC~2022{]}. Then, we present a quasi-polynomial-time
$2+\eps$ approximation for \twUFP\ under resource augmentation.
Our approximation ratio improves to $1+\eps$ if all tasks' time windows
are identical. Our $\mathsf{APX}$-hardness holds also for this special
case and, hence, rules out such a PTAS \ant{(and even a QPTAS, unless $\mathsf{NP}\subseteq\mathrm{DTIME}(n^{\mathrm{poly}(\log n)})$)} \emph{without} resource augmentation.

\end{abstract}


\section{Introduction}

In the well-studied Unsplittable Flow on a Path problem (\UFP), we
are given a path graph $G=(V,E)$ with $m$ edges, a capacity $u(e)\in\N$
for each edge $e\in E$, and a collection $T$ of $n$ tasks. Each
task $i\in T$ is characterized by a \emph{weight} (or \emph{profit})
$w(i)\in\mathbb{N}$, a \emph{demand} $d(i)\in\mathbb{N}$, and a
subpath $P(i)\subseteq E$ of $G$.%
\footnote{Given a subpath $P$ of $G$, we will sometimes use $P$ also to denote
the corresponding set of edges $E(P)$. The meaning will be clear
from the context.}
A feasible solution consists of a subset of tasks $S\subseteq T$
such that $\sum_{i\in S:e\in P(i)}d(i)\leq u(e)$ for each $e\in E$.
In other words, the total demand of the tasks in $S$ whose subpath
contains $e$ does not exceed the capacity of $e$. Our goal is to
compute a feasible solution $\OPT$ of maximum profit $w(\OPT)\coloneqq\sum_{i\in\OPT}w(i)$.
One can naturally interpret $G$ as a time interval subdivided into
time slots (the edges). At each time slot, a given amount of a considered
resource (e.g., energy) is available. Each task $i$ corresponds to
a job that we can execute (or not) in a fixed time interval: if $i$
is executed, it consumes a fixed amount of the considered resource
during its entire execution and generates a profit of $w(i)$. \UFP\ is
strongly NP-hard \cite{BSW14,CWMX10}, and a lot of attention was
devoted to the design of approximation algorithms for it \cite{AGLW14,BCES2006,BFKS09,BGKMW15,BSW14,GMW21,GMW22,GMWZ17,GMWZ18},
culminating in a recent PTAS for the problem~\cite{GMW22STOC}. 

In \UFP, we have no flexibility for the time interval during which
each selected task $i$ is executed. In practice, it makes sense to
consider scenarios where $i$ has a given length (or processing time)
and a \emph{time window }during which it needs to be executed. In
this setting, for each selected task $i$ we need to specify a starting time for $i$ such that $i$ is processed completely within its time
window. This leads to the Time-Windows UFP problem (\twUFP). 
Here, we are given the same input as in \UFP, with the difference that for each task $i$, instead
of a subpath $P(i)$ we are given a \emph{length} (or \emph{processing
time}) $p(i)\in\{1,\ldots,m\}$, and a subpath $\tw(i)\subseteq E$
with at least $p(i)$ edges (the \emph{time window} of $i$).
A \emph{scheduling} of $i$ is a subpath $P(i)\subseteq\tw(i)$ containing
precisely $p(i)$ edges. 
A feasible solution for the given instance is a pair $(S,P(\cdot))$
such that for each $i\in S$ the path $P(i)$ is a feasible scheduling
of $i$ and $\sum_{i\in S:e\in P(i)}d(i)\leq u(e)$ for each
$e\in E$. Our goal is to maximize $w(S)$, like in \UFP. Observe
that \UFP\ is the special case of \twUFP\ where for each task
$i\in T$ the time window $\tw(i)$ contains exactly $p(i)$
edges; hence, \twUFP\ is strongly $\mathsf{NP}$-hard. The best
known approximation algorithm for it is a $O(\log n/\log\log n)$-approximation
\cite{GIU15}, improving on a prior result in \cite{CCGRS14} (both
results hold for a more general problem, \bagUFP, which is defined
later). However, no result in the literature excludes the existence
of a much better approximation ratio for \twUFP, including possibly
a PTAS. In this paper, we make progress on a better understanding
of the approximability of \twUFP.

\subsection{Our Results and Techniques}

Our first main result is that \twUFP{} does \emph{not }admit a PTAS,
which in particular implies that it is strictly harder than \UFP{}
(unless $\mathsf{P=NP}$). We show that this already holds when the
time window of each task spans all the edges of $G$, i.e.,
if $\tw(i)=E$ for each $i\in T$; we denote this special case of
the problem by Spanning UFP (\spanUFP).
Our result even holds in the cardinality setting, i.e., when all tasks have unit weight, and for polynomially bounded
input data. 
\begin{theorem}
\label{thr:apxhard}
\spanUFP{} (thus also \twUFP{}) is $\mathsf{APX}$-hard and does not admit a (polynomial-time) $\frac{2755}{2754}$-approximation algorithm (unless $\mathsf{P=NP}$), even in the cardinality case and if demands and the number of edges is polynomially bounded in $n$.
\end{theorem}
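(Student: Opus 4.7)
The plan is to prove APX-hardness via a gap-preserving reduction from a source problem whose inapproximability ratio is known explicitly. A natural candidate is a bounded-occurrence version of MAX-3SAT (for instance MAX-3SAT-E3 or MAX-3SAT-5 in the Berman--Karpinski or H{\aa}stad families), though Vertex Cover on cubic graphs would work similarly; both source and target being cardinality problems with polynomially bounded data, the reduction need not blow up parameters.

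For the construction, given a 3CNF formula $\varphi$ with $n$ variables and $m$ clauses I would build a path of length polynomial in $n+m$ with a capacity profile made of plateaus and dips at linear scales in $n,m$. For each variable $x_i$ I would introduce two \emph{literal tasks} $t_i^+,t_i^-$ of equal length and demand, the demand being chosen so that both cannot coexist in any common high-capacity region; this encodes the choice of a truth value for $x_i$. For each clause $C_j$ I would introduce a \emph{clause task} $c_j$ of smaller demand, fittable only into slots freed by one of its three satisfying literal tasks. Since every time window equals $E$, no localization can be enforced via time windows and all constraints must arise from demand--capacity conflicts; the capacity profile is therefore engineered so that the \emph{canonical} placements (literal tasks in their own variable-slots, clause tasks in their clause-slots) are the only ones compatible with a near-optimal schedule.

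I would then prove the two sides of the gap. \emph{Completeness:} from any assignment satisfying $s$ of the $m$ clauses, construct a schedule of $n+s+N$ tasks, where $N$ is a pool of common ``filler'' tasks used on both sides to set a baseline. \emph{Soundness:} from any feasible schedule extract an assignment satisfying at least the number of selected clause tasks minus a constant correction for non-canonical placements. Composing with the explicit source hardness ratio $1+\eta$ yields an inapproximability ratio of the form $1+\eta/C$ for \spanUFP{}, where $C$ absorbs the ratio between baseline and decision-relevant tasks; the announced $\tfrac{2755}{2754}$ would emerge from a careful bookkeeping of these constants.

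The main obstacle, and the crux of the argument, is the gadget design itself: with spanning time windows any local constraint must be enforced exclusively through the demand--capacity interplay. I expect one has to pick demands at pairwise incommensurable scales so that every non-canonical placement of a task either violates some local capacity or strictly blocks an intended other task, yielding no net gain in cardinality. A secondary difficulty is the requirement that demands and the number of edges remain polynomially bounded in $n$, which forbids the convenient trick of exponential scale separation and forces a more careful arithmetic construction; this is likely to account for a substantial share of $C$ and to shape the precise numerical ratio in the theorem.
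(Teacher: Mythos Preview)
Your proposal is a plan, not a proof: you explicitly flag the gadget design as ``the main obstacle, and the crux of the argument,'' and leave it unresolved. Everything hinges on exhibiting demands and a capacity profile so that with spanning time windows the \emph{only} way to reach near-optimal cardinality is via canonical placements, and you have not shown how to do this. The vague intention to pick ``pairwise incommensurable scales'' does not by itself give soundness: with all time windows equal to $E$, a task can sit anywhere, so you must rule out every non-canonical placement combinatorially, and you have not argued why this is possible while keeping demands polynomially bounded. Nor is the constant $\tfrac{2755}{2754}$ derived; saying it ``would emerge from careful bookkeeping'' is not a derivation.

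For comparison, the paper does \emph{not} reduce from MAX-3SAT or Vertex Cover. It reduces from bounded 3-Dimensional Matching, and the gadget is a Woeginger-style arithmetic encoding: to each node and each hyperedge one assigns an integer so that four of these integers sum to zero \emph{if and only if} they correspond to the three nodes and the hyperedge of an actual triple (Lemma~\ref{lem:uniqueSum}). This number-theoretic rigidity is precisely what replaces the locality you lose when all time windows span $E$. The capacity profile is a sequence of $q$ identical ``bins,'' and for each element $u$ two tasks $t_L(u),t_R(u)$ are defined with lengths and demands $A\pm 10u'$ (plus a $+1$ shift on the $t_L$ side). One then proves that at most $8$ tasks fit in a bin, and exactly $8$ fit iff they are the eight tasks coming from a genuine hyperedge (Lemma~\ref{lem:hyperedgeMapping}); every bin can always accommodate $7$. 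This gives the tight equivalence $\opt(\sigma(K))=\opt(K)+7q$, and combining with the $\tfrac{95}{94}$ bound for 3DM-$2$ and the elementary bound $\opt(K)\ge q/(3k-2)$ yields the $\tfrac{2755}{2754}$ via $(21k-13)\delta$ with $k=2$, $\delta=\tfrac{1}{95}$. Your SAT-based route lacks an analogue of the sum-to-zero lemma, and without such a device it is unclear how a soundness argument would go through.
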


\ant{The proof of Theorem~\ref{thr:apxhard} follows from a reduction to Maximum 3-Dimensional Matching (3-DM) and the bound in \cite{CC06}.
In our reduction, we model 3-DM instances as instances of \spanUFP{} in which the capacity profile models bins (see Figure~\ref{fig:hardnessCapacityProfile}),  reminiscent of the 2-Dimensional Vector Bin Packing (2-VBP) problem
, and adapt the construction in \cite{W97} (which refutes the existence of an asymptotic PTAS for 2-VBP).
}

\begin{figure}[t]
\centering \begin{tikzpicture}
		\def\w{1.8}
		\def\eps{0.01}
		\draw (0.1,0) -- (5*\w,0);
		\draw[thick, color=red] (0.1,0)--(0.1,2);
		\draw[thick, color=red] (0.1,2)--(\w/2+0.05,2);
		\draw[thick, color=red] (\w/2+0.05,2)--(\w/2+0.05,2-4*\eps);
		\draw[thick, color=red] (\w,2-4*\eps)--(\w/2+0.05,2-4*\eps);
		\draw[thick, color=red] (\w,2-4*\eps)--(\w,0);
		\foreach \n in {1,...,4}
		{\draw[thick, color=red] (\w*\n,0)--(\w*\n+0.1,0);
			\draw[thick, color=red] (\w*\n+0.1,0)--(\w*\n+0.1,2);
			\draw[thick, color=red] (\w*\n+0.1,2)--(\w*\n+\w/2+0.05,2);
			\draw[thick, color=red] (\w*\n+\w/2+0.05,2)--(\w*\n+\w/2+0.05,2-4*\eps);
			\draw[thick, color=red] (\w*\n+\w,2-4*\eps)--(\w*\n+\w*2/4+0.05,2-4*\eps);
			\draw[thick, color=red] (\w*\n+\w,2-4*\eps)--(\w*\n+\w,0);}
		\draw [->,>=latex] (-0.5,0) -- (-0.5,2.5);
		\draw (-0.6,0) -- (-0.4,0);
		\draw (-0.9,0) node  {$0$};
		\draw (-0.5,2.75) node  {capacity};
		\foreach \n in {1,...,90}
		{\draw node[circle,fill,inner sep=0.8pt] at (0.1*\n,0) {};
		}
    	\draw[] (5*\w +0.75, 1) node {$\cdots$};
	\end{tikzpicture} \caption{The capacity profile (red) of an instance \aw{used in our hardness
result for} \spanUFP{}.}
\label{fig:hardnessCapacityProfile}
\end{figure}
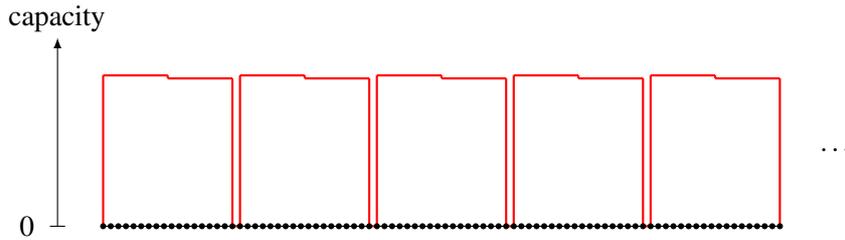

Our second main contribution is a constant factor approximation algorithm
for \twUFP\ with resource augmentation which runs in quasi-polynomial
time. More specifically, for any constant $\eps>0$, we compute a
$(2+\eps)$-approximate solution in time $2^{O_{\eps}(\poly(\log(nm)))}$
which violates the edge capacities at most by a factor $1+O(\eps)$.
As usual in the setting of resource augmentation, the approximation
ratio is computed with respect to\ an optimal solution which is \emph{not}
allowed to violate the edge capacities.

\begin{restatable}{theorem}{twapx} 
\label{thr:tw-apx} For any $\eps>0$, there is a 
$(2+\eps)$-approximation for \twUFP{} under $(1+O(\eps))$-resource
augmentation with running time $O\big( (mn)^{O(\log^2 n \log^5 m/\varepsilon^4)} \cdot \log U \big)$, where $U \coloneqq \max_{e\in E} u(e)$ is the maximum capacity of the instance.
\end{restatable}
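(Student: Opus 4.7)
I would prove Theorem~\ref{thr:tw-apx} by splitting the optimum into large- and small-demand tasks, approximating each class separately with a factor $(1+\eps)$ under $(1+O(\eps))$-resource augmentation, and returning the better of the two solutions (which yields the factor~$2$). A preprocessing step guesses $w(\OPT)$ up to a $(1+\eps)$-factor via binary search (contributing the $\log U$ term in the running time) and rounds all capacities and demands to powers of $(1+\eps)$, losing only $(1+O(\eps))$ in feasibility thanks to the resource augmentation.

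\textbf{Small/large dichotomy.} Call a task $i$ \emph{small} if $d(i)\le \eps\cdot \min_{e\in\tw(i)}u(e)$ and \emph{large} otherwise. Writing $\OPT = \OPT_{small}\cup \OPT_{large}$, at least one of the two has profit $\ge w(\OPT)/2$, so it suffices to compute a $(1+\eps)$-approximation for each class.

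\textbf{Small tasks.} I would encode the small-task instance as a \bagUFP{}-style LP: each small task $i$ is replaced by a bag of $|\tw(i)|-p(i)+1$ alternative tasks (one per feasible starting edge), at most one of which may be selected. Solving this LP and applying a standard (dependent) randomized rounding scheme, one gets an integral solution whose profit is a $(1+\eps)$-fraction of the LP value and whose capacity violation on each edge is $(1+O(\eps))$ by a Chernoff bound, using crucially that each small task contributes at most an $\eps$-fraction of capacity wherever it is scheduled.

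\textbf{Large tasks.} Since each large task consumes at least an $\eps$-fraction of the capacity of some edge in its window, any feasible solution contains at most $O(1/\eps)$ large tasks crossing any fixed edge. I would exploit this bounded cross-section via a balanced recursive bisection of $G$ forming a laminar family $\mathcal{T}$ of depth $O(\log m)$. For each node $v\in\mathcal{T}$ with subpath $\sigma_v$, guess the (at most $O(\log^2 n\log^4 m/\eps^3)$, say) large tasks of $\OPT_{large}$ whose scheduled path crosses the midpoint of $\sigma_v$, together with their exact start edges; this is a ``skeleton'' of the large-task optimum. The number of skeletons is $(mn)^{O(\log^2 n\log^5 m/\eps^4)}$. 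Once the skeleton is fixed, each of its tasks is fully scheduled, and the residual problem decomposes into independent sub-instances on the children of $v$, which are handled recursively by dynamic programming. At the leaves (constant-size subpaths) the problem is solved by brute force. The resource augmentation absorbs the small rounding errors in capacity accumulated across the $O(\log m)$ levels.

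\textbf{Main obstacle.} The delicate part is the large-task routine: since time windows do not respect the bisection, a task may be scheduled inside $\sigma_v$ without crossing its midpoint, yet still interact with the midpoint-crossing tasks through the shared capacity profile. I would need a structural lemma asserting that, after the $(1+O(\eps))$ augmentation, one can pay a negligible loss to re-route a near-optimal large-task solution so that its midpoint-crossing set at every level has the bounded cardinality used in the enumeration, and so that the residual sub-problems are truly independent across children. This is the technical heart of the argument; the DP over $\mathcal{T}$, the LP rounding for small tasks, and their combination by the better-of-two rule are then standard once the skeleton lemma is in place.
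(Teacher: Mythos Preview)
Your approach diverges substantially from the paper's, and there is a genuine gap.

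\textbf{The large-task cross-section bound is false.} You define a task $i$ to be large when $d(i)>\eps\cdot\min_{e\in\tw(i)}u(e)$ and then assert that at most $O(1/\eps)$ large tasks can cross any fixed edge in a feasible solution. This does not follow: a large task is large only relative to the \emph{minimum}-capacity edge in its time window, and it need not even be scheduled on that edge (since $P(i)\subseteq\tw(i)$ may avoid it). Concretely, take $u(e_1)=1$ and $u(e_j)=M$ for $j\ge 2$, and $M$ copies of a task with $\tw=E$, $p=1$, $d=1$; every copy is ``large'', yet all $M$ of them can be scheduled on $e_2$. Your skeleton enumeration therefore has no a priori size bound, and the recursive DP over the bisection tree collapses.

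\textbf{The structural lemma you defer is the whole problem.} Even with a correct large/small split, the independence of the two children after fixing the midpoint-crossing tasks is exactly what fails for \twUFP{} and what the paper's machinery is built to circumvent. A task whose time window straddles $\mi(I)$ may be scheduled entirely in $I^L$ or entirely in $I^R$, so it belongs to \emph{both} child instances but may be selected only once; this dependence recurs at every level. The paper handles this by (i) paying the factor $2$ once at the root by discarding either $\OPT^{(L)}$ or $\OPT^{(R)}$ (restricting to \emph{left-constrained} solutions), and (ii) at deeper levels, introducing artificial placeholder tasks via linear grouping (Lemma~\ref{lem:harmonic-grouping}) together with box constraints (Lemma~\ref{lem:boxPacking}) so that the left child \emph{reserves} real tasks that are later swapped in for the placeholders scheduled by the right child. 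Your ``structural lemma asserting\ldots the residual sub-problems are truly independent'' is precisely this construction, and it is not a routine re-routing argument---it is the technical heart of the result.

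\textbf{Minor point on small tasks.} With your smallness threshold, independent randomized rounding plus Chernoff gives an overflow of order $\sqrt{\eps\log m}$, not $O(\eps)$, after the union bound over $m$ edges; you would need to shrink the threshold to roughly $\eps^2/\log m$, which then feeds back into what counts as ``large''. This is fixable, but the two issues above are not with the outline as written.
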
 

For the special case of \spanUFP{}, \ant{the approximation ratio of our algorithm improves to $1+\eps$}, i.e., we achieve
a \ant{quasi-PTAS (QPTAS)} under resource augmentation for \spanUFP{}. In contrast \ant{(unless $\mathsf{NP}\subseteq\mathrm{DTIME}(n^{\mathrm{poly}(\log n)})$), a QPTAS for \spanUFP{} is impossible \emph{without} resource augmentation since \spanUFP{} is $\mathsf{APX}$-hard.
}

\begin{restatable}{theorem}{spanqptas}
\label{thr:qptas} For any $\eps>0$ there is a QPTAS for \spanUFP{}
under $(1+O(\eps))$-resource augmentation. 
\end{restatable}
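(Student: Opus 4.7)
The plan is to run the same algorithm as in Theorem~\ref{thr:tw-apx} on the \spanUFP{} instance and to sharpen its analysis by exploiting the fact that every task $i$ satisfies $\tw(i)=E$. Since that algorithm already runs in quasi-polynomial time and achieves approximation factor $2+\eps$ under $(1+O(\eps))$-resource augmentation for the general \twUFP{} setting, it suffices to identify the steps of its analysis where a factor of $2$ (as opposed to $1+\eps$) is lost and to revisit them in the spanning case.

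First, I would trace through the recursive decomposition used by the $(2+\eps)$-algorithm. Each recursive call operates on a sub-instance $G'\subseteq G$ together with a residual task set, and I expect the factor $2$ to arise at the step that treats tasks whose time window $\tw(i)$ straddles the boundary of $G'$: such tasks must either be committed at the current recursion level or deferred, and in the worst case only about half of them can be retained (this is the \emph{better-of-two} bottleneck in the general case). For \spanUFP{}, the identity $\tw(i)=E$ makes every task eligible to be placed entirely inside every sub-instance, so the boundary dichotomy becomes vacuous and all residual tasks can be handled uniformly; the factor-$2$ step is then replaced by a single rounding step that treats all residual tasks simultaneously and loses only $(1+\eps)$.

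Next, I would verify that every other loss in the analysis of Theorem~\ref{thr:tw-apx} is already of multiplicative type $(1+\eps)$: the initial rounding of demands, capacities, lengths and profits to powers of $1+\eps$; the enumeration over the $O(\log U)$ rounded capacity levels; and the LP-based rounding for task placement (within a $(1+\eps)$ factor of the fractional optimum under $(1+O(\eps))$-resource augmentation). Multiplying these losses across the $O_\eps(\polylog(nm))$ recursion levels and rescaling $\eps$ by a polylogarithmic factor yields an overall approximation guarantee of $1+\eps$, while the running time remains of the same QPTAS form $O\big((mn)^{O(\polylog(nm)/\poly(\eps))}\cdot\log U\big)$.

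The hard part will be the \emph{rescheduling} lemma that justifies the uniform treatment of residual tasks in the spanning case. Concretely, I would aim to establish a statement of the form: any task set that is fractionally schedulable on $G$ (i.e., satisfying the natural LP relaxation of the demand constraints) can be scheduled integrally at a capacity violation of at most $1+O(\eps)$, using the freedom $\tw(i)=E$ to shift tasks along $G$. Plugging this lemma into the framework of Theorem~\ref{thr:tw-apx} in place of the factor-$2$ step yields the desired QPTAS for \spanUFP{} under $(1+O(\eps))$-resource augmentation.
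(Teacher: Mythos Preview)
Your plan has a genuine gap: you have misidentified both the location and the mechanism of the factor~$2$ loss, and your proposed fix does not address it.

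In the algorithm of Theorem~\ref{thr:tw-apx}, the factor~$2$ is incurred \emph{once}, at the very start, when the algorithm commits to computing a \emph{left-constrained} solution (i.e., one in which no task $i$ is scheduled entirely inside the right child of the level-$\ell(i)$ interval containing $\tw(i)$). The comparison is then against $w(\OPT^{(L)}\cup\OPT^{(M)})\ge \tfrac12\,w(\OPT)$. All subsequent recursion levels lose only $(1+O(\eps))$ factors, as you note. Your diagnosis that the loss comes from boundary-straddling tasks at each recursive level is not what happens.

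Your claim that for \spanUFP{} ``the boundary dichotomy becomes vacuous'' is false. In a \spanUFP{} instance every task has $\tw(i)=E$, hence is of level~$0$; but a task with small $p(i)$ can perfectly well be scheduled entirely inside $I_r^R$, so left-constrainedness is a genuine restriction and can cost a factor~$2$ (think of an instance where all positive capacity lies in the right half of $E$). Likewise, it is not true that every task ``can be placed entirely inside every sub-instance'': a task with $p(i)$ close to $m$ cannot fit into any proper subinterval. Finally, the proposed LP-rounding ``rescheduling lemma'' is orthogonal to the algorithm, which uses box-guessing rather than LP relaxation, and would not remove the left-constrained loss even if established.

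The paper's actual argument is a one-line trick that you are missing: append $m$ dummy edges of capacity~$0$ to the right of $E$ and extend every time window to the full padded path. Optimal solutions are unchanged (no task can use a zero-capacity edge), but now $I_r^R$ consists entirely of zero-capacity edges, so \emph{every} feasible solution is automatically left-constrained. Running the algorithm of Theorem~\ref{thr:tw-apx} on the padded instance therefore compares directly against $w(\OPT)$ rather than $\tfrac12\,w(\OPT)$, giving the $(1+O(\eps))$ guarantee with the same running time.
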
 

We leave as an interesting open problem to find a polynomial-time
(or even quasi-polynomial-time) constant factor approximation for
\twUFP\ \emph{without} resource augmentation.

We describe now the key ideas in our $(2+\epsilon)$-approximation
for \twUFP{}. The previous QPTASes for \UFP{} \cite{BCES2006,BGKMW15}
are based
on the following approach (written in slightly different terminology
here). We interpret the path $E$ as an interval $I$, i.e., with
one subinterval of unit length for each edge $e\in E$. We consider
its midpoint which we denote by $\mi(I)$ and all input tasks $i$
whose path $P(i)$ contains $\mi(I)$. These tasks are partitioned
into polylogarithmically many groups such that within each group,
all tasks have roughly the same weight and demand (up to a factor
of $1+\epsilon$). For each group, we guess an estimate for the amount
of capacity they use in the optimal solution on each edge in $E$.
Given this, we compute the most profitable set of tasks for which
these edge capacities are sufficient. Then, the remaining problem
splits into two \emph{independent }subproblems: one for all input
tasks whose paths lie on the left of $\mi(I)$ and one for all input
tasks whose paths lie on the right of $\mi(I)$. Therefore, we can
easily recurse on these two subproblems and solve them independently. 

Unfortunately, this approach does not extend to \twUFP{}. A natural
adaption for \twUFP{} would be to consider all input tasks $i$
for which the path $P(i)$ \emph{in the optimal solution }contains
the midpoint of $I$ which we denote by $\mi(I)$. However, then the
remaining problem does \emph{not }split nicely into two independent
subproblems: there can be a task $i$ whose time window $\tw(i)$
contains $\mi(I)$ and which we could schedule entirely on the left
of $\mi(I)$ or entirely on the right of $\mi(I)$ (see Figure~\ref{fig:intro}).
Thus, $i$ appears in the input of the left \emph{and} the right subproblem,
even though we are allowed to select $i$ only once. Hence, these
subproblems are no longer independent. Even more, this issue for $i$
can arise again in each level of the recursion, which yields many
interconnected subproblems.

Therefore, we use a different approach to obtain a $(2+\epsilon)$-approximation
for \twUFP{}. We consider all input tasks $i$ whose \emph{time
window} contains $\mi(I)$. By losing a factor of 2 on the weight
of these tasks in the optimal solution, we can sacrifice either all
of them that are scheduled on the left of $\mi(I)$ or all of them
that are scheduled on the right of $\mi(I)$. We guess which case
applies and we assume it in the following to be the latter case w.l.o.g.
For the tasks $i$ for which $P(i)$ contains $\mi(I)$ in the optimal
solution we guess an estimate for the amount of capacity they use
on the edges, similar to the QPTASs for \UFP{}. Then we recurse
on the subproblems corresponding to the left and the right of $\mi(I)$.
For the right subproblem, we do not allow to select tasks whose time
window uses $\mi(I)$. Therefore, our two subproblems are now independent!
Let us consider the left subproblem, denote its corresponding interval
by $I^{L}$ and its midpoint by $\mi(I^{L})$. If we continued natively
in the same fashion, we would lose another factor of 2 on the weight
of the tasks whose time windows contain $\mi(I)$ and $\mi(I^{L})$,
which we cannot afford. Instead, we employ a crucial new idea. We
partition the mentioned tasks into polylogarithmically many groups
such that within each group, all tasks have the same demand (using
resource augmentation) and roughly the same weight (up to a factor
of $1+\epsilon$). Since the time window of each such task $i$ contains
$\mi(I)$ and $\mi(I^{L})$, we can schedule it freely within the
interval $[\mi(I^{L}),\mi(I)]$. Therefore, for each group, we consider
its tasks that are scheduled entirely during $[\mi(I^{L}),\mi(I)]$
in the optimal solution and round their processing times via linear grouping \cite{gonzalez2007handbook,vl81,williamson2011design} to $O(1/\epsilon)$ different values. For each resulting
combination of demand, weight, and rounded processing time, we guess
for the right subproblem of $I^{L}$ (i.e., the ``left-right subproblem'')
how many tasks it schedules and give it these tasks as part of the
input. To the left subproblem of $I^{L}$ (i.e., the ``left-left
subproblem'') we give the additional constraint that it needs to
leave the corresponding number of tasks from each group unassigned.
An important aspect is that the processing times of these tasks are
\emph{not} rounded in the left-left subproblem, but they \emph{are}
rounded in the left-right subproblem. This is a crucial difference
of our approach in comparison to other rounding methods from the literature.
Thanks to this technique, we avoid to lose another factor of 2 on
the mentioned tasks. We apply it recursively for $O(\log m)$ levels
and obtain an approximation ratio of $2+\epsilon$ overall.

\begin{figure}[t]
\centering
\begin{tikzpicture}
	\draw[|-|] (0,0) -- (10,0) ;
	
	\draw[-] (2.5,0.1) -- (2.5,-0.1) node[below] {$\mi(I^L)$};
	\draw[-] (5,0.1) -- (5,-0.1) node[below] {$\mi(I)$};
	
	\draw[|-|, dotted] (1,0.6) -- (8,0.6) node[right] {$tw(i)$};
	
	\node[above] at (4.2,0.6) {$P(i)$};
	
	\filldraw[fill=gray, draw=black] (3.7,0.5) rectangle (4.7,0.7);
	
	\draw[thick] (0,0.15) 
	.. controls (0.2, 0.7) and (2.5, 0) .. (2.5,0.7)
	.. controls (2.5, 0) and (4.8, 0.7) .. (5, 0.15);
	\node[above] at (2.5,0.6) {$I^L$};
\end{tikzpicture}
\caption{The interval $I$ and its subdivision together with a task $i$.}
\label{fig:intro}
\end{figure}
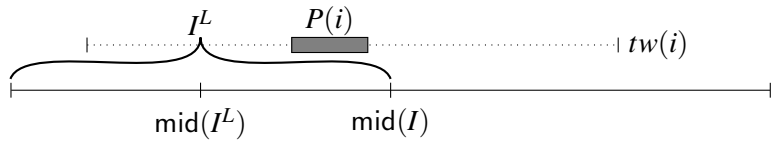

\subsection{Related work}

The \bagUFP\ problem is a generalization of \UFP\ where we are given the same input as in \UFP\ plus a partition of the tasks into subsets, the \emph{bags}, and we are allowed to select at most one task per bag. This also generalizes \twUFP\ by creating bags that model each possible way to schedule a task within its time window. It was already known that \bagUFP\ is APX-hard \cite{S99} (which is now also implied by our result). The current best approximation ratio for \bagUFP\ is $O(\log n/\log\log n)$ \cite{GIU15}, and $17$ under the no-bottleneck assumption \cite{CCGRS14}\ant{, i.e., if $\max_{i\in T} d(i) \leq \min_{e\in E} u(e)$}.
A constant factor approximation algorithm for \bagUFP\ (hence for \twUFP) is known for the cardinality case of the problem \cite{GIU15}.

For the special case of \twUFP\ when all the demands and capacities are $1$ \ant{(also known as the non-preemptive throughput maximization problem)},
the best-known approximation factor for this problem is~$2$~\cite{BGNS01,bar2001unified,BD00}. Notice that we give a $(2+\eps)$-approximation for a much more general problem, however, in quasi-polynomial time and using resource augmentation.
In the cardinality case of the problem, there is an algorithm with an approximation ratio of  $e/(e-1)+\eps$ \cite{COR06} (also for bags instead of time windows),
which was later slightly improved in \cite{ILM20}.
\anote{removed result with multiple machines}

\section{A $(2+\eps)$-approximation for \twUFP{}}\label{sec:algorithm}

In this section, we present a $(2+\eps)$-approximation algorithm
with a quasi-polynomial running time for \twUFP{} under $(1+\eps)$-resource
augmentation, for any constant $\eps>0$.

Without loss of generality, assume $1/\eps \in \NN$.
Formally, let $(\OPT,P^*(\cdot))$ denote an optimal solution for the given instance. We compute
a solution $(S,P(\cdot))$ consisting of a set of tasks $S$ and for each task $i \in S$ a subpath $P(i)\subseteq \tw(i)$ of length $p(i)$ such that $w(\OPT)\le(2+\eps)w(S)$
and $\sum_{i\in S:e\in P(i)} d(i) \le(1+\eps)u(e)$ for every $e\in E$.

First, we use resource augmentation in order to round the edge capacities
and task demands. We also round the tasks' weights.
\ant{This yields the following reduction, whose proof is presented in Appendix~\ref{sec:proof-preprocessing}.}

\begin{restatable}{lemma}{reductions}
\label{lem:preprocessing}Let $\alpha\ge 1$. Assume that there is
an $\alpha$-approximation algorithm running in time $T(n, \eps)$ for the special case of \twUFP{}
in which
\begin{itemize}
\item for each $e\in E$ we have that $u(e)\in[1,(n/\eps)^{1/\eps}))$
\item for each task $i\in T$ we have that
\begin{itemize}
\item $d(i)\in[1,(n/\eps)^{1/\eps})$ and $d(i)$ is a power of $1+\eps$,
and
\item $w(i)\in[1,n/\eps]$ and $w(i)$ is a power of $1+\eps$.
\end{itemize}
\end{itemize}
Then there is a $(1+5\eps)\alpha$-approximation algorithm for
\twUFP{} under $(1+ 4\eps)$-resource augmentation with a running time of $O(\poly(n) \frac{\log U}{\log n} T(n,\eps))$, where $U=\max_{e\in E}u(e)$.
\anote{replaced $\log_2$ with $\log$ for consistency}
\end{restatable}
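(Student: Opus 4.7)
I plan to prove Lemma~\ref{lem:preprocessing} via a three-stage preprocessing: weight normalization, demand rounding, and capacity/demand range reduction. The first two stages are standard and cost $(1+O(\eps))$ each, while the third stage accounts for the $\log U/\log n$ factor in the running time.

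For weight normalization I would enumerate over the $n$ candidate values of $w^\ast = \max_{i \in \OPT} w(i)$ (one per input task). For each guess, I discard tasks with $w(i) > w^\ast$ (not in this $\OPT$) and with $w(i) < \eps w^\ast / n$ (contributing at most $\eps\,w(\OPT)$ in total to any feasible solution of at most $n$ tasks), then rescale remaining weights by $n/(\eps w^\ast)$ to place them in $[1, n/\eps]$, and round each down to the nearest power of $1+\eps$. For demand rounding I round each $d(i)$ up to the next power of $1+\eps$; the $(1+\eps)$ inflation of the load on every edge is absorbed into the $(1+4\eps)$-resource augmentation.

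The range reduction step is the crux. Let $K := (n/\eps)^{1/\eps}$, so $\log K = \Theta_\eps(\log n)$ and $K \ge n$. I enumerate $O(\log U/\log K) = O(\log U/\log n)$ scales $k$ and, for each, construct a subinstance by retaining tasks with $d(i) \in [K^k, K^{k+1})$, discarding edges with $u(e) < K^k$ (which can never carry a scale-$k$ task), clipping the remaining capacities at $K^{k+1}$, and dividing all surviving demands and capacities by $K^k$ so that the result satisfies the hypotheses of the restricted algorithm. I invoke that algorithm on each subinstance and return the union of the selected tasks across scales. Combined with the $n$ guesses of $w^\ast$, this gives $O(\poly(n) \cdot \log U/\log n)$ invocations, matching the claimed running time.

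The main obstacle is showing that the union across scales simultaneously achieves a $(1+O(\eps))\alpha$-approximation and respects the capacities up to the allowed $(1+O(\eps))$ factor. For the weight bound, each $\OPT$ task falls into a unique scale $k$, so the scale-wise restricted OPTs partition $\OPT$ and their weights sum to $w(\OPT)$. For feasibility, I would argue that because $K \ge n$, the total demand on any edge $e$ from the combined per-scale solutions telescopes geometrically in $k$ and is dominated by the contribution of the largest scale with $K^k \le u(e)$; careful calibration of the clipping threshold and of the resource augmentation constants then brings the overall violation down to $(1+O(\eps))$. Working out this calibration, plus verifying that no scale-$k$ OPT task is destroyed by the $K^{k+1}$ clipping (which uses $|\OPT| \le n \le K$ to bound the per-edge demand of $\OPT_k$), is where the bulk of the technical work sits.
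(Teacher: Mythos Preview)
Your weight normalization and demand rounding steps are fine and match the paper. The range-reduction step, however, has a real gap that ``careful calibration'' does not close.

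First, with contiguous scales $[K^k,K^{k+1})$ and clipping at $K^{k+1}$, $\OPT_k$ need not be feasible for the scale-$k$ subinstance: there can be up to $n$ tasks in $\OPT_k$ on a single edge, each with demand close to $K^{k+1}$, so the per-edge load of $\OPT_k$ can be as large as $nK^{k+1}$. Your parenthetical $|\OPT|\le n\le K$ only gives the bound $nK^{k+1}\le K^{k+2}$, which is still a factor $K$ above the clipping threshold, not below it.

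Second, and more fundamentally, the geometric telescoping fails at the boundary scale. Fix an edge $e$ with $K^{k^*}\le u(e)<K^{k^*+1}$. The scale-$k^*$ solution may use up to $u(e)$ on $e$; the scale-$(k^*-1)$ solution, whose clipped capacity on $e$ is $K^{k^*}$, may use up to $K^{k^*}$, which can equal $u(e)$. Lower scales contribute a further $K^{k^*}/(K-1)$. The total is thus about $2u(e)$, not $(1+O(\eps))u(e)$. Raising the clipping threshold (say to $nK^{k+1}$) would repair the first issue but then the scaled capacities exceed $(n/\eps)^{1/\eps}$, and the second issue persists anyway.

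The paper resolves both problems by inserting \emph{gaps} of width $n/\eps$ between consecutive demand ranges, using a random shift $r\in\{0,\dots,1/\eps-1\}$: the ranges are $[A_j,B_j)$ with $B_j/A_j=(n/\eps)^{1/\eps-1}$ and $A_{j+1}=(n/\eps)B_j$. Tasks whose demand lands in a gap are discarded; some choice of $r$ loses at most an $\eps$ fraction of $w(\OPT)$. One then clips at $nB_j$ (so $\OPT_j$ remains feasible, and $nB_j/A_j\le(n/\eps)^{1/\eps}$ after scaling), and for an edge $e$ with $u(e)\ge A_j/(1+\eps)$ all tasks from lower groups have demand at most $\frac{\eps}{n}A_j$, so their combined load is at most $\eps(1+\eps)u(e)$. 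This shifted-grouping idea is the missing ingredient in your sketch.
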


In the following, we will present a $(2+\eps)$-approximation
algorithm for the special case of \twUFP{} defined in Lemma~\ref{lem:preprocessing};
this yields a $(2+O(\eps))$-approximation for arbitrary instances
of \twUFP{} under $(1+O(\eps))$-resource augmentation.

We first define a hierarchical decomposition of $E$. Assume w.l.o.g.
that $m=|E|$ is a power of 2. We define a (laminar) family of subpaths $\I$
which intuitively form a tree. We will refer them also as \emph{intervals}.
There is one interval $I_{r}\in\I$ such that
$I_{r}=E$ which intuitively forms the root of the tree. For each
$I\in\I$, we will ensure that $|I|$ is a power of 2. Consider an
interval $I\in\I$ with $|I|\ge2$ and let ${\mi(I)}$ denote its middle
vertex. Given $I$, we define recursively that there is an interval
$I^{L}\in\I$ induced by the edges of $I$ to the left of ${\mi(I)}$,
and an interval $I^{R}\in\I$ 
induced by the edges of $I$ to the right of ${\mi(I)}$.
We say that $I^{L}$ and $I^{R}$ are the two \emph{children}
of $I$. We apply this definition recursively. As a result, for each
edge $e\in E$ there is an interval $\{e\}\in\I$; such
intervals form the leaves of our tree. We say that an interval $I\in\I$
is of \emph{level $\ell$ }for some $\ell\in\N_{0}$ if $|I|=2^{\log m-\ell}$,
e.g., the interval $I_{r}$ is of level 0. Note that we have $1+\log m$
levels (for which there is an interval in $\I$). 

We group the tasks into a polylogarithmic number of groups. For each
task $i\in T$ we define that it is of \emph{level $\ell(i)$ }if
$\tw(i)$ is contained in an interval $I\in\I$ of level $\ell(i)$
but not in an interval $I'\in\I$ of level \ant{$\ell(i)+1$}. 
We define $T_{\ell}\coloneqq \{i\in T:\ell(i)=\ell\}$ for each integer $\ell$.
Also,
for each combination of values $w,d$ such that $w$ represents a weight and~$d$ a demand,
we define the set $T_{w,d}=\{i\in T: w(i)=w\wedge d(i)=d\}$. 
Note that there are only a polylogarithmic number of
combinations for these pairs $w,d$ for which $T_{w,d} \ne \emptyset$ \ant{by Lemma~\ref{lem:preprocessing}}.

Recall that $(\OPT,P^*(\cdot))$ denotes an optimal solution. Consider a
task $i\in\OPT$ of some level~$\ell(i)$. Let $I$ be the (unique)
interval of level $\ell(i)$ that contains $\tw(i)$ and let $I^{L}$
and $I^{R}$ denote the children of $I$.
Observe that $P^*(i)$ must satisfy one of the following three conditions:
\begin{enumerate}
\item\label{case:optScheduledLeft} $P^{*}(i)$ is contained in $I^{L}$, or
\item\label{case:optScheduledMiddle} $\mi(I)$ is in the interior of $P^{*}(i)$, i.e., $\mi(I)$ is a
vertex of $P^{*}(i)$ but it is neither the leftmost nor the rightmost
vertex of $P^{*}(i)$, or
\item\label{case:optScheduledRight} $P^{*}(i)$ is contained in $I^{R}$.
\end{enumerate}
Let $\OPT^{(L)},\OPT^{(M)},$ and $\OPT^{(R)}$ denote the tasks in
$\OPT$ for which case \ref{case:optScheduledLeft}, \ref{case:optScheduledMiddle}, and \ref{case:optScheduledRight} applies, respectively.
In particular,
$w(\OPT)=w(\OPT^{(M)})+w(\OPT^{(L)})+w(\OPT^{(R)})$ and thus $w(\OPT^{(L)})\le\frac{1}{2}w(\OPT)$
or $w(\OPT^{(R)})\le\frac{1}{2}w(\OPT)$. The first step of our algorithm
is to guess which of these cases applies. Assume w.l.o.g.\ that $w(\OPT^{(R)})\le\frac{1}{2}w(\OPT)$.
In the following, we will essentially aim to select only tasks from
$\OPT^{(L)}\cup\OPT^{(M)}$. 
In this step we lose a
factor of (up to) 2 on the obtained profit. 

Formally, we say that a solution $(S,P(\cdot))$ is \emph{left constrained} if the following holds for each task $i\in S$:
For $I$ being the unique interval of level $\ell(i)$ containing $\tw(i)$, we require that $P(i)$ is not contained in the right child of $I$, i.e., in the subinterval of $I$ containing all edges of $I$ on the right of $\mi(I)$.
We will later define artificial tasks (not contained in $T$) for which this is not required.
Our algorithm will return a left constrained solution by construction and we will compare its profit with the left constrained solution~$\OPT^{(L)}\cup\OPT^{(M)}$.

Our algorithm is recursive. We start by describing the root subproblem in the next subsection;
it is simpler than a generic subproblem, however it is convenient to introduce certain notions that will be needed also in the general case. In the subsequent subsection, we will describe a generic subproblem.

\subsection{Root subproblem}

The root subproblem
corresponds to the interval $I_{r}=E$ (of level~0). 
Let $(\overline \OPT_r, P^*(\cdot))$ be an optimal left constrained solution. Let $\overline \OPT^{(M)}_r$ be all the tasks $i \in \overline \OPT_r$
such that $\mi(I_r)$ is in the interior of $P^*(i)$.
Consider a group $T_{w,d}$. We want to guess a set of \emph{boxes} that delimit space that we reserve for tasks in $T_{w,d}\cap\overline \OPT^{(M)}_r$.
Formally, a box $b$ is characterized by a path $\pb(b)\subseteq E$,
a height $h(b)>0$, a demand $d(b)>0$, and a weight $w(b)>0$. We will consider only boxes $b$ for which $w(b)$ is some task weight, $d(b)$ is some task demand, and $h(b)$ is some integer multiple of $d(b)$ upper bounded by~$n\cdot d(b)$. In the following, we will assign
certain sets of tasks to some boxes such that, intuitively, these
tasks are stacked on top of each other inside the box and each of
them has a weight of $w(b)$, a demand of $d(b)$, and their total demand is at most $h(b)$, i.e., they are at most $h(b)/d(b)$ many. Furthermore, it is possible to schedule each such task within the path of the box\ant{, as illustrated in Figure~\ref{fig:box}.}

More explicitly, 
for any set of tasks $S$, using the notation $d(S)\coloneqq \sum_{i\in S}d(i)$ and $w(S)\coloneqq \sum_{i\in S}w(i)$, we say that $S$
\emph{fits inside a box }$b$ if 
\begin{itemize}
\item $d(S)\le h(b)$,
\item $|\tw(i)\cap\pb(b)|\ge p(i)$ for each $i\in S$, and
\item $w(i)=w(b)$ and $d(i)=d(b)$ for each $i\in S$.
\end{itemize}

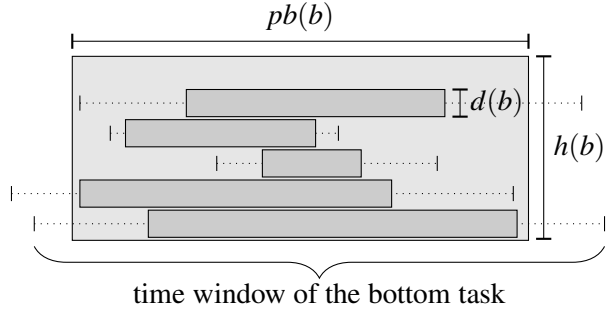
\begin{figure}[t]
	\centering
	\begin{tikzpicture}
	\draw [draw=black, fill=black!10!white] (-3, -0.02) rectangle (3,2.42);

	\draw[dotted] (-3.5,0.2) -- (4,0.2);
	\draw (-3.5,0.1) --  (-3.5,0.3);
	\draw (4,0.1) --  (4,0.3);
	\draw [black, fill=black!20!white] (-2,0.02) rectangle (2.85,0.38);

	\draw[dotted] (-3.8,0.6) -- (2.8,0.6);
	\draw (-3.8,0.5) --  (-3.8,0.7);
	\draw (2.8,0.5) --  (2.8,0.7);
	\draw [black, fill=black!20!white] (-2.9,0.42) rectangle (1.2,0.78);

	\draw[dotted] (-1.1,1) -- (1.8,1);
	\draw (-1.1,0.9) --  (-1.1,1.1);
	\draw (1.8,0.9) --  (1.8,1.1);
	\draw [black, fill=black!20!white] (-0.5,0.82) rectangle (0.8,1.18);

	\draw[dotted] (-2.5,1.4) -- (0.5,1.4);
	\draw (-2.5,1.3) --  (-2.5,1.5);
	\draw (0.5,1.3) --  (0.5,1.5);
	\draw [black, fill=black!20!white] (-2.3,1.22) rectangle (0.2,1.58);

	\draw[dotted] (-2.9,1.8) -- (3.7,1.8);
	\draw (-2.9,1.7) --  (-2.9,1.9);
	\draw (3.7,1.7) --  (3.7,1.9);
	\draw [black, fill=black!20!white] (-1.5,1.62) rectangle (1.9,1.98);

	\draw (3.2,0) -- node[right] {$h(b)$} (3.2,2.42);
	\draw[thick] (3.1,0)--(3.3,0) ;
	\draw[thick] (3.1,2.42)--(3.3,2.42) ;

	\draw (2.1,1.62) -- node[right] {$d(b)$} (2.1,1.98);
	\draw[thick] (2,1.62)--(2.2,1.62) ;
	\draw[thick] (2,1.98)--(2.2,1.98) ;

	\draw (-3,2.62) -- node[above] {$pb(b)$} (3,2.62);
	\draw[thick] (-3,2.52)--(-3,2.72) ;
	\draw[thick] (3,2.52)--(3,2.72) ;

	\draw[decorate,decoration={brace,amplitude=12pt}] (4,-0.1) -- (-3.5,-0.1) node[midway, below,yshift=-10pt,]{time window of the bottom task};

\end{tikzpicture}
\caption{A box of height $h(b)$, demand $d(b)$ and a path $pb(b)$ with a set of five tasks fitting into the box.}
\label{fig:box}
\end{figure}

We use the next lemma to show that there is a set of constantly many
boxes $B_{w,d}$ in which almost all tasks in $T_{w,d}\cap\overline \OPT^{(M)}_r$
fit, i.e., up to a factor of $1+\eps$ in the profit. Also, the
capacity used by these boxes is at most the capacity used by the tasks
in $T_{w,d}\cap\overline \OPT^{(M)}_r$ on each edge. Formally, we apply the
following lemma to the set $T_{w,d}\cap\overline \OPT^{(M)}_r$ and obtain a set of at
most $1/\eps^{2}$ boxes that we denote by $B_{w,d}$. 
We prove this lemma with a similar linear grouping scheme as in \cite{vl81} (see also \cite{williamson2011design})
to show that we can approximate the demand profile of such a set of tasks by a simpler profile with a constant number of steps only.

\begin{restatable}{lemma}{boxPacking}
\label{lem:boxPacking} 
Let $T'$ be a set of tasks, all having identical weight $w$ and identical demand $d$, and a schedule $P(i) \subseteq E$ for every $i\in T'$ such that there is an edge~$f$ that is contained in each path $P(i)$.
Then, there exists a set $S\subseteq T'$, a
set of boxes $B$ with $|B|\le1/\eps^{2}$, and a partition of $S$
into sets $\left\{ S_{b}\right\} _{b\in B}$ such that:
\begin{enumerate}[label=(\arabic*)]

\item \label{boxes1-1} $\sum_{b\in B:e\in\pb(b)}h(b)\le\sum_{i\in T':e\in P(i)}d(i)$ for each $e\in E$,

\item \label{boxes2-1} $w(S)\geq(1-O(\eps))w(T')$,

\item \label{boxes3-1}  the tasks in $S_{b}$
fit into $b$ for each $b\in B$.
\end{enumerate}
\end{restatable}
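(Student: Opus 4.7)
The plan is a two-round linear grouping scheme based on how far each task's path extends on either side of the common edge $f$. For each $i \in T'$, let $L(i)$ and $R(i)$ denote the number of edges of $P(i)$ strictly to the left and strictly to the right of $f$, respectively; since $f \in P(i)$ we have $L(i) + R(i) + 1 = p(i)$, and $P(i)$ is fully determined by the pair $(L(i), R(i))$.

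For the first round, I would sort the tasks of $T'$ non-decreasingly by $L(\cdot)$ and partition them into $1/\eps$ consecutive groups $G_1, \ldots, G_{1/\eps}$ of (roughly) equal size $\eps |T'|$. Discard $G_{1/\eps}$, the group holding the tasks with largest $L$-values, and for each surviving $G_k$ set $\ell_k \coloneqq \max_{i \in G_k} L(i)$. In the second round, within every surviving $G_k$, sort its tasks non-decreasingly by $R(\cdot)$, partition them into $1/\eps$ subgroups $G_{k,1}, \ldots, G_{k, 1/\eps}$ of roughly equal size, discard $G_{k, 1/\eps}$, and define $r_{k, j} \coloneqq \max_{i \in G_{k,j}} R(i)$ for $j < 1/\eps$. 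For every surviving pair $(k, j)$, define a box $b_{k,j}$ whose path $\pb(b_{k,j})$ consists of $f$ together with the $\ell_k$ edges immediately to the left of $f$ and the $r_{k,j}$ edges immediately to the right of $f$, with $d(b_{k,j}) = d$, $w(b_{k,j}) = w$, and $h(b_{k,j}) = d \cdot |G_{k,j}|$. Set $S_{b_{k,j}} \coloneqq G_{k,j}$, $S \coloneqq \bigcup_{k, j} S_{b_{k,j}}$, and let $B$ be the collection of all such boxes, so $|B| = (1/\eps - 1)^2 < 1/\eps^2$.

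Condition~\ref{boxes3-1} follows because $L(i) \leq \ell_k$ and $R(i) \leq r_{k, j}$ for every $i \in G_{k, j}$, so $P(i) \subseteq \pb(b_{k,j})$ and hence $|\tw(i) \cap \pb(b_{k,j})| \geq |P(i)| = p(i)$; the weight, demand and height conditions follow immediately from the definitions. Condition~\ref{boxes2-1} holds since the number of discarded tasks is at most $\eps|T'|$ from $G_{1/\eps}$ plus at most $(1/\eps - 1) \cdot \eps^2|T'| \leq \eps|T'|$ across all $R$-round discards, so $w(S) \geq (1 - 2\eps)w(T')$.

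The main obstacle is condition~\ref{boxes1-1}, a dominance inequality that must hold edge by edge. For an edge $e$ at distance $x > 0$ to the left of $f$, a kept box $b_{k,j}$ contributes $h(b_{k,j}) = d|G_{k,j}|$ iff $\ell_k \geq x$, so the box contribution at $e$ equals $d(1-\eps)\sum_{k : \ell_k \geq x}|G_k|$. Letting $k^*$ be the smallest index with $\ell_{k^*} \geq x$, every task in $G_{k^*+1} \cup \cdots \cup G_{1/\eps}$ has $L$-value at least $\ell_{k^*} \geq x$ (since subgroups are sorted) and therefore contributes $d$ to the left-hand side of \ref{boxes1-1} via the original profile; this yields at least $d(1/\eps - k^*)\eps|T'|$ of original demand at $e$, strictly exceeding the box total $d(1-\eps)(1/\eps - k^*)\eps|T'|$. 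A symmetric argument, now applied inside each surviving $G_k$ using the $R$-round (with $j_k^*$ replacing $k^*$) handles edges at distance $x > 0$ to the right of $f$; the case $e = f$ is trivial since $|S| \leq |T'|$.
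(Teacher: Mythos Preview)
Your construction is essentially the paper's: both perform two rounds of linear grouping (first by left extent, then by right extent within each left-group), discard the groups with maximal extents, and build one box per surviving subgroup whose path is determined by the maximal extents in its group. The paper phrases this via leftmost/rightmost edges of $P(i)$ rather than your $L(\cdot),R(\cdot)$, but the boxes produced coincide.

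Where you diverge from the paper---and leave a gap---is in the dominance argument for condition~\ref{boxes1-1}. On the left of $f$ you appeal to the $(1-\eps)$ factor coming from the $R$-round discard of $G_{k,1/\eps}$ to absorb the one-group shift between the box indices $\{k^*,\dots,1/\eps-1\}$ and the task indices $\{k^*+1,\dots,1/\eps\}$. But your ``symmetric'' argument inside each $G_k$ for edges right of $f$ has no such slack: there is no further discard, so the box contribution from $G_k$ at right-distance $x$ is $d\sum_{j=j_k^*}^{1/\eps-1}|G_{k,j}|$ while your lower bound on the original profile from $G_k$ is only $d\sum_{j=j_k^*+1}^{1/\eps}|G_{k,j}|$. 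These differ by $|G_{k,j_k^*}|-|G_{k,1/\eps}|$, which can be positive when subgroup sizes are merely ``roughly equal''. The paper avoids this by insisting that the discarded (sub)group always has the \emph{largest} cardinality ($|S_r|\ge|S_{r+1}|$ and $|S_{r,q}|\ge|S_{r,q+1}|$ in its indexing), so that a pure one-step shift yields the inequality on both sides without any $(1-\eps)$ cushion. In your indexing you should require $|G_1|\le\cdots\le|G_{1/\eps}|$ and $|G_{k,1}|\le\cdots\le|G_{k,1/\eps}|$; then the shift argument goes through directly. You also omit the base case $|T'|\le 1/\eps^2$, which the paper handles by taking one box per task.
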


\begin{proof}
If $|T'|\le1/\eps^{2}$, then we can set the boxes to coincide with
the embedded tasks in $T'$, so assume $|T'| > 1/\eps^{2}$.
First, we sort the tasks
$T'$ in increasing order of the \fab{leftmost edge of $P(i)$}, and partition
them into $1/\eps$ sets $S_{1},\ldots,S_{1/\eps}$ such that $\lceil\eps|T'|\rceil \geq |S_r| \geq |S_{r+1}| \geq \lfloor\eps|T'|\rfloor$ for $r\in \{1, \dots, 1/\varepsilon-1\}$.
For every $r\in \{1, \dots, 1/\varepsilon\}$, let $\mu_{r}$ be the left-most edge \fab{in $P(i)$} among all the tasks in $S_r$.

For each $r\in\{1, \dots, 1/\eps\}$, we sort the tasks in
$S_{r}$ in decreasing order of the right-most edge \fab{in $P(i)$}, and partition
them into $1/\eps$ sets $S_{r,1},\dots,S_{r,1/\eps}$ such that $\lceil\eps|S_r|\rceil \geq |S_{r,q}| \geq |S_{r,q+1}| \geq \lfloor\eps|S_r|\rfloor$ for $q\in \{1, \dots, 1/\varepsilon-1\}$.
For $r, q \in \{1, \dots, 1/\varepsilon\}$, let $\nu_{r,q}$ be the right-most edge \fab{in $P(i)$} among all the tasks in $S_{r,q}$.
Notice that for any task in $i\in S_{r,q}$, $P(i)$ lies on the path from $\mu_r$ to $\nu_{r,q}$.

For each $r\in\{2, \dots, 1/\eps\}$ and each $q\in\{2,\dots,1/\eps\}$, we
create a box $b_{r,q}$ with height $d\cdot|S_{r,q}|$ and $\pb(b_{r,q})$ being the path
from $\mu_{\fab{r}}$ to $\nu_{\fab{r,q}}$. We define $S=\bigcup_{r,q\in\{2,\dots,1/\eps\}} S_{r,q}$.
We show that the boxes and set $S$ defined in this way satisfy the
lemma.

\ref{boxes1-1}:
Assume first that $e$ lies between $\mu_r$ and $\mu_{r+1}$ or $e=\mu_r$ for some $r \in \{1, \dots, 1/\varepsilon-1\}$. Then
\begin{equation*}
    \sum_{b\in B:e\in \pb(b)} h(b)
    \leq \sum_{2 \leq r' \leq r, 2 \leq q \leq 1/\eps} h(b_{r',q})
    \leq \sum_{2 \leq r' \leq r} d\cdot |S_{r'}|
    \leq \sum_{i\in T': e\in P(i)} d(i).
\end{equation*}
Similarly, if now $e$ lies between $\nu_{r,q+1}$ and $\nu_{r,q}$ or $e=\nu_{r,q}$ for some fixed $r$, then
\begin{equation*}
    \sum_{q':e\in \pb(b_{r,q'})} h(b_{r,q'})
    = \sum_{2 \leq q'\leq q} d\cdot |S_{r,q'}|
    \leq \sum_{i \in S_r: e\in P(i)} d(i). 
\end{equation*}
Summing up the \fab{above} terms over $r \in \{1, \dots, 1/\varepsilon\}$ proves the claim. If finally $e$ lies between $\mu_{1/\varepsilon}$ and the leftmost $\nu_{r,q}$ we have
\begin{equation*}
    \sum_{b\in B:e\in \pb(b)} h(b)
    = \sum_{b\in B} h(b)
    = d\cdot |S|
    \leq \sum_{i\in T'} d(i)
    = \sum_{i\in T':e\in P(i)} d(i).
\end{equation*}
\ref{boxes2-1}: Clearly, $T'\setminus S=S_{1} \cup (\bigcup_{r\in\{2,\dots,1/\eps\}}S_{r,1})$.
Since $T' > 1/\eps^2$, we have\fabr{I put $\leq$ since in the sum some inner ceilings might be floors, right? \alex{correct}}
\begin{equation*}
    |T' \setminus S|
    \fab{\leq} \lceil\eps|T'|\rceil + \sum_{2 \leq r \leq 1/\eps} \lceil\eps\lceil\eps|T'|\rceil\rceil
    \leq 2\eps |T'| + \frac{1}{\eps} \cdot 2\eps \cdot 2\eps |T'| = O(\eps)|T'|.
\end{equation*}
\ref{boxes3-1}: Follows by construction and the definition of the box.
\end{proof}

We guess the at most $1/\eps^{2}$ boxes $B_{w,d}$. 
For each box $b\in B_{w,d}$, let $S_{b}\subseteq T_{w,d}\cap\overline \OPT^{(M)}_r$
be the subset of $T_{w,d}\cap\overline \OPT^{(M)}_r$ that is assigned to $b$
when applying Lemma~\ref{lem:boxPacking} to $T_{w,d}\cap\overline \OPT^{(M)}_r$;
we guess $|S_{b}|=:n_{b}$. We do this procedure for the set $T_{w,d}$
for each combination of $w$ and $d$. 
Let $B_{0}$ denote the set of all boxes that we guessed this way.

Let $I_{r}^{L}$ and $I_{r}^{R}$ be the two children of $I_{r}$.
We recurse on a \emph{left subproblem }corresponding to $I_{r}^{L}$
and on a \emph{right subproblem }corresponding to $I_{r}^{R}$. In
the right subproblem, we are given as input the set 
consisting of each task $i\in T$
such that $\tw(i)\subseteq I_{r}^{R}$. 
Note that such a task must have level $1$ or larger, so the tasks of level $0$ are not considered in the subproblem corresponding to $I_r^R$.
The capacity of each edge $e\in I_{r}^{R}$ is defined as $u'(e)\coloneqq u(e)-\sum_{b\in B_{0}:e\in\pb(b)}h(b)$,
that is, the boxes in $B_{0}$ use a certain amount of the capacities
of the edges that we cannot use.  The goal is to compute a feasible solution
to this smaller instance, i.e., a set of tasks $Q^{R}$ with a path
$P(i)\subseteq \tw(i)\subseteq I_{r}^{R}$ for each task $i\in Q^{R}$ that respects the edge capacities $u'(e)$. The objective is to maximize $w(Q^{R})$.

In the left subproblem, we are given as input the set of all tasks
$i\in T$ such that $\tw(i)\subseteq I_{r}^{L}$ or $\ell(i)=0$;
let $T^{L}$ denote this set of tasks. The capacity of each edge $e\in I_{r}^{L}$
is defined as $u'(e)\coloneqq u(e)-\sum_{b\in B_{0}:e\in\pb(b)}h(b)$ analogously to the right subproblem. Also, we are
given the set of boxes $B_{0}$ and a value $n_{b}$ for each box
$b$. The goal is to select a subset $Q\subseteq T^{L}$, a partition
of $Q$ into a set $Q^{L}$ and a set $Q_{b}$ for each box $b\in B_{0}$,
and a path $P(i)$ for each task $i\in Q$
such that
\begin{itemize}
\item the set $Q^{L}$ with the path $P(i)\subseteq\tw(i)\cap I_{r}^{L}$ for each task $i\in Q^{L}$
forms a feasible solution for the given edge capacities $u'(e)$, i.e., for each edge $e\in I^L_r$ we have $\sum_{i\in Q^{L}:e\in P(i)}d(i)\leq u'(e)$, and
\item for each box $b\in B_{0}$ we have that $|Q_{b}|=n_{b}$, $Q_{b}$
fits into $b$, and $P(i)\subseteq \tw(i)\cap \pb(b)$ for each $i\in Q_b$.
\end{itemize}
Observe that the given boxes $B_{0}$ do not interact at all with
the edges $I_{r}^{L}$ and their given capacities. However, the subpath
$\pb(b)$ for a box $b\in B_{0}$ is important since a task $i$ fits
into a box $b$ only if $\left|\tw(i)\cap\pb(b)\right|\ge p(i)$.

Given a solution to the left and the right subproblems, we combine
them to a solution to the overall problem in the obvious way: the solution is given by the tasks $Q=Q^{L}\cup Q^R\cup (\cup_{b\in B_0}Q_b)$ with the respective paths $P(i)$. Each of our quasi-polynomially
many guesses yields a candidate solution; among all feasible candidate
solutions, we output the solution with maximum profit $w(Q)$.

\subsection{Arbitrary subproblems}

In the following, we describe our routine for arbitrary subproblems.
The reader may think of the subproblem for the left child of the root
$I_{r}$, i.e., the interval $I_{r}^{L}$ (which already incorporates all the challenges of a generic subproblem). We assume that the input
consists of
\begin{itemize}
\item an interval $I\in\I$ of level $\ell$ with a capacity $u'(e)$ for each edge $e\in I$,
\item a set of tasks $T'$, and
\item a set of boxes $B$ and a value $n_{b}$ for each box $b\in B$.
\end{itemize}
We remark that, for a box $b\in B$, not necessarily $\pb(B)\subseteq I$. Furthermore, given a task $i\in T'$, it might happen that $\tw(i)\not\subseteq I$; in this case, we will ensure that $\tw(i)$ contains the rightmost edge of $I$ by construction.
We remark that it might happen that $i\not\in T$, namely $i$ is not one of the input task. This happens because during the process, we introduce some \emph{artificial tasks} whose role will become clear later. Like the regular tasks, each artificial task $i$ has a weight $w(i)$, a demand $d(i)$, a length $p(i)$, and a time window~$\tw(i)$.

Let $I^{L}$ and $I^{R}$ denote the two children of $I$ and let
$\mi(I)$ denote the middle vertex of $I$. The goal is to compute
a set of tasks $Q\subseteq T'$, a partition of $Q$ into sets $Q_{I}$
and a set $Q_{b}$ for each box $b\in B$, and a path $P(i)$
for each task $i\in Q$ such that
\begin{itemize}
\item the tasks in $Q_{I}$ with their paths $P(i) \subseteq\tw(i)\cap I$ obey the edge capacities
of $I$, i.e., for each edge $e\in I$ we have $\sum_{i\in Q_{I}:e\in P(i)}d(i)\leq u'(e)$,
\item for each box $b\in B$ the tasks in $Q_{b}$ fit into $B$, $|Q_{b}|=n_{b}$, and $P(i)\subseteq \tw(i)\cap \pb(b)$ for each $i\in Q_b$.
\end{itemize}
The objective is to maximize $w(Q_{I})$, i.e., the weight of the
tasks in $Q_{I}$. Note that the tasks assigned to the boxes $B$
do not yield any profit in this subproblem. Intuitively, we accounted
the profit of such tasks already in subproblems of higher levels in
the recursion.

Let $(\overline{\OPT},P^*(\cdot))$ denote an optimal left constrained solution to the given subproblem.
We denote by $\overline{\OPT}_{I}$ and $\left\{ \overline{\OPT}_{b}\right\} _{b\in B}$
the corresponding partition of $\overline{\OPT}$ (i.e., $\overline{\OPT}_{I}$ consists of the tasks in $\overline{\OPT}$ scheduled on $I$ and $\overline{\OPT}_{b}$ contains the task in $\overline{\OPT}$ that are placed inside $b$).
The base cases of our recursion are there subproblems where $|I|=1$. 
In this case, we can solve the subproblem exactly in quasi-polynomial time. We guess for
each combination of a weight $w$ and a demand $d$ how many tasks
with this weight and demand are contained in $\overline{\OPT}_{I}$.
Then, the remaining problem can be reduced easily to an instance of
$b$-matching.
\begin{restatable}{lemma}{baseCase}\label{lem:base-case}
Any subproblem with $|I|=1$ can be solved exactly in $n^{O(WD)}$ time, where $W$ and $D$ denote the number of distinct weights and demands, respectively. 
\end{restatable}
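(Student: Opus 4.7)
The plan is to enumerate, by brute force, all possible ``type profiles'' that $\overline{\OPT}_I$ could realize on $Q_I$, and then verify each such profile by a polynomial-time bipartite $b$-matching. First I would observe that since $|I|=1$ the interval consists of a single edge $e$, so every task $i$ eligible for $Q_I$ must satisfy $p(i)=1$, $P(i)=\{e\}$, and $e\in\tw(i)$. Consequently the edge-capacity requirement on $Q_I$ collapses to the single inequality $\sum_{i\in Q_I} d(i)\le u'(e)$, and both this constraint and the objective $w(Q_I)=\sum_{i\in Q_I} w(i)$ depend only on the multiset of $(w,d)$-types in $Q_I$, not on the identity of the individual tasks chosen.

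Exploiting this, I would guess a nonnegative integer vector $(n_{w,d})_{w,d}$, where $n_{w,d}\in\{0,1,\ldots,|T'|\}$ specifies how many type-$(w,d)$ tasks $\overline{\OPT}_I$ places in $Q_I$. Since there are at most $WD$ type pairs, this yields $(|T'|+1)^{WD}=n^{O(WD)}$ candidates, and I discard any candidate that already violates $\sum_{w,d} n_{w,d}\cdot d\le u'(e)$. For each surviving candidate I build a bipartite graph with the tasks in $T'$ as left vertices (each with capacity $1$) and as right vertices a node $v_{w,d}$ of capacity $n_{w,d}$ for each type pair together with a node $v_b$ of capacity $n_b$ for each box $b\in B$. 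I connect $i$ to $v_{w,d}$ iff $w(i)=w$, $d(i)=d$, $p(i)=1$ and $e\in\tw(i)$, and I connect $i$ to $v_b$ iff $w(i)=w(b)$, $d(i)=d(b)$ and $|\tw(i)\cap\pb(b)|\ge p(i)$. The candidate is realizable exactly when this bipartite $b$-matching instance admits a matching saturating every right vertex, which is decidable in polynomial time by a single max-flow computation; if realizable, I recover the concrete assignment from the matching and set $P(i)=\{e\}$ for tasks sent to $Q_I$, while for tasks sent to a box $b$ I assign any $P(i)\subseteq\tw(i)\cap\pb(b)$ of length $p(i)$.

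Among all realizable candidates I would output the one maximizing $\sum_{w,d} n_{w,d}\cdot w$. Correctness follows because the type profile induced by $\overline{\OPT}_I$ itself is one of the enumerated candidates and is certifiably realizable (witnessed by $\overline{\OPT}$ and $P^{*}(\cdot)$), so the returned value is at least $w(\overline{\OPT}_I)$. The running time is $n^{O(WD)}$ candidates times polynomial per check, giving the claimed bound. I do not anticipate a real obstacle: the key structural point is simply the clean decoupling between the combinatorial type-profile choice (handled by brute-force enumeration) and the concrete task-to-slot assignment (handled uniformly by $b$-matching), which works because the per-task time-window and box-compatibility constraints interact with the $Q_I$ capacity only through the profile counts on the right side of the bipartite graph.
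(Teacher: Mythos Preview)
Your proposal is correct and follows essentially the same approach as the paper: enumerate all $n^{O(WD)}$ type profiles $(n_{w,d})$ for $Q_I$, and for each profile decide realizability via a bipartite $b$-matching between tasks and ``slots'' (the type-count vertices together with the box vertices). The only cosmetic difference is that the paper encodes each guessed $n_{w,d}$ as an additional box with $\pb(b)=\{e\}$ and height $n_{w,d}\cdot d$ and then runs a single $b$-matching over the enlarged box set, whereas you keep the type vertices separate from the box vertices; the resulting bipartite instances are identical.
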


\begin{proof}
    Consider\fabr{rewrote this proof. Pls check in particular if what I said about $\overline{\OPT}_e$ is correct} a subproblem on an interval $I =\{e\}$, $e\in E$, with boxes $B$ (with an $n_b$ associated to each $b\in B$). Notice that, by definition, in an optimal solution $(\overline{\OPT},\overline{P}(\cdot))$ for this subproblem, the tasks $\overline{\OPT}_{e}\subseteq \overline{\OPT}$ not assigned to any box must be scheduled with $\overline{P}(i)=\{e\}$ (in particular they must have $p(i)=1$). For each pair $(w,d)$ of weight and demand, we guess how many tasks $n_{w,d}$ of that weight and demand are contained in $\overline{\OPT}_{e}$. Notice that the number of possible guesses is at most $n^{O(\alex{WD)}}$.
    We create a corresponding box $b$ with $\pb(b)=\{e\}$, $w(b)=w$, $d(b)=d$, and $h(b)=n_b\cdot d$, where $n_b=n_{w,d}$, and add $b$ to the set of boxes $B$. Given the current guess, we check if it is possible to assign precisely $n_b$ tasks to each $b\in B$, and call it a feasible guess if this happens (we discuss later how to check it). We return the solution corresponding to the feasible guess of largest profit. Here the paths $\overline{P}(i)$ are chosen in any feasible way \ant{(inside the boxes the tasks are assigned to)}. 
    
    In order to check the feasibility as mentioned before, we build a bipartite graph where on the left we place a node $v_i$ for each task $i$ and on the right a node $v_b$ for each box $b$. We add an edge $\{v_i,v_b\}$ whenever task $i$ can be assigned to box $b$. In this graph we check whether there exists a $B$-matching $M$ where at most $1$ edge of $M$ is incident on each node $v_i$ and exactly $n_b$ edges of $M$ are incident to $n_b$. Checking the existence of such a $B$-matching can be done in polynomial time by standard matching theory (see, e.g., \cite[Theorem 21.9]{Schrijver03}).
\end{proof}

Assume now that $|I|>1$. For each task $i\in\overline{\OPT}_{I}$ there are three possibilities:
\begin{enumerate}
\item\label{case:scheduledLeft} $P^{*}(i)$ is contained in $I^{L}$, or
\item\label{case:scheduledMiddle} $\mi(I)$ is in the interior of $P^{*}(i)$, i.e., $\mi(I)$ is a
vertex of $P^{*}(i)$ but it is neither the leftmost nor the rightmost
vertex of $P^{*}(i)$, or
\item\label{case:scheduledRight} $P^{*}(i)$ is contained in $I^{R}$.
\end{enumerate}
Let $\overline{\OPT}^{(L)},\overline{\OPT}^{(M)},$ and $\overline{\OPT}^{(R)}$
denote the tasks in $\overline{\OPT}_{I}$ for which case \ref{case:scheduledLeft}, \ref{case:scheduledMiddle} and \ref{case:scheduledRight} applies, respectively.
Note that the partition of $\overline{\OPT}$ into 
$\overline{\OPT}^{(L)},\overline{\OPT}^{(M)},$ and $\overline{\OPT}^{(R)}$ is different from the partition of 
$\OPT$ into $\OPT^{(L)},\OPT^{(M)},$ and $\OPT^{(R)}$, 
since the former is with respect to $\mi(I)$ for each task $i\in \overline{\OPT}$ of \emph{any} level.
Recall also that $\overline{\OPT}$ is left constraint and, hence,
assuming that $I$ is an interval of level $\ell$,
for no task $i$ of level $\ell$, its scheduled path $P^*(I)$
in the optimal solution
is contained in $\overline{\OPT}^{(R)}$.
More precisely, for each task $i \in \overline \OPT^{(R)}$, it must be the case that either $\tw(i) \subseteq I^R$ (so $i$ is of level $\ell+1$ or deeper), $I^R \subseteq \tw(i)$ (so $i$ is of level $\ell-1$ or higher) or the leftmost edge of $\tw(i)$ is in $I^R$ but not all edges of  $\tw(i)$
(so $i$ is of level $\ell-1$ or higher).

First, we guess boxes for the tasks in $\overline{\OPT}^{(M)}$. Formally,
for each combination of a weight $w$ and demand $d$, we apply Lemma~\ref{lem:boxPacking}
to the set $\{i\in\overline{\OPT}^{(M)}:w(i)=w\wedge d(i)=d\}$. Let
$\bar{B}_{w,d}$ denote the resulting set of boxes and for each box
$b\in\bar{B}_{w,d}$ let $\overline{\OPT}_{b}^{(M)}$ be the resulting
set of tasks. 
We guess the boxes in $\bar{B}_{w,d}$ and $|\overline{\OPT}_{b}^{(M)}|=:n_{b}$ for each
box $b\in\bar{B}_{w,d}$.
Let $\bar{B}$ denote the union of all sets of boxes $\bar{B}_{w,d}$ that
we guessed in this way.

Let $\overleftarrow{T}'\subseteq T'$ denote all tasks in $i\in T'$
such that $I^{R}\subseteq\tw(i)$ and $I^{L}\cap\tw(i)\ne\emptyset$, i.e., task whose  time windows stick into $I$ from the
right and intersect $I^L$. 
For the subproblem for $I_{r}^{L}$, the reader may 
imagine that $\overleftarrow{T}'$ are all tasks in $T_{0}$ 
whose time windows start in the left child of $I_{r}^{L}$.
For a task $i\in\overleftarrow{T}'\cap\overline{\OPT}_{I}$
it is possible that $i\in\overline{\OPT}^{(L)}$ or that $i\in\overline{\OPT}^{(R)}$.
Therefore, at first glance it would make sense to pass $i$ to our
subproblem for $I^{L}$ and to our subproblem for $I^{R}$. However,
we must avoid that our subproblem for $I^{L}$ and our subproblem
for $I^{R}$ both select $i$. On the other hand, we do not want to
sacrifice a factor of 2 by, e.g., omitting $\overline{\OPT}^{(L)}$
or $\overline{\OPT}^{(R)}$. For example, if $\overleftarrow{T}'=T_{0}$
in the subproblem for $I_{r}^{L}$, then this would lose a second factor
of 2 on the profit of the tasks in $T_{0}$, while we want to bound
our approximation ratio by $2+O(\eps)$.

We execute the following steps instead:
\begin{enumerate}[label=\Roman*.]
\item Perform a linear grouping step in which we round up the task lengths so that there are at most polylogarithmically many distinct lengths in $\overleftarrow{T}'\cap\overline{\OPT}^{(R)}$;
\item For each combination of a demand, weight and (rounded) length, guess
how many such tasks are contained in $\overleftarrow{T}'\cap\overline{\OPT}^{(R)}$;
\item Give a suitable number of these rounded tasks as \emph{artificial input tasks} to our right subproblem for $I^{R}$;
\item Enforce our left subproblem for $I^{L}$ to leave sufficiently
many tasks from $\overleftarrow{T}'$ unassigned such that we can schedule these tasks later in the places in which the solution to the right subproblem schedules the  artificial tasks (we model this requirement for $I^{L}$ via suitable additional box constraints).
\end{enumerate}
    Formally, for each combination of a weight $w$ and demand $d$ we  apply the following lemma to the set $\overleftarrow{T}'_{w,d}=\big\{i \in \overleftarrow{T}'\cap\overline{\OPT}^{(R)}:w(i)=w\wedge d(i)=d\big\}$.

\begin{restatable}{lemma}{harmonicGrouping}\label{lem:harmonic-grouping}
Let $T'$ be a set of tasks,
all having identical weight $w$ and identical demand $d$,
and a schedule $P'(i) \subseteq I \subseteq \tw(i)$ for every $i\in T'$.
Then, there exists a set of \emph{artificial tasks}
$\tilde{T}$ (not contained in $T$), a set of boxes $\tilde{B}$
and a value $n_{b}\in\N$ for each box $b\in\tilde{B}$ with $\sum_{b\in \tilde B} n_b = |\tilde T|$ such that
\begin{enumerate}[label=(\arabic*)]
\item $(1-\eps)|T'|\le|\tilde{T}|\le|T'|$, and 
the tasks in $\tilde{T}$ have at most $1/\eps$ distinct lengths,
\item $\tw(i)=I$, $d(i)=d$, and $w(i)=w$ for each task $i\in\tilde{T}$,
\item there is a solution $(\tilde{T},\tilde{P}(\cdot))$ for $\tilde{T}$ such that
$\sum_{i\in\tilde{T}:e\in\tilde{P}(i)}d(i)\le\sum_{i\in T':e\in P'(i)}d(i)$
for each edge $e\in I$, 
\item 
there is a collection $\text{\ensuremath{\left\{ Q'_{b}\right\} }}_{b\in\tilde{B}}$ of pairwise disjoint subsets of $T'$ 
such that for each $b \in \tilde{B}$ the set $Q'_b$ fits into $b$
and $|Q'_b|=n_b$,
\item given a combination of a solution $(\tilde{S},\tilde{P}(\cdot))$
		for a subset $\tilde{S}\subseteq \tilde{T}$ and any collection of pairwise disjoint sets of tasks $\{ Q_{b}\}_{b\in\tilde{B}}$,
		in which for each box $b\in\tilde{B}$, the set $Q_{b}$ fits into~$b$, $|Q_{b}|=n_b$, and each task $i \in Q_b$ has a time window containing the leftmost edge of $I$; then there is an injective function $f:\tilde{S} \to \bigcup_{b\in\tilde{B}}Q_{b}$ such that $f(i)$ can be scheduled instead of $i$, i.e.,$|\tilde{P}(i)\cap \tw(f(i))|\geq p(f(i))$, for each $i\in \tilde{S}$.
\end{enumerate}
\end{restatable}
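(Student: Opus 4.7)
My plan is to apply linear grouping on task lengths. Sort $T'$ in non-increasing order of length and partition the sorted sequence into $k \coloneqq \lceil 1/\eps \rceil$ groups $G_1, G_2, \ldots, G_k$ of (roughly) equal size with $G_1$ containing the longest tasks; choose the sizes so that $|G_{j-1}| \ge |G_j|$ for all $j \ge 2$. Let $p_j^{\max}$ be the largest length in $G_j$, so $p_1^{\max} \ge \cdots \ge p_k^{\max}$; in particular every task in $G_{j-1}$ is at least as long as every task in $G_j$. Discard $G_1$ entirely, and for every $j \in \{2, \ldots, k\}$ and every original $i \in G_j$ create an artificial task $\tilde i$ of length $p_j^{\max}$ with $\tw(\tilde i) = I$, weight $w$, and demand $d$. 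Letting $\tilde G_j$ be the resulting artificials and $\tilde T \coloneqq \bigcup_{j \ge 2} \tilde G_j$ immediately gives $(1-\eps)|T'| \le |\tilde T| \le |T'|$, at most $k-1 \le 1/\eps$ distinct lengths, and the claimed $\tw, w, d$, covering properties~(1) and~(2).

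For property~(3), I would build the schedule $\tilde P$ by a shifting argument. Since $|G_{j-1}| \ge |G_j|$, fix an injection $\phi_j \colon \tilde G_j \hookrightarrow G_{j-1}$ for each $j \ge 2$, and set $\tilde P(\tilde i)$ to be the leftmost $p_j^{\max}$ edges of $P'(\phi_j(\tilde i))$. This prefix is well-defined because $p(\phi_j(\tilde i)) \ge p_{j-1}^{\min} \ge p_j^{\max}$, and it lies inside $I$ since $P'(\phi_j(\tilde i)) \subseteq I$; moreover $\tilde P(\tilde i) \subseteq I = \tw(\tilde i)$. At every edge $e$, the injectivity of $\phi_j$ together with $\tilde P(\tilde i) \subseteq P'(\phi_j(\tilde i))$ shows that the number of $\tilde G_j$-artificials using $e$ is at most the number of $G_{j-1}$-originals using $e$. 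Summing and telescoping over $j = 2, \ldots, k$ yields $|\{\tilde i \in \tilde T : e \in \tilde P(\tilde i)\}| \le |\{i' \in T' : e \in P'(i')\}|$, which is~(3) after multiplying by~$d$.

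For~(4) and~(5) the simplest construction is one box per artificial. For each $\tilde i \in \tilde T$ (with corresponding original $i \in G_j$) set $\pb(b_{\tilde i}) = \tilde P(\tilde i)$, $d(b_{\tilde i}) = d$, $w(b_{\tilde i}) = w$, $h(b_{\tilde i}) = d$, and $n_{b_{\tilde i}} = 1$; take $Q'_{b_{\tilde i}} = \{i\}$. The $Q'_b$'s are pairwise disjoint subsets of $T'$ because distinct artificials come from distinct originals, and $i$ fits into $b_{\tilde i}$ since $\tw(i) \supseteq I \supseteq \tilde P(\tilde i) = \pb(b_{\tilde i})$ and $p(i) \le p_j^{\max} = |\pb(b_{\tilde i})|$; this gives~(4). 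For~(5), given any valid collection $\{Q_b\}_{b \in \tilde B}$ define $f(\tilde i)$ to be the unique task in $Q_{b_{\tilde i}}$: $f$ is injective by pairwise disjointness of the $Q_b$'s, and the fit-in-box inequality $|\tw(f(\tilde i)) \cap \pb(b_{\tilde i})| \ge p(f(\tilde i))$ is, by the choice $\pb(b_{\tilde i}) = \tilde P(\tilde i)$, exactly the required substitution $|\tilde P(\tilde i) \cap \tw(f(\tilde i))| \ge p(f(\tilde i))$. The one point to handle carefully is the off-by-one in ensuring $|G_{j-1}| \ge |G_j|$ when $|T'|$ does not divide evenly into $k$ parts, which is absorbed either by inflating $k$ slightly or by a constant-size slack in the grouping sizes.
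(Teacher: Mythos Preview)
Your argument for property~(5) conflates two different schedules both called $\tilde P$. The $\tilde P$ in~(3) is one \emph{specific} schedule that you construct; the $\tilde P$ in~(5) is an \emph{arbitrary} solution $(\tilde S,\tilde P(\cdot))$ handed to you. (This is exactly how the lemma is used in the algorithm: the right subproblem returns \emph{some} schedule for the artificial tasks, and we must replace each artificial $\tilde i$ by a real task inside whatever path $\tilde P(\tilde i)$ that subproblem chose.) You set $\pb(b_{\tilde i})$ equal to the $\tilde P(\tilde i)$ from~(3), so the fit-in-box condition only tells you $|\tw(f(\tilde i))\cap \pb(b_{\tilde i})|\ge p(f(\tilde i))$ for that fixed subpath; it says nothing about the intersection of $\tw(f(\tilde i))$ with a \emph{different} length-$p_j^{\max}$ subpath of $I$. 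Concretely, if $\pb(b_{\tilde i})$ sits near the left of $I$ while the (5)-schedule places $\tilde i$ near the right, $\tw(f(\tilde i))$ can lie entirely on the left and miss $\tilde P(\tilde i)$ completely.

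The paper avoids this by anchoring each box at the \emph{rightmost} position: $\pb(b_r)$ is the rightmost subpath of $I$ of length $p_r$. Then the fit-in-box condition together with the hypothesis that $\tw(f(\tilde i))$ contains the leftmost edge of $I$ forces $\tw(f(\tilde i))$ to cover all of $I$ except at most the $p_r-p(f(\tilde i))$ rightmost edges, and hence to meet \emph{every} length-$p_r$ subpath of $I$ (in particular the arbitrary $\tilde P(\tilde i)$) in at least $p(f(\tilde i))$ edges. Your one-box-per-artificial idea can be repaired this way---set $\pb(b_{\tilde i})$ to be the rightmost $p_j^{\max}$ edges of $I$---but then (5) needs this sandwich argument, not the direct substitution you wrote. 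As a secondary remark, the paper uses only $1/\eps$ boxes (one per group rather than one per artificial task); the lemma as stated does not bound $|\tilde B|$, so your construction is formally admissible, but it would break the quasi-polynomial running time since $\tilde B$ must be guessed in each recursive call.
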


\begin{proof}
    Assume first $|T'| > 1/\eps$ and therefore $\lfloor\eps|T'|\rfloor \geq 1$.
    Sort the tasks in $T'$ in order of decreasing length (breaking ties arbitrarily) and partition them into $1/\eps$ sets $T'_{1},\ldots,T'_{1/\eps}$ such that $\lceil\eps|T'|\rceil \geq |T'_r| \geq |T'_{r+1}| \geq \lfloor\eps|T'|\rfloor$ for $r\in [1/\varepsilon-1]$.
    Let $p_{r} = \max_{i\in T'_r} p(i)$ for $r\in [1/\eps]$ and let $\tilde T_{r}$ consist of $|T'_r|$ artificial tasks, each with demand $d$, weight $w$, length $ p_{r}$ and time window $I$ for $r \in \{2, \dots, 1/\eps\}$. Let $\tilde T = \bigcup_{2 \leq r \leq 1/\eps} \tilde T_r$.
    Notice that $|\tilde T| = |T' \setminus T'_{1/\eps}| = |T'| - \lfloor\eps|T'|\rfloor \fab{\geq (1-\eps)} |T'|$.
    The first two claims of the lemma follow.
    
    Sort the tasks in $\tilde T$ in order of decreasing length, and let $g$ be the map that maps the first element of $\tilde T$ to the first element of $T'$, the second element of $\tilde T$ to the second element of $T'$ and so on.
    For $i\in \tilde T_r$, we have $g(i) \in T'_{r-1} \cup T'_{r-2}$, we have $p(i) =  p_{r} \leq \min_{j \in T'_{r-1} \cup T'_{r-2}}p_{j} \leq p(g(i))$.
    We construct the schedule $\tilde P(\cdot)$ of $\tilde T$ as follows: for every $i \in \tilde T_r$, $r \in \{2, \dots, 1/\eps\}$, let $\tilde P(i)$ be the interval of length $p_r=p(i)$ whose rightmost edge is the rightmost edge of $P'(g(i))$, thus $\tilde P(i) \subseteq P'(g(i))$.
    This means that for every $i \in \tilde T$ and $e\in I$, we obtain $d(i) = 0 \leq d(g(i))$ if $e\notin \tilde P(i)$ and $d(i) = d(g(i))$ otherwise.
    The third claim follows by summing up over $\sum_{2\leq r\leq 1/\eps} \sum_{j\in T'_r}$.

    We define the boxes in $\tilde B$ as follows: for every $r \in \{2, \dots, 1/\eps\}$, we define the box $b_r$ to have height $d \cdot |T'_r|$, demand $d$, weight $w$ and $\pb(b_r)$ being the rightmost subpath in $I$ of length $p_r$; let also $n_{b_r} = |T'_r|$.
    By construction, for every $r \in \{2, \dots, 1/\eps\}$, $T'_r$ fits in $b_r$. The fourth claim of the lemma follows by setting $Q'_{b_r} = T'_r$.

    We now describe the function $f$ in the fifth claim of the lemma.
    W.l.o.g.\ let $\tilde S = \tilde T$ since otherwise just take the restriction of $f$ to $\tilde S$.
    For every $r \in \{2, \dots, 1/\eps\}$, let $f_r$ be an arbitrary bijection from $\tilde T_r$ to $Q_{b_r}$ (which exists since $|\tilde T_r| = n_{b_r} = |Q_{b_r}|$) and $f$ be the union of the $f_r$'s (i.e., $f(i) = f_r(i)$ if $i \in \tilde T_r$).
    By construction, for $i \in \tilde T_r$, we have $p(i) = p_r = |\pb(b_r)| \geq p(f(i))$.
    \alex{As a task $f(i)\in Q_{b_r}$ can be placed in $b_r$ we obtain $|\tw(f(i))\cap \pb(b_r)|\geq p(f(i))$. As $\pb(b_r)$ consists of the $p_r$ rightmost edges of $I$ and $\tw(f(i))$ also contains the leftmost edge from $I$, it follows that $\tw(f(i))$ contains at least all edges from $I$ except the $p_r-p(f(i))$ rightmost ones. Thus $|\tilde P(i) \cap \tw(f(i))|=|\tilde{P(i)}|-|\tilde P(i) \setminus \tw(f(i))|\geq |\tilde{P}(i)|-|I \setminus \tw(f(i))|\geq p_r-(p_r-p(f(i)))=p(f(i))$.}
    
    If $|T'| \leq 1/\eps$, we can define $\tilde T$ as a copy of $T'$: for each task $i \in T'$ we add a task $i'$ to $\tilde T$ with weight $w$, demand $d$, length $p(i)$ and time window $I$.
    The first two claims trivially hold.
    For every $i \in T'$, we set $\tilde P(i') = P'(i)$ and create a box $b_{i'}$ with height $d$, demand $d$, weight $w$ and $\pb(b_{i'})$ being the rightmost subpath in $I$ of length $p(i)$; let also $n_{b_{i'}} = 1$. It follows that for any $b = b_{i'} \in \tilde B$, we can choose $Q'_{b} = \{i\}$.
    Notice now that any set $Q_b = Q_{b_{i'}}$ is a singleton $\{j\}$.
    The function $f$ is thus defined as $f(i') = j$ where $Q_{b_{i'}} = \{j\}$.
    The last three claims of the lemma follow from the same argument as above.
\end{proof}

We sketch now how we will apply
Lemma~\ref{lem:harmonic-grouping}. Since we are unable to guess each set $\overleftarrow{T}'_{w,d}$ at this stage, we rather guess the corresponding set of artificial tasks $\tilde{T}_{w,d}$ according to Lemma \ref{lem:harmonic-grouping}. Notice that this is doable in polynomial time (per pair $(w,d)$) since it is sufficient to guess the distinct lengths $p_{1},...,p_{1/\eps}$ and the respective number of tasks $\tilde{n}_{1},...,\tilde{n}_{1/\eps}$. We will pass on the tasks $\tilde{T}_{w,d}$ to our right subproblem as placeholders: intuitively, they will reserve some capacity on the right subproblem which will eventually be occupied by actual tasks (potentially the tasks $\overleftarrow{T}'_{w,d}$) computed in the left subproblem: this replacement exploits Property 5 of the lemma.
Also, we guess the boxes $\tilde{B}_{w,d}$ and the number $n_b \in \NN$ for each $b\in \tilde B_{w,d}$ corresponding to $\tilde{T}_{w,d}$, which also takes polynomial time. These boxes will be passed on to the left subproblem: intuitively, the tasks placed in these boxes in the left subproblem will replace the artificial tasks in $\tilde{T}_{w,d}$.

Let $\tilde{T}$ denote the union of all sets $\tilde{T}_{w,d}$
guessed in this way and let $\tilde{B}$ denote the union of all the sets of boxes
$\tilde{B}_{w,d}$ guessed in this way. We will call the tasks in $\tilde{T}$ \emph{artificial tasks.}

We will recurse on a left and a right subproblem, corresponding to
$I^{L}$ and $I^{R}$, respectively. For each box $b\in{B}$,
we intuitively guess how many input tasks from the left and the right
subproblem are assigned to $b$ in $\overline{\OPT}$. Formally, we
guess values $n_{b}^{L},n_{b}^{R}\in\N_{0}$ where $n_{b}^{L}$ is
the number of tasks $i\in\overline{\OPT}_{b}$ for which $\tw(i)\cap I^{L}\ne\emptyset$
and $n_{b}^{R}$ is the number of tasks $i\in\overline{\OPT}_{b}$
for which $\tw(i)\cap I^{L}=\emptyset$ (in particular, $n^L_b + n^R_b = n_b$). The input for the left subproblem
consists of
\begin{itemize}
\item the interval $I^{L}$, where each edge $e\in I^{L}$ has capacity
 $u'(e)-\sum_{b\in\bar{B}:e\in\pb(b)}h(b)$,
\item the task set $\left\{ i\in T':\tw(i)\cap I^{L}\ne\emptyset\right\} $,
\item the boxes $B\cup\bar{B}\cup \tilde{B}$, the value $n_{b}$ for each
box $b\in \bar B\cup\tilde{B}$, and the value $n_{b}^{L}$ for each box
$b\in B$.
\end{itemize}
The input for the right subproblem consists of
\begin{itemize}
\item the interval $I^{R}$, where each edge $e\in I^{R}$ has capacity $u'(e)-\sum_{b\in\bar{B}:e\in\pb(b)}h(b)$,
\item the task set $\left\{ i\in T':\tw(i)\cap I^{L}=\emptyset\right\} \cup\tilde{T}$,
\item the boxes $B$ and the value $n_{b}^{R}$ for each box $b\in B$.
\end{itemize}

Suppose that recursively we computed a solution to the left and the
right subproblem. We combine them to a solution to the (given) subproblem
corresponding to $I$ as follows. Let $Q_{I^{L}},\left\{ Q_{b}\right\} _{b\in B\cup\bar{B}\cup\tilde{B}}$
denote the computed tasks in the left subproblem, and let $Q'_{I^{R}},\left\{ Q'_{b}\right\} _{b\in B}$
denote the computed tasks in the right subproblem. For each task $i$
in these computed sets of tasks, the returned solutions include a
computed path $P(i)$. We compute the injective function $f$ due
to the last property of Lemma~\ref{lem:harmonic-grouping}
by solving a $b$-matching instance. If such a function $f$
does not exist, we simply discard the considered combination
of guesses. 

We want to return a solution that consists of the tasks
in $Q_{I^{L}}\cup\left(\bigcup_{b\in\bar{B}}Q_{b}\right)\cup(Q_{I^{R}}\setminus\tilde{T})$
and additionally a set of tasks $\tilde{Q}$ that we use to replace
the artificial tasks in $\tilde{T}$ in the solution $Q_{I^{R}}$.
For each task $i\in Q_{I^{L}}\cup\left(\bigcup_{b\in\bar{B}}Q_{b}\right)\cup(Q_{I^{R}}\setminus\tilde{T})$
we keep the path $P(i)$ that we obtained from the solution of the left or right
subproblem, respectively. We want to replace the tasks in $Q_{I^{R}}\cap\tilde{T}$ and, to this end, we 
compute the set of tasks $\tilde{Q}\coloneqq f(Q_{I^{R}}\cap\tilde{T})\subseteq\bigcup_{b\in\tilde{B}}Q_{b}$
and define paths for them as follows. For each
task $i\in Q_{I^{R}}\cap\tilde{T}$ we schedule the task $f(i)\in \tilde{Q}$
within the edges of $P(i)$, i.e., we compute a path $P(f(i))$
for $f(i)$ such that $P(f(i))\subseteq P(i)$. This
is possible since $|P(i)\cap\tw(f(i))|\ge p(f(i))$ due to the last property
of 
Lemma~\ref{lem:harmonic-grouping}.
Notice that this replacement does not violate the edge
capacities since we omit 
$i\in\tilde{T}$ from our solution and $d(i)=d(f(i))$. 

This yields a candidate solution consisting of
the tasks
$Q_{I}\coloneqq Q_{I^{L}}\cup\left(\bigcup_{b\in\bar{B}}Q_{b}\right)\cup(Q_{I^{R}}\setminus\tilde{T})\cup\tilde{Q}$ and
for each box $b\in B$ the tasks in $Q_{b}\cup Q'_{b}$
and their respective computed paths $P(i)$.
Each combination of our quasi-polynomial number of guesses yields
a candidate solution. We reject a candidate if the resulting solution
is infeasible; among the remaining candidate solution, we return the
solution that maximizes $w(Q_{I})$. Recall here that $\tilde{Q}$
are real tasks that we used to replace the artificial tasks $\tilde{T}$; we count the profit of the tasks in $\tilde{Q}$ but
not the profit of $\tilde{T}$. Also, we do not consider the profit
of the tasks in $B$.

\subsection{Running time}

\ant{We bound the running time of the algorithm.}

\begin{restatable}{lemma}{runningTime}
\label{lem:runningTime}
\begin{sloppypar}
Let $W$ be the number of different profits, and let $D$ be the number of different demands of the tasks in our \twUFP{} instance.
Our algorithm runs in time $O((mn)^{O(WD\log^2 m )/\eps^2})$
and its returned solution is always feasible.
\end{sloppypar}
\end{restatable}


\ant{We first need the following lemma to bound the number of ``guessable'' boxes at each stage of the algorithm.}

\begin{restatable}{lemma}{numberOfBoxes}
\label{lem:numberOfBoxes}
There are at most $\alex{m^2 \cdot n}$ feasible boxes with a given demand $d$ and profit $w$.
\end{restatable}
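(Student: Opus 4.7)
The plan is to simply count the number of degrees of freedom in the definition of a box $b$ once the demand $d(b)=d$ and the weight $w(b)=w$ are fixed. Recall that a box is characterized by four parameters: a subpath $\pb(b)\subseteq E$, a height $h(b)>0$, a demand $d(b)$, and a weight $w(b)$. So with $d$ and $w$ already fixed, only $\pb(b)$ and $h(b)$ remain to be chosen.

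First I would bound the number of admissible subpaths $\pb(b)\subseteq E$. Since $E$ is a path with $m$ edges, any subpath is determined by choosing its leftmost and rightmost edges, which yields at most $\binom{m}{2}+m \le m^2$ possibilities. Next I would bound the number of admissible heights: by the convention imposed right after the definition of a box in the root subproblem, $h(b)$ must be a positive integer multiple of $d(b)=d$ with $h(b)\le n\cdot d$, and hence $h(b)\in\{d,2d,\dots,n\cdot d\}$, giving at most $n$ values.

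Multiplying the two independent counts yields at most $m^2\cdot n$ feasible boxes with a prescribed demand $d$ and prescribed weight $w$, as claimed. There is no real obstacle here; the statement is essentially a direct consequence of the definitions. The only thing worth double-checking is that no further structural constraint on $\pb(b)$ is implicit at this counting stage (e.g., that it contain some specific edge): indeed, all such constraints appear later when tasks are assigned to the box (via the conditions defining ``fits inside''), but do not restrict which boxes are \emph{a priori} feasible, so the bound $m^2\cdot n$ is tight for the purposes of enumeration in the algorithm.
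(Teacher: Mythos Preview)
Your proposal is correct and follows essentially the same approach as the paper: count the at most $m^2$ choices for the subpath (via its two endpoints) and the at most $n$ choices for the height (as a multiple of $d$ bounded by $n\cdot d$), then multiply. The paper's proof is almost verbatim the same counting argument.
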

\begin{proof}
    Each box is specified by a path, i.e., its left-most and right-most edge, by a height, demand and weight. \alex{There are at most $m$ possibilities for the left-most and $m$ possibilities for the right-most edge and the height is an integer multiple of $d$ and at most $n\cdot d$, thus there are $n$ possibilities for the height.}
    Thus, there are at most $m^2 \cdot n$ boxes with a given demand and profit that have capacity not exceeding the maximum capacity on the path. 
\end{proof}

\begin{proof}[\ant{Proof of Lemma~\ref{lem:runningTime}}]
The total running time of the algorithm is upper bounded by $K \cdot T \cdot \poly(mn)$, where $K$ is the total number of recursive calls and T is the running time needed to solve the base case\fab{s}. 
Let us first bound the number of guesses we need to check in each recursive call. Recall that in each subproblem, we guess boxes $\bar B$ with a number $n_b$ for each $b \in \bar B$, the values $p_1, \dots, p_{1/\eps}$ and numbers $n_1, \dots, n_{1/\eps}$, the boxes $\tilde B$ and a number $n_b$ for $b\in \tilde B$, and for each $b\in B$ we guess two numbers $n^L_b$ and $n^R_b$, \fab{with $n_b=n^L_b+n^R_b$}.

Note that at each recursive call, the number of boxes for each \fab{pair} $(w,d)$ increases by at most $2/\eps\alex{^2}$. (At most $1/\eps\alex{^2}$ for $\bar B_{w,d}$ and at most $1/\eps$ for $\tilde B_{w,d}$.) Thus at every recursive call there are at most $2/\eps\alex{^2} \cdot \log_2 m$ different boxes with a given demand and profit. 

For a given subproblem and for a given \fab{pair} $(w,d)$: there are at most $m^{2/\eps^2}  n^{1/\eps^{2}}$ possible values for the set of boxes $\bar B_{w,d}$ \ant{by Lemma~\ref{lem:numberOfBoxes}} and for each possible value we pick one number $n_b \in [n]$ representing the number of tasks, i.e., $m^{2/\eps^2}  n^{1 + 1/\eps^{2}}$ \fab{combinations in total}. 
\edi{Similarly, for $\tilde B$ and the corresponding numbers, there are at most $m^{2/\eps} n^{1+1/\eps}$ possible values for the combination of a set of boxes $\tilde B_{w,d}$ and a number $n_b \in [n]$.} 
There are at most $n^{1/\eps}$ possible tuples for values $p_{j}$ and at most $n^{1/\eps}$ possible numbers $n_{j}$.
Since the number of boxes is bounded by $2/\eps\alex{^2} \cdot  \log_2 m$ at every subproblem, there are at most $n^{2/\eps\alex{^2} \cdot \log_2 m}$ choices for $n^L_b$ \edi{(recall that $n^R_b = n_b - n^L_b$).
In total, the number of guesses need for each \fab{pair} $(w,d)$ at each recursive call is bounded by $m^{2/\eps^2 + 2/\eps} \cdot n^{2+ 1/\eps^2 + 1/\eps} \cdot n^{2/\eps} \cdot n^{2/\eps^2 \cdot \log_2 m} \le \left(mn\right)^{8/\eps^2 \cdot \log_2 m}$ (assuming $\eps \le 1$ and $\log_2 m \ge 1/\eps$). 
Thus, the total number of choices considered over all \fab{pairs} $(w,d)$  in any subproblem is $O((mn)^{8 WD\log m/\eps^2})$. Since there are at most $\log m$ recursive levels, one has $K=O((mn)^{O(WD\log^2 m)/\eps^2})$.
Since $T = n^{O\left(WD \right)}$ we obtain the lemma.}
\end{proof}

\subsection{Approximation guarantee}

\ant{We bound the approximation ratio of our algorithm.}

\begin{restatable}{lemma}{approximationRatio}
\label{lem:approximation-ratio}The approximation ratio of our algorithm
is bounded by $2+O(\eps\log m)$.
\end{restatable}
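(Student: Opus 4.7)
The plan is to prove by induction on the level $\ell$ of $I$ (from leaves to root) the invariant that, for every subproblem on an interval $I$ of level $\ell$, the algorithm returns a feasible solution of direct profit at least $(1-c\eps)^{\log m - \ell}\cdot w(\overline{\OPT}(I))$, for a suitable absolute constant $c$, where $\overline{\OPT}(I)$ denotes the optimum left-constrained direct profit of the subproblem. Combined with the root-level binary guess, which guarantees $w(\overline{\OPT}_r)\ge \frac{1}{2}w(\OPT)$, and the Bernoulli-type inequality $(1-c\eps)^{\log m}\ge 1-c\eps\log m$, this yields an approximation ratio of $2+O(\eps\log m)$; the final rescaling $\eps\mapsto \eps/\log m$ then produces the desired $2+O(\eps)$ bound. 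The base case $|I|=1$ is exact by Lemma~\ref{lem:base-case}.

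For the inductive step on a subproblem with interval $I$ of level $\ell<\log m$, I would fix an optimal left-constrained $(\overline{\OPT},P^*(\cdot))$, partition $\overline{\OPT}_I=\overline{\OPT}^{(L)}\cup\overline{\OPT}^{(M)}\cup\overline{\OPT}^{(R)}$ according to the position of $P^*(i)$ relative to $\mi(I)$, and split $\overline{\OPT}^{(R)}=\overline{\OPT}^{(R)}_{\leftarrow}\sqcup\overline{\OPT}^{(R)}_{\mathrm{res}}$ with $\overline{\OPT}^{(R)}_{\leftarrow}=\overline{\OPT}^{(R)}\cap\overleftarrow{T}'$. I would then identify the ``correct'' guess that our enumeration certainly tries: $(\bar B,\{n_b\})$ obtained by applying Lemma~\ref{lem:boxPacking} per $(w,d)$-group of $\overline{\OPT}^{(M)}$; $(\tilde T,\tilde B,\{n_b\})$ obtained by applying Lemma~\ref{lem:harmonic-grouping} per $(w,d)$-group of $\overline{\OPT}^{(R)}_{\leftarrow}$ together with their original paths inside $I^R$; and $n^L_b,n^R_b$ matching the actual split of $\overline{\OPT}_b$ between windows that meet $I^L$ and windows disjoint from $I^L$. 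For this guess, both child subproblems admit explicit feasible solutions: in the left one take $Q_{I^L}=\overline{\OPT}^{(L)}$ with its original paths (feasible by Lemma~\ref{lem:boxPacking}(1), which ensures that $\bar B$ consumes no more edge-capacity than $\overline{\OPT}^{(M)}$), and fill $\bar B$, $\tilde B$ and the outer $B$ respectively with Lemma~\ref{lem:boxPacking}'s partition of $\overline{\OPT}^{(M)}$, the sets $\{Q'_b\}$ from Lemma~\ref{lem:harmonic-grouping}(4), and the $n^L_b$ window-on-left tasks of $\overline{\OPT}_b$; in the right one take $Q_{I^R}=\overline{\OPT}^{(R)}_{\mathrm{res}}\cup\tilde T$ with $\tilde T$ scheduled via $\tilde P(\cdot)$ from Lemma~\ref{lem:harmonic-grouping}(3) (whose per-edge demand is dominated by that of $\overline{\OPT}^{(R)}_{\leftarrow}$), and fill $B$ with the remaining $n^R_b$ tasks. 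The two subproblem optima are therefore at least $w(\overline{\OPT}^{(L)})$ and $w(\overline{\OPT}^{(R)}_{\mathrm{res}})+(1-\eps)w(\overline{\OPT}^{(R)}_{\leftarrow})$, respectively.

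To close the induction I would apply the inductive hypothesis to the two children (each of level $\ge\ell+1$), use $w(\tilde Q)=w(Q_{I^R}\cap\tilde T)$ (since the replacement function $f$ of Lemma~\ref{lem:harmonic-grouping}(5) preserves weights, so that $w(Q_I)=w(Q_{I^L})+\sum_{b\in\bar B}w(Q_b)+w(Q_{I^R})$), and invoke Lemma~\ref{lem:boxPacking}(2) to obtain $\sum_{b\in\bar B}w(Q_b)\ge (1-O(\eps))w(\overline{\OPT}^{(M)})$. Summing these three lower bounds and picking $c$ large enough to absorb the single-step factor $(1-O(\eps))$ into one extra power of $(1-c\eps)$ closes the inductive step. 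The main obstacle I anticipate is the bookkeeping around the replacement function $f$: one must check that for the constructed guess the right subproblem includes all of $\tilde T$, that $\bigcup_{b\in\tilde B}Q_b$ delivered by the left subproblem contains enough real tasks with the matching $(w,d)$-profile, and that scheduling each $f(i)$ inside $P(i)$ preserves feasibility, all of which follow from properties~(4) and (5) of Lemma~\ref{lem:harmonic-grouping} together with $d(f(i))=d(i)$. The root case is analogous but without the $\tilde T$ mechanism; the extra factor of $2$ there appears only once, outside the induction, contributing only the additive $O(\eps\log m)$ term that is absorbed in the final rescaling.
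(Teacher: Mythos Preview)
Your proposal is correct and takes essentially the same approach as the paper. The paper factors the induction into two auxiliary lemmas---one exhibiting a non-discarded guess together with feasible left-constrained solutions for both children whose combined profit is at least $(1-O(\eps))w(\overline{\OPT}_I)$, and one verifying that the combination step is feasible and preserves profit exactly---while you fold both parts into a single inductive pass; the substance, in particular the use of Lemmas~\ref{lem:boxPacking} and~\ref{lem:harmonic-grouping} to build the ``correct'' guess and the replacement argument via property~(5), is identical.
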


To prove the lemma, we show that the output of the algorithm is within factor $(1+\eps \log m)$ of the optimal \emph{left constrained} solution. 
Recall, that a solution is left constrained if for each task $i\in T$ of level $\ell$ and the unique interval $I$ of level $\ell$ containing it, then $i$ is allowed to be scheduled either on the left of $\mi(I)$ or such that $\mi(I)$ is in the interior of the task $i$.

At each recursive call of the algorithm we create a set of artificial tasks, this makes comparing the approximation guarantee to an optimal solution left constrained solution $\overline \OPT_r$ involved and confusing.
So, we will perform a simpler analysis in two steps. 
Namely, first we show that if we are given an optimal left constrained solution $\overline \OPT$ in a subproblem, then its subproblems admit feasible left constrained solutions of the profit that is only by \fab{a} $1+ O(\eps)$ factor smaller that the value of $\overline \OPT$. 
The second part shows that each subproblem indeed recovers an almost optimal solution, i.e., a feasible solution within \fab{a} $1+ O(\eps)$ factor of the optimum.
Applying the two \fab{claims} inductively over the recursive tree proves Lemma \ref{lem:approximation-ratio}.

\begin{lemma}\label{lem:profit1}
Given an optimal left constrained solution $(\overline{\OPT}_I, \{\overline{\OPT}_b\}_{b\in B}, P\fab{(\cdot)})$ in subproblem $I$, there exists a \alex{non-discarded} configuration, with boxes over $I$ denoted by $\{\overline B_{w,d}\}_{w,d}$ and numbers $\{n_b\}_{b \in \overline B_{w,d}}$, numbers $n_b^L$ and $n_b^R$ for $b\in B$, and feasible left constrained solutions $(Q_{I^R}, \{Q^R_b\}_{b\in B^R}, P^R\fab{(\cdot)})$ and $(Q_{I^L}, \{Q^L_b\}_{b\in B^L}, P^L\fab{(\cdot)})$ in the right and left subinstance, respectively, 
such that $w(Q_{I^R}) + \sum_{b \in \overline B_{w,d}} n_b \cdot w(b)  + w(Q_{I^L})\ge (1-O(\eps)) w(\overline \OPT_I) $.  
\end{lemma}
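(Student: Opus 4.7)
The plan is to produce the required configuration directly from $\overline{\OPT}_{I}$, then verify feasibility, non-discard, and the profit bound.

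First, I would partition $\overline{\OPT}_{I}$ into the subsets $\overline{\OPT}^{(L)}, \overline{\OPT}^{(M)}, \overline{\OPT}^{(R)}$ according to how the scheduled paths relate to $\mi(I)$. For each weight-demand pair $(w,d)$ I would apply Lemma~\ref{lem:boxPacking} to the middle slice $\overline{\OPT}^{(M)} \cap T_{w,d}$ (whose paths all contain an edge incident to $\mi(I)$) to obtain the boxes $\bar{B}_{w,d}$, the assignments $\overline{\OPT}^{(M)}_{b}$ with $n_{b} := |\overline{\OPT}^{(M)}_{b}|$, and a $(1-O(\eps))$ weight coverage. For each $(w,d)$ I would also apply Lemma~\ref{lem:harmonic-grouping} to $\overleftarrow{T}' \cap \overline{\OPT}^{(R)} \cap T_{w,d}$ with hosting interval $I^{R}$ to obtain the artificial tasks $\tilde{T}_{w,d}$, boxes $\tilde{B}_{w,d}$, and witness sets $Q'_{b}$. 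Finally, for each inherited box $b \in B$ I would split $\overline{\OPT}_{b}$ into the part whose time windows meet $I^{L}$ (of size $n_{b}^{L}$) and the rest (of size $n_{b}^{R}$). All these quantities are enumerated by the algorithm, so this is one of the actually guessed configurations.

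Second, I would define the two subproblem solutions. The right one takes $Q_{I^{R}} = (\overline{\OPT}^{(R)} \setminus \overleftarrow{T}') \cup \tilde{T}$ (with original paths on the real tasks and Lemma~\ref{lem:harmonic-grouping}(3) schedules on the artificial tasks) and $Q^{R}_{b} = \{i \in \overline{\OPT}_{b} : \tw(i) \cap I^{L} = \emptyset\}$ for $b \in B$. The left one takes $Q_{I^{L}} = \overline{\OPT}^{(L)}$ with the original schedules, the symmetric split $Q^{L}_{b}$ for $b \in B$, $Q^{L}_{b} = \overline{\OPT}^{(M)}_{b}$ for $b \in \bar{B}$, and $Q^{L}_{b} = Q'_{b}$ for $b \in \tilde{B}$; tasks placed into boxes in $\bar{B} \cup \tilde{B}$ are rescheduled inside $\tw(i) \cap \pb(b)$. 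Capacity feasibility on each edge follows by stacking the original feasibility of $\overline{\OPT}$, Lemma~\ref{lem:boxPacking}(1), and Lemma~\ref{lem:harmonic-grouping}(3); box-fitting comes from parts (3) and (4) of the respective lemmas. Non-discard reduces to the existence of the map $f$ of Lemma~\ref{lem:harmonic-grouping}(5), whose hypothesis holds because $Q'_{b} \subseteq \overleftarrow{T}'$ guarantees $I^{R} \subseteq \tw(i)$, so every task in $Q'_{b}$ has a time window containing the leftmost edge of $I^{R}$.

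Third, I would verify the profit bound. Using Lemma~\ref{lem:boxPacking}(2) for the middle and Lemma~\ref{lem:harmonic-grouping}(1)--(2) for $\tilde{T}$ (same per-task weight, losing only an $\eps$ fraction of the count), one gets $w(Q_{I^{L}}) + w(Q_{I^{R}}) + \sum_{w,d}\sum_{b \in \bar{B}_{w,d}} n_{b} w(b) \geq w(\overline{\OPT}^{(L)}) + w(\overline{\OPT}^{(R)} \setminus \overleftarrow{T}') + (1-\eps) w(\overleftarrow{T}' \cap \overline{\OPT}^{(R)}) + (1-O(\eps)) w(\overline{\OPT}^{(M)}) \geq (1-O(\eps)) w(\overline{\OPT}_{I})$, as required.

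The trickiest part I expect is ensuring that the constructed subproblem solutions are left-constrained despite the rescheduling inside boxes. Tasks whose schedules do not change inherit the property from $\overline{\OPT}$, and artificial tasks are exempt; middle tasks placed in $\bar{B}$ can be rescheduled to still cross $\mi(I)$ because $\pb(b) \ni \mi(I)$ and $|\tw(i) \cap \pb(b)| \geq p(i)$. The subtle case is a task $i \in \overleftarrow{T}' \cap \overline{\OPT}^{(R)}$ relocated into $\tilde{B}$: the key observation is that $\overline{\OPT}$ being left-constrained together with the original $P(i) \subseteq I^{R}$ forces $\ell(i) < \ell(I)$, and moreover the level-$\ell(i)$ interval $J$ containing $\tw(i)$ must satisfy $I \subseteq J^{L}$ (otherwise left-constraint fails already for $\overline{\OPT}$), so any reschedule within $\pb(b) \subseteq I^{R} \subseteq I \subseteq J^{L}$ stays outside $J^{R}$.
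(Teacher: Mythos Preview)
Your proposal is correct and follows essentially the same approach as the paper's proof: apply Lemma~\ref{lem:boxPacking} to $\overline{\OPT}^{(M)}$ to get $\bar{B}$, apply Lemma~\ref{lem:harmonic-grouping} to $\overleftarrow{T}'\cap\overline{\OPT}^{(R)}$ to get $\tilde{T}$ and $\tilde{B}$, split $\overline{\OPT}_b$ into left/right pieces for the inherited boxes, and read off the profit bound from parts~(2) and~(1) of the two lemmas. Your final paragraph, which verifies that the rescheduled tasks remain left constrained (in particular the observation that $i\in\overleftarrow{T}'\cap\overline{\OPT}^{(R)}$ forces $\ell(i)<\ell(I)$ and $I\subseteq J^{L}$), is a point the paper's proof glosses over, so your write-up is in fact slightly more careful there.
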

\alexnote{I added the part about not discarding these guesses}
\begin{proof} 
We argue that our guessed boxes and artificial tasks created satisfy the lemma.

First, we show that for a chosen set of boxes $\bar B$ there are sets of tasks in the left subinstance that fit into them. 

Let $\overline{\OPT}^{(L)},\overline{\OPT}^{(M)},$ and $\overline{\OPT}^{(R)}$ as in the algorithm, i.e., the subset of $\overline{\OPT}_I$ scheduled left of $\mi(I)$, containing $\mi(I)$ in the interior, and scheduled right of $\mi(I)$.
By Lemma \ref{lem:boxPacking}, for our choice of boxes $\{\bar B_{w,d}\}_{w,d}$ there exists a subset $S\subseteq \overline{\OPT}^M$ of tasks and a partition of $S$ into $\{S_b\}_{b\in \bigcup_{w,d} \bar B_{w,d}}$ such that the total demand of all boxes does not exceed the demand needed to place the tasks $\overline \OPT ^M$, $w(S) \ge (1 - O(\eps)) w(\overline{\OPT}^M)$, and such that for each $b\in \bigcup_{w,d }\bar B_{w,d}$ the set $S_b$ fits into $b$. 

All of the boxes $\{\bar B_{w,d}\}_{w,d}$ are passed to the left subinstance. Since all the tasks whose time-window contains an edge of $I^L$ are in the left subinstance, so are all the tasks in $S$. 
Naturally, for each $b \in \bigcup_{w,d }\bar B_{w,d}$, we set the corresponding set $Q^L_b$ to $S_b$. This is a valid choice for $n_b \coloneqq |S_b|$.

Finally, as mentioned earlier by Lemma \ref{lem:boxPacking}, total profit from the tasks $\{S_b\}_{b \in \bigcup_{w,d }\bar B_{w,d} }$ is at least $(1-O(\eps))$ fraction of the profit of $\overline{\OPT}^M$.

Next, we show that for a certain choice of numbers $n_b^L$ and $n_b^R$, there are feasible sets $\{Q^L_b\}_{b\in B}$ and $\{Q_b^R\}_{b\in B}$, i.e., for each box $b \in B$ at $I$, there is a choice of  $n_b^L$ and $n_b^R$ summing to $n_b$ such that there are sets $Q^L_b$, $|Q^L_b| = n_b^L$ and $Q^R_b$, $|Q^R_b| = n_b^R$ among the tasks in the left and right subproblem respectively, such that both $Q^L_b$ and $Q^R_b$ fit into $b$.
As we are free to chose $n_b^L$ and $n_b^R$, we can simply set $Q^L_b$ to be the subset of tasks in $\overline \OPT_b$ that are also tasks in the left subproblem, and similarly we take $Q^R_b$ to be the subset of tasks of $\overline \OPT_b$ that are also in the right subproblem. This is valid for $n_b^L \coloneqq |Q^L_b| $ and $n_b^R \coloneqq |Q^R_b| $.
Since every task $i\in T'\setminus \overline \OPT_I$ is either a task in the left or the right subinstance we have $n_b^L+ n_b^R = n_b$.

$Q^L_I$ is set to be exactly the same as $\overline{\OPT}^{L}$, with  $P^L(i) = P (i)$ for each $i \in Q^L_I$. Clearly, the profits of those the sets $Q^L_I$ and 
$\overline{\OPT}^{L}$ are the same. 

It is left to show, how to choose the boxes $\tilde B$, and to show that for a choice of them there are sets $\{Q_b\}_{b\in \tilde B}$ fitting in those boxes, as well as to define  $Q^R_I$ and the respective schedule $P^R$. Here, we will rely on the set $\overline{\OPT}^{R}$ and Lemma \ref{lem:harmonic-grouping}.

We consider two subsets of $\overline{\OPT}^{R}$.
Let $H$ be the subset of tasks in $\overline{\OPT}^{R}$ whose time-window contains $I^R$ strictly (i.e., $H = \overleftarrow{T}' \cap \overline{\OPT}^{R}$). 
\alex{Let $R$ be the other tasks in $\overline{\OPT}^{R}$, i.e. $R=\overline{\OPT}^{R} \setminus H$.} 

The set $Q^R_I$ \alex{is} defined as the union of $R$ and a set of artificial tasks obtained by applying the linear grouping lemma to the tasks $H$. In particular, the tasks in $H$ cannot be a part of $Q^R_I$ as they are passed to the left subproblem. 
Formally, for all tasks of type $w,d$ in $H$, denoted by $H_{w,d}$  let $\tilde H_{w,d}$ be the set of artificial tasks given by applying Lemma \ref{lem:harmonic-grouping} to $H_{w,d}$. 
Then, we set $Q^R_I \coloneqq R \cup \left( \bigcup_{w,d} \tilde H_{w,d}\right)$.
By the third bullet point of the lemma, the tasks $\tilde H_{w,d}$ scheduled by $\tilde P(\cdot)$ can be scheduled instead of $H_{w,d}$ scheduled by $P$ without violating the capacity constraints. 
Formally, we set $P^R(i)$ to $\tilde P(i)$ for each $i\in H_{w,d}$.
By the first bullet point of Lemma \ref{lem:harmonic-grouping} and since all tasks in $H_{w,d}$ have the same profit, we also have $w(\tilde H_{w,d}) \ge (1 - O(\eps)) w(H_{w,d})$. 
Combining with the above, we can already conclude that $w(Q_{I^R}) + \sum_{b \in \overline B_{w,d}} n_b \cdot w(b)  + w(Q_{I^L})\ge (1-O(\eps)) w(\overline \OPT_I)$.

Let $\tilde B$ and $\{n_b\}_{b \in \tilde B}$ the corresponding box and numbers obtained by the above application of Lemma \ref{lem:harmonic-grouping} (these are passed to the left subinstance as boxes). Recall that by construction, all the tasks in $H$ are passed to the left subproblem.
It is left to identify sets that fit into the boxes $\tilde B$. 
Here, we use the fourth bullet point of the lemma.\anote{refer to number instead}
It says that for each $w,d$, there is a collection of tasks $\{Q'_{b}\}_{b \in \tilde B}$ of tasks in $H_{w,d}$ such that $Q'_b$ fits into $b\in \tilde B$ and has size $n_b$.
\alex{Thus we can set $Q^L_b = Q'_b$ for $b \in \tilde B$.
It remains to show that this configuration is not discarded. Consider a task $i$ from the input of the left subproblem. Every task in the input of the left subproblem has either $\tw(i)\subseteq I^L$ or $\tw(i)$ contains the rightmost edge from $I^L$. Thus for $i\in Q^L_b$, $\tw(i)$ contains the rightmost edge from $I^L$. As $\pb(b)\subseteq I^R$ for such a box, $\tw(i)$ also contains edges from $I^R$. This implies that $\tw(i)$ also contains the leftmost edge from $I^R$. Thus the last bullet point of Lemma \ref{lem:harmonic-grouping} applies and yields that this configuration is not discarded, thus completing the proof. }
\end{proof}
\begin{lemma}\label{lem:feasibility}
For a given \alex{non-discarded} configuration, given a
solution for the right and left subinstance $(Q_{I^R}, \{Q^R_b\}_{b\in B^R}, P^R\fab{(\cdot)})$ and $(Q_{I^L}, \{Q^L_b\}_{b\in B^L}, P^{\fab{L}}\fab{(\cdot)})$
the obtained candidate solution $(Q_{I}, \{Q_b\}_{b\in B}, P\fab{(\cdot)})$ is feasible;
moreover $w(Q_I) =  w(Q_{I^R}) + \sum_{b\in \bigcup_{w,d} \bar B_{w,d}}n_b \cdot w(b) + w(Q_{I^L})$.
\end{lemma}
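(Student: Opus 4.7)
The plan is to prove both the feasibility and the weight identity by tracking, piece by piece, how the candidate solution $Q_I$ is assembled and invoking the guarantees of the recursive subsolutions together with Lemma~\ref{lem:boxPacking} and Lemma~\ref{lem:harmonic-grouping}. Recall that by construction
\[
Q_I = Q_{I^L} \cup \Big(\bigcup_{b\in\bar{B}} Q_b\Big) \cup (Q_{I^R}\setminus\tilde{T}) \cup \tilde{Q},
\]
where $\tilde{Q} = f(Q_{I^R}\cap\tilde{T})$ and $f$ is the injection provided by property~(5) of Lemma~\ref{lem:harmonic-grouping}. I would first check that every task in $Q_I$ comes equipped with a valid schedule $P(i) \subseteq \tw(i)\cap I$: tasks in $Q_{I^L}$ and in $Q_{I^R}\setminus\tilde{T}$ inherit their paths from the recursive left/right solutions and lie entirely in $I^L$ resp.\ $I^R$; tasks in $\bigcup_{b\in\bar{B}} Q_b$ are scheduled inside $\pb(b)\subseteq I$ by the box-fitting property; and for each $j=f(i) \in \tilde{Q}$, one chooses $P(j) \subseteq P(i)$ of length $p(j)$, which is possible precisely because $|P(i) \cap \tw(f(i))|\geq p(f(i))$ by Lemma~\ref{lem:harmonic-grouping}(5). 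Since $d(j)=d(i)$, this substitution preserves the demand profile of the right subsolution on every edge of $I^R$.

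Next I would verify the edge-capacity constraints on $I$. For $e\in I^L$, the recursive solution to the left subproblem guarantees
\[
\sum_{i\in Q_{I^L}: e\in P(i)} d(i) \leq u'(e) - \sum_{b\in\bar{B}:\, e\in \pb(b)} h(b),
\]
while the contribution of $\bigcup_{b\in\bar{B}}Q_b$ on $e$ is at most $\sum_{b\in\bar{B}:\, e\in \pb(b)} h(b)$ by the definition of ``fitting inside a box.'' Adding the two, the total demand used on $e$ is at most $u'(e)$. For $e\in I^R$ the same argument applies to the right subsolution, with the additional observation that replacing the artificial tasks $Q_{I^R}\cap\tilde{T}$ by their images under $f$ does not increase the demand on any edge, since $P(f(i))\subseteq P(i)$ and $d(f(i))=d(i)$. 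Finally, for each outer box $b\in B$, the set $Q_b\cup Q'_b$ has cardinality $n_b^L+n_b^R=n_b$, and it fits into $b$ because both recursive subsolutions return subsets that individually fit into $b$ and are composed of tasks with weight $w(b)$ and demand $d(b)$ whose time windows each intersect $\pb(b)$ in at least $p(i)$ edges; this establishes all box constraints and thereby the feasibility of the candidate.

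For the weight identity, I would argue that the only nontrivial piece is $w(\tilde{Q}) = w(Q_{I^R}\cap \tilde{T})$. This follows because the artificial tasks in $\tilde{T}$ are grouped by weight $w$ and demand $d$, the boxes in $\tilde{B}_{w,d}$ inherit weight $w$, and $f$ maps each artificial task of weight $w$ to a task of the same weight placed in some box of $\tilde{B}_{w,d}$. Thus
\[
w(Q_{I^R}\setminus \tilde{T}) + w(\tilde{Q}) = w(Q_{I^R}\setminus\tilde{T}) + w(Q_{I^R}\cap\tilde{T}) = w(Q_{I^R}).
\]
Combined with $w\big(\bigcup_{b\in \bar{B}} Q_b\big) = \sum_{b\in\bar{B}} n_b\cdot w(b)$ and adding $w(Q_{I^L})$, this yields the claimed identity.

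The main obstacle I anticipate is the bookkeeping around the replacement of artificial tasks on $I^R$: we must certify simultaneously that (i) the replacement is well-defined (i.e.\ the guess is \emph{not} discarded, so $f$ exists), (ii) each substituted path fits inside the window of the real task, and (iii) the demand profile on $I^R$ does not exceed what the right subsolution certified. All three follow from Lemma~\ref{lem:harmonic-grouping}(5) and the equality $d(i)=d(f(i))$, but they need to be invoked carefully, especially to rule out the case that $f$ fails to exist (in which case the candidate would have been discarded and the lemma's hypothesis would not apply).
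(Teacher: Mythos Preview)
Your proof is correct and follows essentially the same approach as the paper's own argument: you unpack the candidate solution $Q_I$ piece by piece, inherit paths from the subsolutions, schedule box-tasks inside $\pb(b)$, and use property~(5) of Lemma~\ref{lem:harmonic-grouping} for the artificial-task replacement; the weight identity then follows from $w(f(i))=w(i)$ and $|Q_b|=n_b$. If anything, your write-up is more explicit than the paper's on the edge-capacity check (splitting into $e\in I^L$ versus $e\in I^R$ and accounting for the $\bar B$-boxes against the reduced capacities passed to the subproblems), which the paper leaves largely implicit.
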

\begin{proof}
Let us recall how the candidate solution is obtained. 
$Q_I$ is defined as a union of: $Q_{I^L}$, the tasks obtained by replacing the artificial tasks in $Q_{I^R}$, all the non-artificial tasks in $Q_{I^R}$ and the tasks in $\{Q^L_b\}_{b\in\bar{B}_{w,d}}$.
For the tasks $i$ in $Q_{I^L}$ and the set of non-artificial tasks in $Q_{I^R}$, we let $P(i) = P^L(i)$ and $P(i) = P^R(i)$ respectively. 

Recall that the boxes present in the left subproblem are $B^L = B\cup \bar B \cup \tilde B$, in the right subproblem are $B^R = B$, and in the current problem are $B$. 

For each $w$ and $d$, each box $b\in\bar{B}_{w,d}$, and each task $i\in Q^L_{b}$ we define a path $P(i)$ for $i$ that a subpath of  $\pb(b)\cap\tw(i)$ starting e.g., at the leftmost edge of that path and having length $p(i)$.
This is feasible since the tasks in $Q^L_{b}$ fit into $b$.
This way, we for each $b\in\bar{B}_{w,d}$ we have added $n_b \cdot w(b)$ profit to the candidate solution at $I$.

Thus, we need to prove why we can use the tasks fitting in the boxes corresponding to the artificial task instead of the artificial tasks, i.e., the tasks $\{Q^L_b\}_{b \in \tilde B}$.
\alex{As the configuration was not discarded, there is a function $f$ that yields the desired replacement map for the artificial tasks: One can schedule $f(i)$ somewhere during $\tw(f(i))\cap P(i)$, as this has length at least $p(f(i))$.}

Finally, for each $b\in B$, we let $Q_b = Q^L_b \cup Q^R_b$. 
The set $Q_b$ trivially fits into $b$ since both $Q^L_b $ and $Q^L_b $ fit into $b$. Additionally, it is also sufficient since $n_b = n^L_b + n^R_b$.

The claim on the profit follows since we have added the sufficient profit for each $b\in\bar{B}_{w,d}$, the set $Q^L_I$ is a subset of $Q_I$, and the artificial tasks are replaced by the same number of real tasks with the same profit.
\end{proof}

\subsection{Proof of Theorem \ref{thr:tw-apx} \ant{and \ref{thr:qptas}}}

\ant{We conclude our analysis by proving Theorem \ref{thr:tw-apx} and \ref{thr:qptas}.}

\begin{proof}[\ant{Proof of Theorem~\ref{thr:tw-apx}}]
    We apply Lemma~\ref{lem:preprocessing} with parameter $\eps' = O(\eps/\log m)$ to our given instance and obtain an instance where $W = \log_{1+\eps'} (n/\eps') = O(\frac{ \log m \log (n \log m)}{\eps})$ and $D = \frac{1}{\eps'}\log_{1+\eps} (n/\eps) = O(\frac{\log m \log(n/\eps)}{\eps})$.
    Then, Theorem~\ref{thr:tw-apx} follows from Lemma~\ref{lem:runningTime} and \ref{lem:approximation-ratio}.
\end{proof}

\ant{
\begin{proof}[Proof of Theorem~\ref{thr:qptas}]
    For instance of \spanUFP{}, we add a set $E'$ of $m$ edges on the right to the initial edge set $E$, and define the capacity on $E'$ as $u(e) = 0$ for every $e\in E'$, and prolongate the time window of every task to be $E \cup E'$.
    Notice that the optimal solutions of the instance on $E'$ and $E\cup E'$ are identical.
    We can then apply the algorithm of Section~\ref{sec:algorithm}.
    By construction, any solution of the latter instance is left constrained, so the algorithm yields a $(1+O(\varepsilon))$-approximate solution to the initial instance with the same running time as for a general \twUFP{} instance.
\end{proof}
}

\section{Hardness of approximation of \spanUFP}
\label{sec:hardness}

In this section, we prove Theorem \ref{thr:apxhard}, that is, show that \spanUFP\ is $\APX$-hard. Our reduction derives from the 3-Dimensional Matching problem, which is discussed in Section \ref{sec:hardness:preliminaries}. We then provide the actual reduction in Section \ref{sec:hardness:reduction}.

\subsection{Preliminaries}\label{sec:hardness:preliminaries}

In the 3-Dimensional Matching problem (3DM) we are given a tripartite hypergraph with node sets $X=\{x_1,\ldots,x_q\}$, $Y=\{y_1,\ldots,y_q\}$, $Z=\{z_1,\ldots,z_q\}$, and a collection of 3-hyperedges $E=\{h_1,\ldots,h_m\}$  where each hyperedge $h_i$ is a triple $(x_i,y_j,z_k)\in X\times Y\times Z$.
Our goal is to compute a hypermatching of maximum cardinality, where a hyper-matching is a subset of hyper-edges $M\subseteq E$ such that there are no two distinct edges $e,f\in M$ that share a common node.
\ant{The special case of 3DM where any node in $X\cup Y \cup Z$ occurs in at least one and at most $k$ hyperedges in $E$ is known as the $k$-bounded 3-Dimensional Matching problem (3DM-$k$).}
Given an instance $K$ of 3DM (or 3DM-$k$), we denote the cardinality of a maximum hyper-matching by $\opt(K)$.
We have the following inapproximability bound.
\fabr{This result also holds when each node is contained in precisely 2 hyper-edges, but I think we do not need it. So maybe we also can use a stronger inapproximability result}

\begin{theorem}[\cite{CC06,kan,papadimitriou1991optimization}]\label{thm:3DM-hardness}
    3DM-$k$ (and 3DM) is $\APX$-hard for $k\geq 3$, and
    it is $\mathsf{NP}$-hard to approximate the solution of 3DM-$2$ to within $\frac{95}{94}$.
\end{theorem}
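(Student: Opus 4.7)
The plan is to establish both parts of Theorem~\ref{thm:3DM-hardness} by appealing to the classical reductions from \cite{CC06, kan, papadimitriou1991optimization}, since the statement is a compilation of known results rather than a novel contribution. The two parts require different source problems and different levels of sharpness in the analysis, so I would treat them separately.

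For the first part (\APX-hardness of 3DM-$k$ for $k \geq 3$), I would follow Kann's L-reduction \cite{kan} from Maximum 3-Satisfiability with bounded variable occurrences (MAX-3SAT-$B$), which is \APX-hard by \cite{papadimitriou1991optimization} together with the PCP theorem. The construction uses a variable gadget that pairs hyperedges encoding the ``true'' and ``false'' assignments of each variable (so that any hypermatching selects at most one, representing a consistent truth value), together with a clause gadget whose contribution to the hypermatching size is tied to whether the clause is satisfied. Because MAX-3SAT-$B$ has a constant bound on occurrences per variable, the resulting hypergraph has each element of $X \cup Y \cup Z$ contained in at most $3$ hyperedges, yielding \APX-hardness of 3DM-$3$. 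Since 3DM-$k$ trivially generalizes 3DM-$3$ for $k \geq 3$, the statement follows.

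For the second part (the explicit $\tfrac{95}{94}$ inapproximability of 3DM-$2$), I would invoke the sharper construction of Chlebík and Chlebíková \cite{CC06}. Their plan is to reduce from a source problem with an explicit PCP-based gap --- for example, a carefully chosen bounded-occurrence variant of MAX-2SAT or bounded-degree Maximum Independent Set --- using gadgets that preserve the $2$-bounded occurrence constraint. By tracking both the completeness (all canonical hyperedges are simultaneously selectable in the YES case) and the soundness (bounding the best possible hypermatching in the NO case), the explicit gap in the source problem transfers to the $\tfrac{95}{94}$ gap for 3DM-$2$.

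The main obstacle is the quantitative tightness of the second part: producing the explicit constant $\tfrac{95}{94}$ requires gap amplification and expander-like gadget constructions whose analysis is intricate and whose constants need to be optimized throughout. Reproducing such a construction from scratch would be lengthy and tangential to the main contributions of our paper. For our purposes it is sufficient to use Theorem~\ref{thm:3DM-hardness} as a black box, since the subsequent reduction to \spanUFP{} only requires the \emph{existence} of such a gap, not its rederivation.
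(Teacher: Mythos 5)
This theorem is stated in the paper purely as a citation of known results from \cite{CC06,kan,papadimitriou1991optimization}, with no proof given; your proposal does exactly the same thing, correctly attributing the \APX-hardness of 3DM-$k$ for $k\geq 3$ to the Kann/Papadimitriou--Yannakakis line of reductions and the explicit $\frac{95}{94}$ bound for 3DM-$2$ to Chlebík and Chlebíková. Your treatment matches the paper's, and using the theorem as a black box is precisely what the subsequent reduction to \spanUFP{} requires.
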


We will need a variant of a construction and lemma due to \cite{W97}. Given an instance $K$ of 3DM described with the above notation, let $\rho=\max\{29,3q\}$.
We define the following set $Q(K)$ of $3q+|E|$ (possibly negative but non-zero) integers: 
\begin{itemize}\itemsep0pt
\item For each $x_i\in X$, define $x'_i\coloneqq i\rho+1$;
\item For each $y_j\in Y$, define $y'_j\coloneqq j\rho^2+2$;
\item For each $z_k\in Z$, define $z'_k\coloneqq k\rho^3+4$;
\item For each $h_\ell=(x_i,y_j,z_k)\in E$, define $h'_\ell\coloneqq -i\rho-j\rho^2-k\rho^3-7$;
\end{itemize}

\begin{lemma}\label{lem:uniqueSum}
Given any four numbers in $Q(K)$, their sum is exactly $0$ if and only if those numbers are $\{x'_i,y'_j,z'_k,h'_\ell\}$ for some hyper-edge $h_\ell=(x_i,y_j,z_k)\in E$.
\end{lemma}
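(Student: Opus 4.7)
My plan is to prove both directions separately. The ``if'' direction is a direct algebraic verification: substituting $x'_i + y'_j + z'_k + h'_\ell$ with $h_\ell=(x_i,y_j,z_k)$ gives $(i\rho+1) + (j\rho^2+2) + (k\rho^3+4) + (-i\rho - j\rho^2 - k\rho^3 - 7) = 0$.

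For the ``only if'' direction, I will proceed in two steps. Let $a, b, c, d$ be the number of chosen elements of types $x', y', z', h'$ respectively, with $a+b+c+d=4$. The first step is to force $(a,b,c,d)=(1,1,1,1)$. I will reduce the sum modulo $\rho$: every term involving $\rho$ (including $\rho^2$ and $\rho^3$) vanishes, leaving $a + 2b + 4c - 7d \equiv 0 \pmod{\rho}$. Since this linear combination lies in $[-28, 16]$ and $\rho \geq 29$, it must in fact equal $0$. Combined with $a+b+c+d=4$, I get $b + 3c - 8d = -4$, and a short case analysis over $d \in \{0,1,2,3,4\}$ (bounding $b,c \geq 0$ with $b+c \leq 4-d$) shows $(a,b,c,d)=(1,1,1,1)$ is the unique nonnegative integer solution.

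The second step pins down the indices. Knowing the four chosen numbers are $x'_i, y'_j, z'_k, h'_\ell$ for some $i,j,k$ and some hyper-edge $h_\ell=(x_{i'}, y_{j'}, z_{k'})\in E$, the sum simplifies to $(i-i')\rho + (j-j')\rho^2 + (k-k')\rho^3 = 0$. I will treat this as a kind of base-$\rho$ expansion: because $\rho \geq 3q$, each of $|i-i'|, |j-j'|, |k-k'|$ is strictly less than $q \leq \rho/3 < \rho$. Reducing modulo $\rho$ forces $i=i'$; dividing the residual equation by $\rho$ and reducing modulo $\rho$ again forces $j=j'$; one more iteration gives $k=k'$. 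Hence $h_\ell = (x_i, y_j, z_k)$, as required.

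The main obstacle is step one: verifying that the constants $\{1, 2, 4, -7\}$ admit no spurious combination summing to zero subject to $a+b+c+d=4$. This is a small but essential finite check, and it is precisely what motivates the choice of these specific constants (following \cite{W97}); the rest of the argument is then a standard ``digits in base $\rho$'' uniqueness argument enabled by the size condition $\rho \geq \max\{29, 3q\}$.
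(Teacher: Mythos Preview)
Your proof is correct and takes essentially the same approach as the paper (the paper peels the indices top-down starting from the $\rho^3$ coefficient rather than bottom-up, but the idea is identical). One minor wording slip in your second step: reducing $(i-i')\rho + (j-j')\rho^2 + (k-k')\rho^3 = 0$ modulo $\rho$ is vacuous since every term is already divisible by $\rho$; you must first divide through by $\rho$ (or equivalently reduce modulo $\rho^2$) before the reduction pins down $i=i'$, and likewise for the later steps.
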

\begin{proof}
Consider any tuple of $4$ numbers $n_1,n_2,n_3,n_4\in Q(K)$. They can be expressed as $n_i\coloneqq a_i\rho+b_i$, where the $a_i$'s are non-zero (possibly negative) integers and each $b_i$ belongs to $\{1,2,4,-7\}$. Suppose that their sum is $0$.
First notice that $a_1+a_2+a_3+a_4=0$. Indeed, otherwise $|\sum_{i=1}^{4}a_i\rho|\geq \rho$. However $|b_1+b_2+b_3+b_4|\leq 4\cdot 7<\rho$, where we used $\rho\geq 29$. This contradicts the fact that the sum of the considered numbers is $0$. The constraint $a_1+a_2+a_3+a_4=0$ implies $b_1+b_2+b_3+b_4=0$. The only possible way to achieve the latter constraint is to have exactly one number of each type $x'_i$, $y'_j$, $z'_k$, and $h'_\ell$. Assume w.l.o.g.\ $n_1=i\rho+1$, $n_2=j\rho^2+2$, $n_3=k\rho^3+4$, and $n_4=-a\rho-b\rho^2-c\rho^3-7$.
Assume by contradiction $c\neq k$. Then one has that $|k\rho^3-c\rho^3|\geq \rho^3$, while $|i\rho+1+j\rho^2+2+4-a\rho-b\rho^2-7|\leq 2q(\rho+\rho^2)<\rho^3$, where the last inequality holds since $\rho\geq 3q$.
This contradicts $n_1+n_2+n_3+n_4= 0$, hence $c=k$. Assume by contradiction $b\neq j$. Then $|j\rho^2+k\rho^3-b\rho^2-c\rho^3|\geq \rho^2$ and $|i\rho+1+2+4-a\rho-7|\leq 2q\rho<\rho^2$, where the last inequality holds since $\rho\geq 3q$. This contradicts $n_1+n_2+n_3+n_4= 0$, hence $b=j$. Then the only way to have $n_1+n_2+n_3+n_4= 0$ is that $i=a$. The claim follows.
\end{proof}

\subsection{The reduction from 3DM}
\label{sec:hardness:reduction}

The following construction is inspired by \cite{KKW22}, who used a variant of it to show some hardness results for the related \roundUFP\ problem. Consider any instance $K$ of 3DM-$k$ (on $3q$ nodes).
Consider also the corresponding set of numbers $Q(K)$ as in Lemma \ref{lem:uniqueSum}. In particular, for every element $u\in X\cup Y\cup Z\cup E$, we let $u'$ be the associated number. Let $A$ be a large enough integer (depending on $q$) to be fixed later.  We define an instance $\sigma(K)$ of \spanUFP\ as follows:
\begin{enumerate}[label=(\Alph*)]\itemsep0pt
\item\label{item:reduction-intervals} The path $G$ contains $(2A+1)q-1$ edges. These edges are subdivided into $q$ intervals of $2A$ edges each with positive capacity, separated by single edges of capacity $0$.
The edge capacities on each interval are $4A+4$ on the leftmost $A$ edges and $4A$ on the rightmost $A$ edges. 
\item\label{item:reduction-tasks} For each $u\in X\cup Y\cup Z\cup E$, we define two tasks $t_L(u)$ and $t_R(u)$ of weight $1$. Task $t_L(u)$ has length $A-10 u'$ and demand $A+10 u'+1$. Task $t_R(u)$ has length $A+10 u'$ and demand $A-10 u'$. Notice that the $+1$ term appears only in tasks of type $t_L(\cdot)$, which is critical for our arguments.
\end{enumerate}

Let $\mu\coloneqq 1+\max_{u'\in Q(K)} 10 |u'| \leq 10q(\rho+\rho^2+\rho^3)+8$ and $A\coloneqq 5\mu+4$, so that all lengths and demands are positive and polynomially bounded in $q$. Notice that the lengths are demands are between $A-\mu$ and $A+\mu$.

The key property of this reduction is that it essentially preserves the size of an optimal solution, as stated in the following lemma.

\begin{lemma}\label{lem:opt-reduction}
    There is a matching of size $p$ for the instance $K$ of 3DM if and only if there is a solution for $\sigma(K)$ of size $p+7q$.
\end{lemma}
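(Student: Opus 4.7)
The plan is to prove the forward and reverse directions of the equivalence separately, with the reverse direction being the more delicate one.

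\emph{Forward direction.} Given a matching $M$ with $|M|=p$, I will assign each hyperedge $h_\ell=(x_i,y_j,z_k)\in M$ to a distinct one of the $q$ capacity-positive intervals; call such an interval $J$. Inside $J$ I schedule the eight tasks $\{t_L(u), t_R(u):u\in\{x_i,y_j,z_k,h_\ell\}\}$ by left-aligning the four $t_L$-tasks and right-aligning the four $t_R$-tasks. On the leftmost edge of $J$ the total demand of the $t_L$-tasks is $\sum_u (A+10u'+1) = 4A+10\cdot 0+4=4A+4$ by Lemma~\ref{lem:uniqueSum}, matching the left-half capacity exactly; symmetrically, the rightmost edge carries $t_R$-demand $4A$. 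At an intermediate edge only a subset of these eight tasks is active, and one checks using the identities $l+d=2A+1$ for $t_L$ and $l+d=2A$ for $t_R$ (plus the fact that tasks of opposite type contribute smaller demand as they enter the opposite half) that the relevant half-capacity is respected throughout. For each of the remaining $q-p$ intervals I place seven tasks using leftover elements (for instance, four $t_L$-tasks of an unused hyperedge together with three $t_R$-tasks of unmatched nodes); this is feasible because $2(3q-3p)+2(|E|-p)\geq 7(q-p)$ using $|E|\geq q$ from $k$-boundedness. The total count is $8p+7(q-p)=p+7q$.

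\emph{Reverse direction.} Let $S$ be a feasible spanUFP solution with $|S|=p+7q$. The zero-capacity separating edges force each task of $S$ to be supported inside a single interval; write $n_J=|S\cap J|$ for the number of tasks in interval $J$. I will establish two statements: (a) $n_J\leq 8$ for every $J$; and (b) whenever $n_J=8$, the eight tasks of $S\cap J$ are precisely $\{t_L(u), t_R(u): u\in\{x_i,y_j,z_k,h_\ell\}\}$ for some hyperedge $h_\ell=(x_i,y_j,z_k)\in E$. For (a), every task has (length)$\times$(demand) at least $A^2-\mu^2+2\mu-1$ (attained by an extreme $t_R$), while the total capacity-times-length of an interval equals $A(4A+4)+A\cdot 4A=8A^2+4A$. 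With $A=5\mu+4$ a direct computation yields $9(A^2-\mu^2+2\mu-1)-(8A^2+4A)=16\mu^2+38\mu-9>0$ for $\mu\geq 1$, hence nine tasks cannot fit. Given (b), let $\mathcal{J}_8=\{J:n_J=8\}$; each $J\in\mathcal{J}_8$ singles out a distinct hyperedge $h(J)\in E$, and since tasks are distinct and each element of $X\cup Y\cup Z\cup E$ labels at most one $t_L$ and one $t_R$, the hyperedges $\{h(J):J\in\mathcal{J}_8\}$ are pairwise node-disjoint, i.e., form a matching in $K$. Summing, $p+7q=|S|=\sum_J n_J\leq 7q+|\mathcal{J}_8|$, so $|\mathcal{J}_8|\geq p$ and the matching has size at least~$p$.

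\emph{Main obstacle.} The structural statement (b) is the heart of the proof. Writing $a$ for the number of $t_L$-tasks in $S\cap J$, I first argue $a=4$ by balancing the two half-capacities: $a$ many $t_L$-tasks contribute left-half demand close to $a(A^2+A)$, and for $a\geq 5$ this would exceed the left-half capacity $4A^2+4A$ by at least $A-O(\mu^2)$, forcing some $t_L$ to spill into the right half, where its demand combined with the $t_R$-load breaks the tighter $4A$ capacity; symmetrically, $8-a\leq 4$. For $a=b=4$, a pigeonhole argument based on task lengths $\geq A-\mu$ and interval width $2A$ produces a left-half edge where all four $t_L$-tasks overlap and a right-half edge where all four $t_R$-tasks overlap. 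The left-capacity constraint at the former yields $\sum_L u'\leq 0$, and the right-capacity constraint at the latter yields $\sum_R u'\geq 0$. Here the ``$+1$'' in each $t_L$-demand is essential: it exactly consumes the $+4$ slack of the left capacity precisely when $\sum_L u'=0$ and rules out any negative integer sum without pushing the volume bound past tightness. A residual volume computation then forces the $t_L$-label set to coincide with the $t_R$-label set, so $\sum_L u'=\sum_R u'=0$. By Lemma~\ref{lem:uniqueSum} those four labels must be the three nodes and the hyperedge of a single element of $E$, completing the structural claim and hence the lemma.
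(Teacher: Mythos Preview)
Your area argument for (a) is valid and is a pleasant alternative to the paper's route (which instead fixes two marker edges—the $(A-\mu)$-th and the $(A+2\mu)$-th from the left—and observes that each can carry at most four tasks because every demand exceeds $A-\mu$ while $4A+4<5(A-\mu)$). But your treatment of the structural claim (b) has a real gap: you conflate the \emph{type} of a task ($t_L$ vs.\ $t_R$) with its \emph{placement}. Nothing forces a $t_L$-task to sit on the left—$t_L(x_i)$ has length $A-10x'_i<A$ and can live entirely in the right half, and $t_L(h_\ell)$ has small demand $A+10h'_\ell+1$, so its presence on right-half edges need not break the $4A$ cap. Consequently neither ``the four $t_L$-tasks share a left-half edge'' nor ``$a\ge 5$ is impossible'' is established, and the inequality $\sum_L u'\le 0$ you derive is unsupported. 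The ``residual volume computation'' that is supposed to force the $t_L$-label set and the $t_R$-label set to coincide is simply absent; a global area bound cannot pin down individual labels.

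The paper's argument for (b) uses the \emph{geometric} split instead: with $L',R'$ the sets of tasks covering the two marker edges, eight tasks force $|L'|=|R'|=4$ and $L'\cap R'=\emptyset$. Sorting each side and noting that the $i$-th task of $L'$ and the $i$-th task of $R'$ cannot overlap (else five tasks would meet on one edge) yields $\ell(t_i)+\ell(t_{i+4})\le 2A$ for each pair, hence $\sum_i\delta_i\le 0$. Meanwhile the demand bounds at the two marker edges, together with the fact that each $\delta_i$ is a multiple of $10$ while each $\gamma_i\in\{0,1\}$, force $\Delta_L\ge 0$ and $\Delta_R\ge 0$. All inequalities are therefore tight, so $\delta_i=-\delta_{i+4}$ and every $\gamma_i$ on the right vanishes; thus $R'$ consists of four $t_R$-tasks with label sum zero, which by Lemma~\ref{lem:uniqueSum} is a hyperedge, and the pairing identifies $L'$. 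Your forward direction also needs repair: the suggested seven-task filler may reuse tasks (an unused hyperedge can contain matched nodes whose $t_L$-tasks are already placed), and the counting inequality by itself does not certify that any seven leftover tasks fit in an interval. The paper instead forms the $q-p$ filler quadruples from the unmatched nodes in $X,Y,Z$ together with any $q-p$ unused hyperedges, and shows separately (Lemma~\ref{lem:almost-full-interval}) that for \emph{any} such quadruple one can pack the six node tasks plus one of $t_L(h_\ell),t_R(h_\ell)$, the choice depending on the sign of $x'_i+y'_j+z'_k+h'_\ell$.
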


We crucially use that we focus on the bounded case of 3DM to get the following lemma.

\begin{lemma}\label{lem:bound-opt-3dm}
    For any instance $K$ of 3DM-$k$, we have $\opt(K) \geq \frac{1}{3k-2}q$.
\end{lemma}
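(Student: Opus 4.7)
The plan is to use a standard double-counting argument on a maximal matching. First, let $M \subseteq E$ be any \emph{maximal} (not necessarily maximum) hypermatching of $K$, i.e., a hypermatching such that no hyperedge of $E \setminus M$ can be added to $M$ without violating the matching property. Clearly $\opt(K) \geq |M|$, so it suffices to bound $|M|$ from below.

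The next step is to show $|E| \leq (3k-2)|M|$ by double-counting the quantity $N \coloneqq \sum_{e \in E} |e \cap V(M)|$, where $V(M) \subseteq X \cup Y \cup Z$ is the set of $3|M|$ nodes covered by $M$. On the one hand, rewriting the sum node by node gives $N = \sum_{v \in V(M)} \deg(v) \leq k \cdot |V(M)| = 3k \, |M|$, using that each node lies in at most $k$ hyperedges. On the other hand, each hyperedge $e \in M$ contributes $3$ to $N$, while each hyperedge $e \in E \setminus M$ contributes at least $1$ to $N$: otherwise $e$ would be disjoint from $V(M)$ and could be added to $M$, contradicting maximality. Thus $N \geq 3|M| + (|E| - |M|) = |E| + 2|M|$. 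Combining the two bounds yields $|E| \leq (3k-2)|M|$.

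To conclude, I would use that in a 3DM-$k$ instance every node of $X$ lies in at least one hyperedge, so summing the degrees over $X$ gives $|E| = \sum_{x \in X} \deg(x) \geq q$. Therefore $\opt(K) \geq |M| \geq |E|/(3k-2) \geq q/(3k-2)$, as claimed.

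I do not expect a real obstacle here, since this is the classical greedy/maximality analysis for matchings in bounded-degree hypergraphs; the only point to double-check is the lower bound $|E| \geq q$, which relies crucially on the ``at least one hyperedge per node'' condition built into the definition of 3DM-$k$.
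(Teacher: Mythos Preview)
Your proof is correct and follows essentially the same approach as the paper: both take a maximal (greedy) matching $M$, argue $|M|\ge |E|/(3k-2)$ from the degree bound, and finish with $|E|\ge q$ using that every node lies in at least one hyperedge. The only cosmetic difference is that the paper counts directly that each greedy step removes at most $3(k-1)+1$ hyperedges, whereas you obtain the same inequality via the double-counting $|E|+2|M|\le N\le 3k|M|$.
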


The proof of Theorem~\ref{thr:apxhard} follows immediately.

\begin{proof}[Proof of Theorem~\ref{thr:apxhard}]
    We provide a PTAS reduction from 3DM-$k$ to \spanUFP. Then, plugging in $k=3$ yields $\APX$-hardness, and plugging in $k=2$ yields the inapproximability bound.

    Let $\calA$ be an $(1-\delta)$-approximation algorithm, that is, for an instance $T$ of \spanUFP{} (in the cardinality case and with polynomially bounded demands and capacities), $\calA(T)$ is a solution with $|\calA(T)| \geq (1-\delta)\opt(T)$, for some fixed $\delta>0$.
    For any instance $K$ of 3DM, let $M_{\calA,K}$ be the solution to $K$ returned from $\calA(\sigma(K))$ as constructed in the proof of Lemma~\ref{lem:opt-reduction}.
    By Lemma~\ref{lem:opt-reduction}, we have
    \begin{equation*}
        |M_{\calA,K}| + 7q \geq |\calA(\sigma(K))| \geq (1-\delta)\opt(\sigma(K)) = (1-\delta)(\opt(K)+7q)
    \end{equation*}
    and therefore, by Lemma~\ref{lem:bound-opt-3dm},
    \begin{align*}
        |M_{\calA,K}|
        &\geq (1-\delta)\opt(K) - 7\delta q \\
        &\geq (1-\delta)\opt(K) - 7\delta(3k-2)\opt(K) \\
        &= (1-(21k-13)\delta)\opt(K).
    \end{align*}
    We therefore obtain a $(1-\varepsilon)$-approximation for 3DM-$k$ for $\varepsilon= \frac{\delta}{21k-13}$ and thus a PTAS reduction.
    By Theorem~\ref{thm:3DM-hardness}, $\calA$ cannot exist for $\delta \leq \frac{1}{95}$ and $k=2$ (unless $\mathsf{P=NP}$), which implies that a $(1-\varepsilon)$-approximation for \spanUFP{} cannot exist for $\varepsilon \leq \frac{1}{2755}$ (unless $\mathsf{P=NP}$).
\end{proof}

We now prove Lemma~\ref{lem:bound-opt-3dm}, then Lemma~\ref{lem:opt-reduction}.

\begin{proof}[Proof of Lemma~\ref{lem:bound-opt-3dm}]
    We construct a hypermatching $M$ greedily: starting with $M=\emptyset$, as long as possible, we add an arbitrary hyperedge $h$ from $E$ to $M$ and remove $h$ and all hyperedges in $E$ that share a node with $h$ from $E$. The size of $M$ is thus number of steps of this procedure, and $M$ is clearly a hypermatching, so $\opt(K) \geq |M|$.
    Since at most $k-1$ hyperedges share a node from $X$, $Y$ or $Z$ with $h$, we remove at most $3(k-1)+1$ edges from $E$ at each step, so the algorithm uses at least $\frac{|E|}{3(k-1)+1}$ steps.
    By the handshaking lemma, since every node occurs in at least one hyperedge, we have $|E| \geq q$ and the claim follows.
\end{proof}

As mentioned before, the construction of $\sigma(K)$ resembles an instance of 2-Dimensional Vector Bin Packing (2-VBP), where the ``bins'' are the intervals of length $2A$ defined in \ref{item:reduction-intervals} where we have to pack as many of the tasks defined in \ref{item:reduction-tasks} as possible. Crucially, in an optimal packing of $\sigma(K)$, $\opt(K)$ many intervals are saturated (i.e., contain 8 tasks each) and all other intervals contain 7 tasks.
This is shown by the two following lemmas.

\begin{lemma}\label{lem:hyperedgeMapping}
Let $T'$ be a subset of tasks in a feasible solution which are scheduled in a given interval $I$. Then $|T'|\leq 8$. Furthermore, $|T'|=8$ if and only if $T'$ consists of $t_L(x_i)$, $t_R(x_i)$, $t_L(y_j)$, $t_R(y_j)$, $t_L(z_k)$, $t_R(z_k)$, $t_L(h_\ell)$, and $t_R(h_\ell)$ for some hyper-edge $h_\ell=(x_i,y_j,z_k)\in E$.
\end{lemma}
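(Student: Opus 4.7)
The plan is to first bound $|T'|\le 8$ via a total \emph{area} argument, and then to characterize the equality case by combining that area bound with local capacity constraints at individual edges.

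For the upper bound, I would compute the total capacity along $I$ as $\sum_{e\in I} u(e) = A(4A+4)+A(4A) = 8A^2+4A$; feasibility of any schedule then yields $\sum_{t\in T'} p(t)\,d(t) = \sum_{e\in I}\sum_{t:\, e\in P(t)} d(t) \le 8A^2+4A$. A direct computation gives $p(t_L(u))\,d(t_L(u)) = A^2+A-10u'-100(u')^2$ and $p(t_R(u))\,d(t_R(u)) = A^2-100(u')^2$, both at least $A^2-(\mu-1)^2$ since $10|u'|\le \mu-1$. Plugging $A=5\mu+4$ into $|T'|(A^2-(\mu-1)^2)\le 8A^2+4A$, the case $|T'|=9$ reduces to $16\mu^2+38\mu-9 \leq 0$, contradicting $\mu\ge 1$.

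For the characterization of the case $|T'|=8$, let $T'_L$ and $T'_R$ denote the tasks of type $t_L$ and $t_R$ in $T'$, with $a=|T'_L|$ and $b=|T'_R|$. I would first note that every demand is at least $A-\mu+1$ while every capacity is at most $4A+4 < 5(A-\mu+1)$, so at most four tasks cover any given edge, giving $\sum_{t\in T'} p(t)\le 4\cdot 2A = 8A$. Combined with the identity $\sum_{t\in T'} p(t) = 8A+10\bigl(\sum_{T'_R} u' - \sum_{T'_L} u'\bigr)$, this yields $\sum_{T'_R} u' \le \sum_{T'_L} u'$.

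The central step I would then carry out is to show that the eight tasks form four matched pairs $\{t_L(u_i), t_R(u_i)\}_{i=1}^4$ for distinct $u_1,\ldots,u_4\in X\cup Y\cup Z\cup E$. The instance is tailored for this: the $+4$ per-edge slack of the left-half capacity profile exactly matches the aggregate $+1$ overhead of four $t_L$ demands, so four $t_L$'s saturate a left-half edge while four $t_R$'s saturate a right-half edge. Any task not matched by its $L/R$ partner, or any imbalance $a\neq 4$, would either leave enough area unused to contradict the tightness of the area bound given $A=5\mu+4$, or force an edge to exceed its capacity; I expect this pairing step to be the main technical obstacle. Once it is established, $\sum_{T'_R}u'=\sum_{T'_L}u'=u'_1+\cdots+u'_4$, the inequality $\sum p(t)\le 8A$ tightens to equality, and the area saturation forces $u'_1+u'_2+u'_3+u'_4 = 0$. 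Lemma~\ref{lem:uniqueSum} then implies that the four elements must be $x_i, y_j, z_k, h_\ell$ for some hyperedge $h_\ell=(x_i,y_j,z_k)\in E$, which completes the proof.
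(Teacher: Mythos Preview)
Your area argument for $|T'|\le 8$ is correct, though the paper proves this more directly: since every task has length at least $A-\mu$, every scheduled task must cover at least one of two specific edges $e_{L'}$ (the $(A-\mu)$-th from the left) and $e_{R'}$ (the $(A+2\mu)$-th from the left), and since each edge carries at most four tasks this immediately gives $|T'|\le 8$.

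There are two genuine gaps in your characterization of the equality case. First, the ``if'' direction is missing entirely: you must exhibit a feasible schedule of the eight tasks $t_L(x_i),t_R(x_i),\ldots,t_L(h_\ell),t_R(h_\ell)$ in $I$ when $h_\ell=(x_i,y_j,z_k)$. The paper does this by aligning all $t_L$'s to the left edge and all $t_R$'s to the right edge of $I$ and checking the two capacity bounds using $x'_i+y'_j+z'_k+h'_\ell=0$.

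Second, and more seriously, your pairing step is not proved, and the area heuristic you sketch will not close it. With eight tasks, the total area is at least $8(A^2-(\mu-1)^2)$, which leaves slack $4A+8(\mu-1)^2$ against the capacity total $8A^2+4A$; this slack is far larger than the $O(\mu)$-scale perturbations distinguishing matched from unmatched configurations, so ``area tightness'' cannot force the pairs $\{t_L(u),t_R(u)\}$. The paper's argument is quite different: from the two-edge partition $T'=L'\cup R'$ with $|L'|=|R'|=4$, it orders $L'$ by right endpoint and $R'$ by left endpoint, observes that among any five tasks $\{t_i,\ldots,t_{i+4}\}$ in this ordering at most four can overlap on a single edge, and hence $t_i$ and $t_{i+4}$ are disjoint, giving $p(t_i)+p(t_{i+4})\le 2A$, i.e.\ $\delta_i+\delta_{i+4}\le 0$. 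Summing yields $\Delta_L+\Delta_R\le 0$, while the capacity constraints at $e_{L'}$ and $e_{R'}$ (using that each $\delta$ is a multiple of $10$ but the $\gamma$'s are in $\{0,1\}$) force $\Delta_L,\Delta_R\ge 0$. This pins down $\Delta_R=0$ and $\gamma_5=\cdots=\gamma_8=0$, so $R'$ consists of four $t_R$-tasks whose parameters sum to zero; Lemma~\ref{lem:uniqueSum} then identifies them, and equality $\delta_i=-\delta_{i+4}$ recovers $L'$. Your proposal does not contain an analogue of this ordering/non-overlap argument, and it is precisely here that the proof happens.
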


\begin{proof}
First we define a feasible schedule of $t_L(x_i)$, $t_R(x_i)$, $t_L(y_j)$, $t_R(y_j)$, $t_L(z_k)$, $t_R(z_k)$, $t_L(h_\ell)$, and $t_R(h_\ell)$, for any hyper-edge $h_\ell=(x_i,y_j,z_k)\in E$ in any interval $I$:
Let the tasks $t_L(x_i)$, $t_L(y_j)$, $t_L(z_k)$ and $t_L(h_\ell)$ start at the leftmost edge of $I$ and let the remaining tasks $t_R(x_i)$, $t_R(y_j)$, $t_R(z_k)$ and $t_R(h_\ell)$ end at the rightmost edge of $I$.
Notice that for $a\in \{x_i, y_j, z_k, h_\ell\}$, $t_L(a)$ and $t_R(a)$ do not overlap and that $t_L(x_i)$, $t_L(y_j)$, $t_L(z_k)$ do not intersect the rightmost $A$ edges of $I$ and $t_R(h_\ell)$ does not intersect the rightmost $A$ edges of $I$.
Therefore, by Lemma~\ref{lem:uniqueSum}, on the $A$ leftmost edges, the demand is a most $(A+10h'_\ell+1)+(A+10x'_i+1)+(A+10y'_j+1)+(A+10z'_k+1)=4A+4$ (since $t_L(h_\ell)$ covers all $A$ rightmost edges), while on the $A$ rightmost edges, the demand is a most $(A-10h'_\ell)+(A-10x'_i)+(A-10y'_j)+(A-10z'_k)=4A$ (since $t_R(x_i)$, $t_R(y_j)$ and $t_R(z_k)$ cover the $A$ rightmost edges entirely).
The schedule is thus feasible.

We now argue that any feasible schedule on $I$ of $8$ or more tasks must be like above.
Let $e_{L'}$ be the $(A-\mu)$-th edge from the left and let $e_{R'}$ be the $(A+2\mu)$-th edge from the left. As every task has length at least $A-\mu$, every scheduled task lies on $e_{L'}$ or $e_{R'}$. Let $L'$ be all tasks lying on $e_{L'}$ and $R'$ be all tasks lying on $e_{R'}$. Thus $T'=L'\cup R'$.

Note that every task has demand of at least $A-\mu$. As the maximal capacity of any edge is at most $4A+4<5(A-\mu)$, at most 4 tasks can be placed on any edge. For the edges $e_{L'}$ and $e_{R'}$ this yields $|L'|\leq 4$ and $|R'|\leq 4$, implying  $|T'|\leq 8$.

From now on consider the case $|T'|=8$. Then $|L'|=|R'|=4$ and $L'\cap R'=\emptyset$. Let $L'=\{t_1,t_2,t_3,t_4\}$ and $R'=\{t_5,t_6,t_7,t_8\}$. Let $d(t_i)\coloneqq A-\delta_i+\gamma_i$ and $\ell(t_i)=A+\delta_i$, where $\delta_i$ is a multiple of $10$ and $\gamma_i\in \{0,1\}$ (in particular $\gamma_i=1$ if $t_i$ is of type $t_L(\cdot)$ and $\gamma_i=0$ otherwise).  Define $\Delta_L=\delta_1+\delta_2+\delta_3+\delta_4$, $\Delta_R=\delta_5+\delta_6+\delta_7+\delta_8$ and $\Delta=\Delta_L+\Delta_R$.
Note that the capacity of $e_{L'}$ is $4A+4$ as $e_{L'}$ is within the $A$ leftmost edges and the capacity of $e_{R'}$ is $4A$ as it is within the $A$ rightmost edges.
Thus $\sum_{i=1}^4 d(t_i)\leq 4A+4$ and $\sum_{i=5}^8 d(t_i)\leq 4A$ or, equivalently, $-\Delta_{L}+\sum_{i=1}^4\gamma_i\leq 4$ and $-\Delta_{R}+\sum_{i=5}^8\gamma_i\leq 0$.
We have $\Delta_L \geq 0$: since $\Delta_L$ is a multiple of 10, $\Delta_L < 0$ would imply $\Delta_L \leq -10$ which contradicts $-\Delta_{L}+\sum_{i=1}^4\gamma_i\leq 4$. $\Delta_R \geq 0$ holds for the same reason.

Assume w.l.o.g.\ that $L'$ is ordered by ending edges (with the one from $t_4$ being the rightmost) and that $R'$ is ordered by starting edges (with the one from $t_8$ being the rightmost). As argued before, at most $4$ tasks can overlap on any given edge.
This holds for the sets of tasks $\{t_1,t_2,t_3,t_4,t_5\}$, $\{t_2,t_3,t_4,t_5,t_6\}$, $\{t_3,t_4,t_5,t_6,t_7\}$ and $\{t_4,t_5,t_6,t_7,t_8\}$, meaning that $t_i$ and $t_{i+4}$ cannot overlap on any edge for every $i\in \{1,2,3,4\}$. As they are both scheduled within the same interval of length $2A$, this implies 
\begin{equation}\label{eq:lengths-delta}
	2A + \delta_i + \delta_{i+4} = \ell(t_i) + \ell(t_{i+4}) \leq 2A \quad \text{for every $i \in \aw{\{1,2,3,4\}}$}
\end{equation}
and thus
$\Delta=\Delta_L+\Delta_R=\delta_1+\ldots+\delta_8\leq 0$.
It follows that $\Delta_L = \Delta_R = 0$ and that the inequalities in (\ref{eq:lengths-delta}) are tight, i.e., $\delta_i = -\delta_{i+4}$ for $i \in \{1,2,3,4\}$.
From $\Delta_R = 0$, we also get $\gamma_5=\gamma_6=\gamma_7=\gamma_8=0$. Therefore, by Lemma~\ref{lem:uniqueSum}, $R'=\{t_R(x_i),t_R(y_j),t_R(z_k),t_R(h_\ell)\}$ for some hyper-edge $h_\ell=(x_i,y_j,z_k)$.
Since $\delta_i = -\delta_{i+4}$ for $i\in \{1,2,3,4\}$, we have $L'=\{t_L(x_i),t_L(y_j),t_L(z_k),t_L(h_\ell)\}$.
\end{proof}

\begin{figure}
	\centering
	\begin{tikzpicture}
		\def\xi{0.2};
		\def\yj{0.5};
		\def\zk{0.8};
		\def\a{5};
		\def\yscale{0.2};
		\def\eps{0.25};
		\def\hl{-1.5};
		\draw [draw=black, fill=orange!10!white] ({\a-\hl}, {\yscale*(\a+\hl+\eps)}) rectangle ({0},{0});
		\node[] at ({(\a-\hl)/2},{\yscale*(\a+\hl+\eps)/2}) {$t_L(h_\ell)$};
		\draw [draw=black, fill=yellow!10!white] ({\a-\xi}, {\yscale*(2*\a+\hl+\xi+2*\eps)}) rectangle ({0},{\yscale*(\a+\hl+\eps)});
		\node[] at ({(\a-\xi)/2},{\yscale*(3*\a+2*\hl+\xi+3*\eps)/2}) {$t_L(x_i)$};
		\draw [draw=black, fill=green!10!white] ({\a-\yj}, {\yscale*(3*\a+\hl+\xi+\yj+3*\eps)}) rectangle ({0},{\yscale*(2*\a+\hl+\xi+2*\eps)});
		\node[] at ({(\a-\yj)/2},{\yscale*(5*\a+2*\hl+2*\xi+\yj+5*\eps)/2}) {$t_L(y_j)$};
		\draw [draw=black, fill=blue!10!white] ({\a-\zk}, {\yscale*(4*\a+\hl+\xi+\yj+\zk+4*\eps)}) rectangle ({0},{\yscale*(3*\a+\hl+\xi+\yj+3*\eps)});
		\node[] at ({(\a-\zk)/2},{\yscale*(7*\a+2*\hl+2*\xi+2*\yj+\zk+7*\eps)/2}) {$t_L(z_k)$};
		
		\draw [draw=black, fill=orange!15!white] ({\a-\hl}, {0}) rectangle ({2*\a},{\yscale*(\a-\hl)});
		\node[] at ({(3*\a-\hl)/2},{\yscale*(\a-\hl)/2}) {$t_R(h_\ell)$};
		\draw [draw=black, fill=yellow!15!white] ({\a-\xi}, {\yscale*(\a-\hl)}) rectangle ({2*\a},{\yscale*(2*\a-\hl-\xi)});
		\node[] at ({(3*\a-\xi)/2},{\yscale*(3*\a-2*\hl-\xi)/2}) {$t_R(x_i)$};
		\draw [draw=black, , fill=green!15!white] ({\a-\yj}, {\yscale*(2*\a-\hl-\xi)}) rectangle ({2*\a},{\yscale*(3*\a-\hl-\xi-\yj)});
		\node[] at ({(3*\a-\yj)/2},{\yscale*(5*\a-2*\hl-2*\xi-\yj)/2}) {$t_R(y_j)$};
		\draw [draw=black, fill=blue!15!white] ({\a-\zk}, {\yscale*(3*\a-\hl-\xi-\yj)}) rectangle ({2*\a},{\yscale*(4*\a-\hl-\xi-\yj-\zk)});
		\node[] at ({(3*\a-\zk)/2},{\yscale*(7*\a-2*\hl-2*\xi-2*\yj-\zk)/2}) {$t_R(z_k)$};

		\draw[very thick, color=red] (-0.5,0)--(0,0);
		\draw[very thick, color=red] (0,0)--(0,{\yscale*(4*\a+4*\eps)});
		\draw[very thick, color=red] (0,{\yscale*(4*\a+4*\eps)})--(\a,{\yscale*(4*\a+4*\eps)});
		\draw[very thick, color=red] (\a,{\yscale*(4*\a+4*\eps)})--(\a,{\yscale*(4*\a+0*\eps)});
		\draw[very thick, color=red] (\a,{\yscale*(4*\a+0*\eps)})--(3/2*\a,{\yscale*(4*\a+0*\eps)});
		\draw[very thick, color=red] (3/2*\a,{\yscale*(4*\a+0*\eps)})--(3/2*\a,{\yscale*(4*\a)});	
		\draw[very thick, color=red] (3/2*\a,{\yscale*(4*\a)})--(2*\a,{\yscale*(4*\a)});
		\draw[very thick, color=red] (2*\a,{\yscale*(4*\a)})--(2*\a,0);
		\draw[very thick, color=red] (2*\a,0)--(2*\a+0.5,0);
		
	\end{tikzpicture}
\caption{The capacity profile (red) of one interval and the 8 tasks corresponding to one hyperedge $h_\ell=(x_i, y_j, z_k)$ scheduled in this interval.} 
\label{fig:hardnessSchedule}
\end{figure}
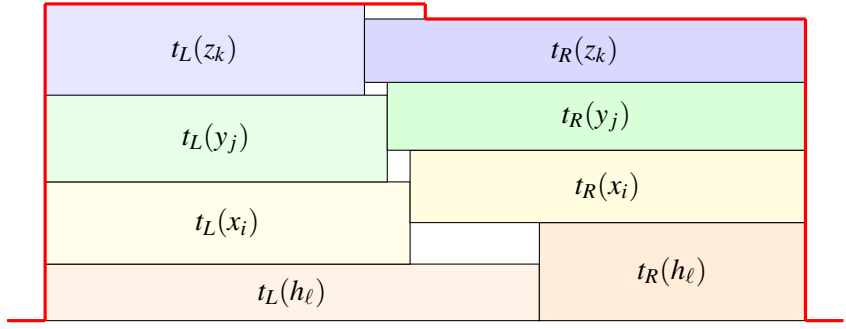

\begin{lemma}\label{lem:almost-full-interval}
    Let $x_i \in X$, $y_j \in Y$, $z_k \in Z$ and $h_\ell \in E$. Then, we can schedule $t_L(x_i)$, $t_R(x_i)$, $t_L(y_j)$, $t_R(y_j)$, $t_L(z_k)$, $t_R(z_k)$ and $t_L(h_\ell)$ or $t_L(x_i)$, $t_R(x_i)$, $t_L(y_j)$, $t_R(y_j)$, $t_L(z_k)$, $t_R(z_k)$ and $t_R(h_\ell)$ in any given interval $I$.
\end{lemma}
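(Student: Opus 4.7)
The plan is to choose which variant of $h_\ell$ to include based on the sign of $s\coloneqq x'_i+y'_j+z'_k+h'_\ell$: include $t_L(h_\ell)$ if $s\leq 0$, and $t_R(h_\ell)$ otherwise. In both cases I would reuse the canonical schedule from the proof of Lemma~\ref{lem:hyperedgeMapping}, starting each selected $t_L(u)$ at the leftmost edge of $I$ and ending each selected $t_R(u)$ at the rightmost edge of $I$. Because $t_L(u)$ and $t_R(u)$ have complementary lengths $A-10u'$ and $A+10u'$, they tile $I$ exactly, so the only nontrivial check is against the capacity profile ($4A+4$ on the left half, $4A$ on the right half). Since $s\leq 0$ or $s\geq 0$ always holds, producing a feasible schedule in each regime finishes the proof.

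For the case $s\leq 0$ with $t_L(h_\ell)$ included, the binding edge on the left half is $e=1$, where all four $t_L$ tasks are simultaneously active and the demand evaluates to $4A+4+10s\leq 4A+4$. Moving rightward on the left half, each swap $t_L(u)\to t_R(u)$ for $u\in\{x_i,y_j,z_k\}$ strictly decreases the contribution (since $u'>0$), while $t_L(h_\ell)$ stays active throughout because its length exceeds $A$; hence $e=1$ really is the maximum. On the right half the binding edge is $e=A+1$, where the active tasks $t_L(h_\ell),t_R(x_i),t_R(y_j),t_R(z_k)$ give a total demand of $4A+1+10(h'_\ell-x'_i-y'_j-z'_k)$, which is well below $4A$ because $h'_\ell\leq -7$ and each of $x'_i,y'_j,z'_k\geq 1$.

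For the case $s\geq 0$ with $t_R(h_\ell)$ included, I would verify the mirror situation: the binding edge on the right half is $e=2A$, where all four $t_R$ tasks are active and the demand is $4A-10s\leq 4A$. The demand on the right half is non-decreasing in $e$ because only $t_R(h_\ell)$ switches on (at position $A-10h'_\ell+1$), justifying $e=2A$ as the maximum. On the left half, only three $t_L$ tasks are active at $e=1$, giving demand $3A+3+10(x'_i+y'_j+z'_k)$, which sits comfortably below $4A+4$ thanks to the bound $10u'\leq \mu-1$ for $u\in X\cup Y\cup Z$ and the choice $A=5\mu+4$, and the demand only decreases from there.

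The only delicate point is confirming that the extremal edges really are the binding ones within each half. This reduces to a short monotonicity argument based on the sign pattern $x'_i,y'_j,z'_k>0>h'_\ell$ hard-coded into the construction of Section~\ref{sec:hardness:preliminaries}; once this is in place, no calculation beyond substituting the task parameters into the canonical layout is required.
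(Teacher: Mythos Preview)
Your proof is correct and follows essentially the same approach as the paper: choose which variant of the hyperedge task to include based on the sign of $s=x'_i+y'_j+z'_k+h'_\ell$, place each $t_L$ task at the left end and each $t_R$ task at the right end, and verify the capacity on each half. The only cosmetic differences are that the paper splits off the $s=0$ case via Lemma~\ref{lem:hyperedgeMapping}, and for the half containing the lone hyperedge task it uses the cruder bound that each of the four relevant tasks has demand below $A$, rather than your monotonicity argument identifying the exact binding edge.
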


\begin{proof}
    The claim is clear if $h_\ell = (x_i,y_j,z_k)$ by Lemma~\ref{lem:hyperedgeMapping}, so assume $h_\ell \neq (x_i,y_j,z_k)$, or equivalently, $\tau \coloneqq x_i'+y'_j+z'_k+h'_\ell \neq 0$ by Lemma~\ref{lem:uniqueSum}.

    Assume first $\tau < 0$.
    We schedule $t_L(x_i)$, $t_L(y_j)$, $t_L(z_k)$ and $t_L(h_\ell)$ on the leftmost edge of $I$ and schedule $t_R(x_i)$, $t_R(y_j)$ and $t_R(z_k)$ on the rightmost edge of $I$. Notice that for any $a \in \{x_i, y_j, z_k\}$, $t_L(a)$ does not overlap with $t_R(a)$.
    Clearly, on any of the $A$ leftmost edges of $I$, the total demand is upper bounded by $t_L(x_i) + t_L(y_j) + t_L(z_k) +t_L(h_\ell) = 4A + 4 + 10\tau < 4A+4$.
    On any of the $A$ rightmost edges of $I$, the total demand is upper bounded by $4A$ since the demands of $t_R(x_i)$, $t_R(y_j)$, $t_R(z_k)$ and $t_L(h_\ell)$ are upper bounded by $A$ (note that $h_\ell'<-1$) and the other tasks are not scheduled on those edges.
    Therefore, the schedule is feasible.

    Assume now $\tau > 0$.
    We schedule $t_L(x_i)$, $t_L(y_j)$ and $t_L(z_k)$ on the leftmost edge of $I$ and schedule $t_R(x_i)$, $t_R(y_j)$, $t_R(z_k)$ and $t_R(h_\ell)$ on the rightmost edge of $I$. Notice as before that for any $a \in \{x_i, y_j, z_k\}$, $t_L(a)$ does not overlap with $t_R(a)$.
    On any of the $A$ leftmost edges of $I$, the total demand is upper bounded by $t_L(x_i) + t_L(y_j) + t_L(z_k) \leq 3A + 3\mu \leq 4A+4$.
    On any of the $A$ right edges of $I$, the total demand is upper bounded by $t_R(x_i) + t_R(y_j) + t_R(z_k) + t_R(h_\ell) = 4A - 10\tau < 4A$.
    Therefore, the schedule is feasible.
\end{proof}

We can finally show Lemma~\ref{lem:opt-reduction}.

\begin{proof}[Proof of Lemma~\ref{lem:opt-reduction}]
    Let $M = \{h^{(1)}, \dots, h^{(p)}\}$ be a feasible hypermatching of size $p$ for an instance $K$ of 3DM.
    We construct a schedule $S \subseteq \sigma(K)$ with $|S| = p + 7q$.
    For every hyperedge $h^{(k)} = (x^{(k)},y^{(k)},z^{(k)}) \in M$, schedule $t_L(x^{(k)})$, $t_R(x^{(k)})$, $t_L(y^{(k)})$, $t_R(y^{(k)})$, $t_L(z^{(k)})$, $t_R(z^{(k)})$, $t_L(h^{(k)})$, and $t_R(h^{(k)})$ in the $k$-th interval, which is doable by Lemma~\ref{lem:hyperedgeMapping}. From the leftover nodes and hyperedges, form $p-q$ many quadruples $(x^{[k]},y^{[k]},z^{[k]},h^{[k]}) \in X\times Y \times Z \times E$ arbitrarily (for $k \in \{p+1, \dots, q\}$). Then, for every $k$, schedule either $t_L(x^{[k]})$, $t_R(x^{[k]})$, $t_L(y^{[k]})$, $t_R(y^{[k]})$, $t_L(z^{[k]})$, $t_R(z^{[k]})$ and $t_L(h^{[k]})$ or $t_L(x^{[k]})$, $t_R(x^{[k]})$, $t_L(y^{[k]})$, $t_R(y^{[k]})$, $t_L(z^{[k]})$, $t_R(z^{[k]})$, and $t_R(h^{[k]})$ in the $k$-th interval, which is doable by Lemma~\ref{lem:almost-full-interval}.
    The resulting schedule $S$ is clearly feasible and contains $8p + 7(q-p) = p + 7q$ tasks.

    To prove the converse, notice first that $\sigma(K)$ always has a solution of size $7q$ by scheduling $7$ tasks in every interval like in Lemma~\ref{lem:almost-full-interval}.
    Now, if there is a feasible schedule $S \subseteq \sigma(K)$ with $p + 7q$ tasks (and $p\geq 0$), by the pigeonhole principle and because at most $8$ tasks fit inside any interval by Lemma~\ref{lem:hyperedgeMapping}, at least $p$ intervals must contain $8$ tasks.
    Let $\calI$ be a set of all intervals in which $8$ tasks are scheduled, respectively, and let $M_{S,K}$ be the set of hyperedges $h_\ell \in E$ such that $t_L(h_\ell)$ or $t_R(h_\ell)$ is scheduled in an interval in $\calI$.
    By Lemma~\ref{lem:hyperedgeMapping}, since all intervals in $\calI$ contain $8$ tasks, each such interval must contain both $t_L(h_\ell)$ and $t_R(h_\ell)$ for some $h_\ell \in E$ (which is also in $M_{S,K}$ by definition), but not $t_L(h_{\ell'})$ or $t_R(h_{\ell'})$ for any $\ell' \neq \ell$. Therefore, $|M_{S,K}| = |\calI| \geq p$.
    Now take $h_\ell = (x_i, y_j, z_k)\in M_{S,K}$ and $h_{\ell'} = (x_{i'}, y_{j'}, z_{k'}) \in M_{S,K}$, with $\ell \neq \ell'$.
    By Lemma~\ref{lem:hyperedgeMapping}, $t_L(x_i)$, $t_R(x_i)$, $t_L(y_j)$, $t_R(y_j)$, $t_L(z_k)$ and $t_R(z_k)$ are scheduled in the same interval as $t_L(h_\ell)$, and $t_R(h_\ell)$, while $t_L(x_{i'})$, $t_R(x_{i'})$, $t_L(y_{j'})$, $t_R(y_{j'})$, $t_L(z_{k'})$, $t_R(z_{k'})$ are scheduled in the same interval as $t_L(h_{\ell'})$ and $t_R(h_{\ell'})$.
    Therefore, $i\neq i'$, $j\neq j'$ and $k\neq k'$, so $M_{S,K}$ is a feasible hypermatching of cardinality (at least) $p$.
\end{proof}

%
%
%
\bibliographystyle{plain}
\bibliography{ipco2025_references}

\newpage

\appendix

\section{Proof of Lemma~\ref{lem:preprocessing}}\label{sec:proof-preprocessing}

\reductions*

\ant{We describe the preprocessing part of our algorithm and thereby prove Lemma~\ref{lem:preprocessing}.}

\begin{lemma}\label{lem:reductionW}
    Let $\alpha \ge 1$. Assume that there is an $\alpha$-approximation algorithm for \twUFP{} in which $w(i)\in [1, n/\eps\fab{]}$ and $w(i)$ is a power of $1+\eps$ running in time $T(n,\eps)$.
   Then there is an $(1+3\eps)\alpha$-approximation algorithm for \twUFP{} running in time $\poly (n) \cdot T(n,\eps)$.
\end{lemma}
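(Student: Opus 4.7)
\medskip

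\noindent\textbf{Proof plan for Lemma~\ref{lem:reductionW}.} The plan is a standard ``guess the maximum weight plus geometric rounding'' reduction, carried out in four short steps.

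\medskip

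\noindent\emph{Step 1: Guess the maximum profit in $\OPT$.} Since $\OPT \subseteq T$, the value $w_{\max} \coloneqq \max_{i \in \OPT} w(i)$ is the weight of some input task, so we can enumerate over all $n$ possibilities for $w_{\max}$, run the remaining steps once for each guess, and output the best (feasible) solution. This multiplies the running time by a $\poly(n)$ factor.

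\medskip

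\noindent\emph{Step 2: Discard tasks that are too large or too small.} Fix a guess $w_{\max}$. First, delete every task $i \in T$ with $w(i) > w_{\max}$; none of them can lie in $\OPT$ (since $w_{\max}$ is, by assumption, the largest weight in $\OPT$), so we lose nothing. Then set the threshold $\tau \coloneqq \eps w_{\max}/n$ and delete every task $i$ with $w(i) < \tau$. Since $|\OPT| \leq n$, the total profit of such ``tiny'' tasks in $\OPT$ is at most $n \tau = \eps w_{\max} \leq \eps \cdot w(\OPT)$. After this step, every surviving task has weight in $[\tau, w_{\max}]$.

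\medskip

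\noindent\emph{Step 3: Rescale and round weights.} Define a new weight function $w'(i)$ by first dividing $w(i)$ by $\tau$, so that the weights lie in $[1, n/\eps]$, and then rounding down to the nearest power of $1+\eps$. This can only lose a factor of $1+\eps$ per task, so for any feasible set $S$ we have $\tau \cdot w'(S) \leq w(S) \leq (1+\eps)\tau \cdot w'(S)$. The resulting instance satisfies the hypotheses of the assumed algorithm, and the paths, demands, lengths, and time windows are unchanged, so feasibility is preserved.

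\medskip

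\noindent\emph{Step 4: Approximation analysis.} Let $\OPT'$ denote the restriction of $\OPT$ to the surviving tasks; by Step~2, $w(\OPT') \geq (1-\eps) w(\OPT)$. Running the $\alpha$-approximation on the rounded instance produces a feasible solution $S$ with $w'(S) \geq w'(\OPT')/\alpha$. Unrolling the rounding inequalities yields
\[
   w(S) \;\geq\; \tau \cdot w'(S) \;\geq\; \frac{\tau}{\alpha} w'(\OPT') \;\geq\; \frac{1}{(1+\eps)\alpha} w(\OPT') \;\geq\; \frac{1-\eps}{(1+\eps)\alpha} w(\OPT).
\]
For $\eps \leq 1/3$ we have $(1+\eps)/(1-\eps) \leq 1+3\eps$, giving $w(\OPT) \leq (1+3\eps)\alpha\, w(S)$. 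The overall running time is $O(\poly(n) \cdot T(n,\eps))$, dominated by the $n$ guesses of $w_{\max}$ and a single invocation of the assumed algorithm per guess.

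\medskip

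\noindent\emph{Main obstacle.} None of the steps is technically delicate; the only care required is to make sure the rounding is done \emph{after} normalizing by $\tau$ (so that the lower bound ``$w(i)\geq 1$'' from the hypothesis is actually met) and that the guess $w_{\max}$ is taken over the finite set of input weights rather than over all reals, which keeps the enumeration polynomial.
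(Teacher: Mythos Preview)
Your proof is correct and follows essentially the same approach as the paper's own proof: guess the maximum weight in $\OPT$, discard tasks outside the range $[\eps w_{\max}/n,\, w_{\max}]$, rescale, and round down to powers of $1+\eps$, then bound the loss by $\frac{1+\eps}{1-\eps}\alpha\le(1+3\eps)\alpha$. Your write-up is in fact slightly more careful than the paper's (you correctly note that the final inequality requires $\eps\le 1/3$, whereas the paper states $\eps\le 1/2$, which is not quite enough).
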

\begin{proof}
\fab{Assume w.l.o.g. that $\eps\leq 1/2$.} Let $w^*$ be  the largest profit  in some optimum solution $(\OPT,P^*(\cdot))$ (there are at most $n$ possible values for $w^*$),
and discard all the tasks with profit larger than $w^*$ and with profit smaller than $\eps w^*/n$. This way, we loose at most a fraction $\eps$ of the optimal profit. 
By scaling, we assume that the smallest profit is $1$ and that the largest profit is at most $n/\eps$. Then we round down each profit to the next value of type $(1+\eps)^q$, $q\in \mathbb{N}$. The rounding down reduces the profit by a factor $(1+\eps)$ at most. 
\fab{The overall approximation factor is $\frac{1+\eps}{1-\eps}\alpha\leq (1+3\eps)\alpha$}.  
\end{proof}

Let $U$ be the largest edge capacity. The next lemma shows that we can assume w.l.o.g. that $U$ is polynomially bounded in the number $n$ of tasks.

\begin{lemma}\label{lem:reductionU}
Let $\alpha \ge 1$.
Suppose that there is an $\alpha$-approximation algorithm with $(1+\eps)$ resource augmentation for \twUFP\ running in time $T(n,\eps)$ under the assumption that $U\leq (\frac{n}{\eps})^{1/\eps}$. Then there is an $\frac{\alpha}{1-\eps}$-approximation algorithm for \twUFP\ (with no restriction) with $(1+3\eps)$ resource augmentation running in time $\poly(n)+O(T(n,\eps)\frac{\log U}{\log n})$.\fabr{Removed the $1/\eps$ factor. I think it is rather $\eps$, so we can neglect it (as done in the proof actually)}  
\end{lemma}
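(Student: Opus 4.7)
The plan is to reduce the general instance to $L = O(\log U/\log n)$ sub-instances, each satisfying the capacity bound $U \le (n/\eps)^{1/\eps}$ of the hypothesis, solve each with the given $\alpha$-approximation algorithm, and combine the solutions. This is the standard ``geometric scaling by demand'' approach familiar from the \UFP{} literature.

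First I would round each capacity down to the nearest power of $1+\eps$ (losing a $(1+\eps)$ factor in the resource augmentation, absorbable in the target $1+3\eps$) and discard tasks $i$ with $d(i)$ larger than the maximum capacity along $\tw(i)$, as these are infeasible. Setting $B \coloneqq (n/\eps)^{1/\eps}$, I would partition the tasks into demand \emph{levels} $T_\ell \coloneqq \{i\in T : d(i) \in [B^\ell, B^{\ell+1})\}$ for $\ell = 0, 1, \ldots, L-1$. Since $\log B = \eps^{-1}\log(n/\eps) \ge \log n$ for $\eps \in (0,1]$, there are at most $L = \lceil \log_B U\rceil = O(\log U/\log n)$ non-empty levels. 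For each level $\ell$ I would form a sub-instance $\mathcal{I}_\ell$ on the same underlying path using only the tasks of $T_\ell$, with edge capacities $u_\ell(e)$ chosen as below so that (i) after rescaling demands and capacities by $B^\ell$ the hypothesis $U \le (n/\eps)^{1/\eps}$ applies, and (ii) the union of the per-sub-instance solutions is feasible for the original instance up to $O(\eps)$-RA. The running time bound $\poly(n) + O(T(n,\eps)\cdot L)$ then follows immediately from invoking the given algorithm $L$ times plus polynomial-time preprocessing.

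The crucial step---and the main obstacle---is the capacity apportionment needed to make the combined solution feasible. The key structural observation is that on any edge $e$, the tasks with $d(i) < \eps u(e)/n$ collectively use at most $\eps u(e)$ (there are at most $n$ of them), so their combined demand can be absorbed into the RA budget; moreover, because $\log_B(n/\eps) = \eps$ by construction, the demand levels $\ell$ containing \emph{non-negligible} tasks on $e$ (those with $d(i) \in [\eps u(e)/n,\, u(e)]$) consist of at most $1+\eps \le 2$ consecutive values, which I call the \emph{leading} levels for $e$. I would split $u(e)$ between the (at most two) leading levels of $e$---possibly guessing the precise split used by OPT among a polynomial number of $(1+\eps)$-geometric options---and assign the non-leading levels only the small-task budget on $e$. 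Combining the per-level solutions then uses at most $(1+\eps)u(e) + \eps u(e) = (1+2\eps)u(e)$ demand on each edge, yielding $(1+3\eps)$-RA overall; simultaneously, the restriction of OPT to each level $\ell$ remains feasible in $\mathcal{I}_\ell$ up to an $O(\eps)$-factor profit loss, so summing over $\ell$ gives an $\alpha/(1-\eps)$-approximation as claimed.
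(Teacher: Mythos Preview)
Your framework --- geometric demand classes, a sub-instance per class with bounded capacity ratio, and absorbing the far-below classes into an $\eps u(e)$ budget --- matches the paper's. The gap is in how you handle the two leading levels per edge. You write that you would ``split $u(e)$ between the (at most two) leading levels of $e$, possibly guessing the precise split used by OPT among a polynomial number of $(1+\eps)$-geometric options.'' But the split is an \emph{edge-by-edge} quantity: different edges have different pairs of leading levels and different OPT splits, so a single global guess does not determine it, and enumerating per-edge choices means $(\poly\log n)^{m}$ combinations rather than polynomially many. Without a correct split you would have to hand both leading levels the full $u(e)$, which blows the resource augmentation up to $2+O(\eps)$ instead of $1+O(\eps)$ --- precisely the obstacle you flag but do not overcome.

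The paper sidesteps the split altogether by inserting \emph{gaps} between consecutive demand groups via a shift. It tries each offset $r\in\{0,\dots,1/\eps-1\}$ and defines groups $G_j$ separated by a multiplicative gap of $n/\eps$ in demand; tasks whose demand lands in a gap are discarded, and for some $r$ this costs at most an $\eps$-fraction of $w(\OPT)$. The gap guarantees that for every edge $e$ there is at most \emph{one} group containing tasks of demand in $[\eps u(e)/n,\,u(e)]$: lower groups contribute total demand $\le \eps u(e)$ (at most $n$ tasks, each $\le \eps u(e)/n$), and higher groups cannot use $e$ at all. Each sub-instance can therefore receive the full capped capacity $\min\{u(e),\,nB_j\}$ with no splitting, and the union of the per-group solutions is $(1+3\eps)$-feasible. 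This shifted-grouping trick is the missing ingredient in your plan.
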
 
\begin{proof}
Let $r\in \{0,\ldots,1/\eps-1\}$ to be fixed later. We define the following ranges of demands: 
$[A_0,B_0)=[1,(\frac{n}{\eps})^r)$, 
$[A_1,B_1)=[(\frac{n}{\eps})^{r+1},(\frac{n}{\eps})^{r+1/\eps})$, 
$[A_2,B_2)=[(\frac{n}{\eps})^{r+1+1/\eps},(\frac{n}{\eps})^{r+2/\eps})$, ..., 
$[A_q,B_q)=[(\frac{n}{\eps})^{r+1+(q-1)/\eps},(\frac{n}{\eps})^{r+q/\eps})$, 
where $q=O(\log_{(n/\eps)^{1/\eps}} U)=O(\frac{\log U}{\log n})$ is the smallest integer such that $(\frac{n}{\eps})^{r+q/\eps}\geq U$. 
Let $G_j$ be the subset of tasks (\emph{group}) with demand in the range $[A_j,B_j)$. We delete all the tasks not contained in any group $G_j$. Clearly there is a choice of $r$ such that this deletion reduces the profit of some fixed optimal solution $\OPT$ at most by an $\eps$ fraction. We guess this value of $r$ by trying all the possible $1/\eps$ options, and assume it in the following. 

We define a \twUFP\ instance for each such group $G_j$ separately as follows. We set to $0$ all the edge capacities smaller than $A_j$: notice that the corresponding edges cannot accommodate any task in $G_j$ anyway. Furthermore we scale down to $nB_j$ any demand larger than that value: observe that the tasks in $G_j$  cannot use capacity larger than $nB_j$. We next scale demands and capacities so that the smallest non-zero demand is $1$. Observe that in the resulting instance the largest capacity $U$ is at most $n\frac{B_j}{A_j}\leq (\frac{n}{\eps})^{1/\eps\fab{-1}}$. To this instance we apply the algorithm for \twUFP\ as in the assumption, hence getting a solution $\APX_j$. We return the union $\APX$ of the solutions obtained this way.

Clearly the running time of the overall procedure matches the claim.  Consider next the approximation factor. Let $\OPT_j:=\OPT\cap G_j$. By the choice of $r$ one has $\sum_{j}w(\OPT_j)\geq (1-\eps)w(\OPT)$. 
Notice that $\OPT_j$ is a feasible solution (without resource augmentation) for the subproblem corresponding to $G_j$. Thus $w(\APX_j)\geq \frac{1}{\alpha} w(\OPT_j)$. 
Altogether $w(\APX)=\sum_{j}w(\APX_j)\geq \frac{1-\eps}{\alpha}w(\OPT)$. 

It remains to upper bound the needed resource augmentation. Consider any edge $e$. Observe that $\fab{u(e)}\in [\frac{A_j}{1+\eps},\frac{n}{\eps}\frac{B_j}{1+\eps})$ for some group $G_j$. Notice that no task in $\APX$ using edge $e$ can belong to some group $G_{j'}$ with $j'>j$, since this task would have demand at least $\frac{n}{\eps}B_j$ (hence it would not fit even exploiting resource augmentation). By construction the tasks in $\APX_j:=\APX\cap G_j$ satisfy $d(\APX_j\cap T_e)\leq (1+\eps)\fab{u(e)}$. Each one of the remaining tasks $\APX'$ in $\APX$ that use edge $e$ has demand at most $\frac{\eps}{n}A_j$. Hence $d(APX'\cap T_e)\leq \eps A_j\leq \eps(1+\eps)\fab{u(e)}\leq 2\eps \fab{u(e)}$. Altogether, $d(\APX\cap T_e)\leq (1+3\eps)\fab{u(e)}$.
\end{proof}

\begin{proof}[Proof of Lemma \ref{lem:preprocessing}]
\fab{Assume w.l.o.g.\ $\eps\leq 1/5$.} Let\fabr{Added this proof to be more formal} us chain the reductions from Lemmas \ref{lem:reductionW} and \ref{lem:reductionU}, and then round up each demand to the next power of $(1+\eps)$ in each subproblem. We solve each such subproblem with the given algorithm and return the union of the obtained solutions. The approximation factor of the overall algorithm is $\frac{1+3\eps}{1-\eps}\alpha\leq (1+5\eps)\alpha$ and the amount of used resource augmentation is $(1+\eps)(1+3\eps)\leq 1+4\eps$. The running time of the overall process is $\poly(n) T(n,\eps)O\left(\frac{\log U}{\log n}\right)$.
\end{proof}

\end{document}